\definecolor{darkgreen}{rgb}{0,0.5,0}
\providecommand{\tabularnewline}{\\}
\newenvironment{cellvarwidth}[1][t]
{\begin{varwidth}[#1]{\linewidth}}
{\@finalstrut\@arstrutbox\end{varwidth}}
\renewenvironment{abstract}
{\quotation}
{\endquotation}
\date{}
\renewcommand{\fnum@figure}{\textbf{Figure \thefigure}}
\renewcommand{\fnum@table}{\textbf{Table \thetable}}
\newtheorem{theorem}{Theorem}
\newtheorem{proposition}{Proposition}
\newtheorem{remarks}{Remarks}
\renewcommand{\Re}{\mathbb{R}} 
\newcommand{\bbD}{\mathbb{D}}
\newcommand{\cA}{\mathcal{A}}
\newcommand{\cE}{\mathcal{E}}
\newcommand{\cH}{\mathcal{H}}
\newcommand{\cN}{\mathcal{N}}
\newcommand{\cT}{\mathcal{T}}
\newcommand{\vv}{\boldsymbol{v}}
\newcommand{\x}{\boldsymbol{x}}
\newcommand{\y}{\boldsymbol{y}}
\newcommand{\z}{\boldsymbol{z}}
\newcommand{\w}{\boldsymbol{w}}
\newcommand{\s}{\boldsymbol{s}}
\newcommand{\bv}{\boldsymbol{v}}
\newcommand{\h}{\boldsymbol{h}}
\newcommand{\p}{\boldsymbol{p}}
\newcommand{\q}{\boldsymbol{q}}
\newcommand{\A}{\mathbf{A}}
\newcommand{\X}{\mathbf{X}}
\newcommand{\I}{\mathbf{I}}
\newcommand{\J}{\mathbf{J}}
\newcommand{\W}{\mathbf{W}}
\newcommand{\0}{\mathbf{0}}
\renewcommand{\leq}{\leqslant}
\renewcommand{\geq}{\geqslant}
\def\scititle{A Continuous Energy Ising Machine Leveraging Difference-of-Convex Programming}
\title{\bfseries \boldmath \scititle}
\author{
Debraj~Banerjee$^{1\alpha}$,
Santanu~Mahapatra$^{2\beta}$,
Kunal~N.~Chaudhury$^{1\gamma}$\and
\small$^{1}$Department of Electrical Engineering, Indian Institute of Science, Bangalore 560012, India.\and
\small$^{2}$Department of Electronic Systems Engineering, Indian Institute of Science, Bangalore 560012, India.\and
\small Corresponding authors: 
$^\alpha$debrajb@iisc.ac.in, $^\beta$santanu@iisc.ac.in, $^\gamma$kunal@iisc.ac.in.
}
\begin{document} 

\maketitle

\begin{abstract} \bfseries \boldmath
Many combinatorial optimization problems can be reformulated as finding the ground state of the Ising model. Existing Ising solvers are mostly inspired by simulated annealing. Although annealing techniques offer scalability, they lack convergence guarantees and are sensitive to the cooling schedule. We propose solving the Ising problem by relaxing the binary spins to continuous variables and introducing an attraction potential that steers the solution toward binary spin configurations. A key property of this potential is that its combination with the Ising energy produces a Hamiltonian that can be written as a difference of convex polynomials. This enables us to design efficient iterative algorithms that require a single matrix-vector multiplication per iteration and provide convergence guarantees. We implement our Ising solver on a wide range of GPU platforms, from edge devices to high-performance computing clusters, and demonstrate that it consistently outperforms existing solvers across problem sizes ranging from small ($10^3$ spins) to ultra-large ($10^8$ spins).
\end{abstract}

\section{Introduction}

Combinatorial optimization problems arise in numerous fields, including social networks~\cite{yuan2018}, finance~\cite{roman2019}, cryptography~\cite{wang2020prime}, scheduling~\cite{rieffel2014}, electronic circuit design~\cite{barahona1988circuit}, and biosciences~\cite{kell2012,robert2021}. The challenge with such problems is that the number of possible solutions grows exponentially with the number of variables, rendering classical algorithms slow or impractical for large-scale problems. To address this, researchers are increasingly exploring unconventional computing paradigms. A particularly promising direction is reformulating the problem as finding the ground state of a physical system, such as the Ising model. Indeed, many Nondeterministic Polynomial-time (NP)-hard problems can be naturally expressed in the Ising form, while many others can be efficiently reduced to this framework~\cite{lucas2014}. Significant effort has been directed toward developing specialized hardware and algorithms, collectively known as Ising machines, that can efficiently compute the ground states of Ising models~\cite{tanahashi2019}. As industrial optimization problems grow in complexity, building large-scale Ising solvers has become more critical than ever. Beyond its role in combinatorial optimization, finding the ground state of the Ising model is a fundamental problem in physics, offering insights into phase transitions, magnetism, and critical behaviour in condensed matter systems.

The Ising model was originally introduced by Lenz and Ising as a mathematical model for ferromagnetism~\cite{ising1925}. In this model, the dimensionless energy function is given by
\begin{equation}
\label{eq:IsingModel}
\cE(s_1,\ldots,s_n) = -\frac{1}{2}\sum_{i,j=1}^n J_{ij}s_is_j - \sum_{i=1}^n h_is_i, 
\end{equation}
where $(s_1, \ldots, s_n)$ is the spin vector with each spin $s_i \in \{-1,+1\}$~(up/down spins), \(J_{ij}\) is the coupling coefficient between spins $s_i$ and $s_j$ with \(J_{ii}=0\) (no self-coupling), and $h_i$ is the external field acting on spin \(s_i\). Solving an Ising model refers to finding a spin assignment $s^*_1, \dots, s^*_n \in \{-1,1\}$ that minimizes the energy over all possible spin configurations, that is,
\begin{equation}
\label{eq:IsingProblem}
\cE(s^*_1,\ldots,s^*_n)= \underset{s_i \in \{-1,1\}}{\min}  \ \ \cE(s_1,\ldots,s_n).
\end{equation}
We refer to $(s^*_1, \dots, s^*_n)$ as the ground state spin vector (or simply the ground state) of the Ising model. 

We remark that~\eqref{eq:IsingProblem} can be reformulated using Boolean variables $\{0,1\}$ that arise naturally in combinatorial problems. Furthermore, by introducing an additional spin $s_{n+1}$, we can reformulate~\eqref{eq:IsingModel} as a homogeneous Ising problem (with no external field):
\begin{equation} 
\label{eq:IsingHom}
\underset{s_i  \in \{-1,1\}}{\min}   \ -\frac{1}{2}\sum_{i,j=1}^{n+1} \hat{J}_{ij} s_is_j,
\end{equation}
where the modified coupling $\{\hat{J}_{ij}\}$ is derived from $\{J_{ij}\}$ and $\{h_i\}$. Since the ground state of~\eqref{eq:IsingModel} can be inferred from that of~\eqref{eq:IsingHom}, it suffices to work with the homogeneous model~(Supplementary Note 1).

Solving large Ising models by brute force is computationally challenging, with no known polynomial-time algorithm. In fact, the problem is known to be NP-hard~\cite{barahona1982}. Consequently, rather than attempting to compute the exact ground state, practical approaches typically focus on finding good approximations to it. Over the years, many algorithmic heuristics and specialized hardware have been developed to address this challenge. We refer the reader to~\cite{mohseni2022, moy2022coupled, lo2023ising, maher2024cmos} for a survey of past and recent advances. 

The most widely used technique for solving Ising models is simulated annealing (SA)~\cite{kirkpatrick1983,isakov2015,goto2018boltzmann,bunyk2014}, a method inspired by quantum annealing~\cite{kadow1998,giues2002,das2008}. SA uses Monte Carlo sampling guided by the Boltzmann distribution to search for the ground state. However, it relies on a heuristic cooling schedule to explore the energy landscape and does not come with any mathematical guarantees. Moreover, a gradual temperature reduction is used, which can result in slow convergence~\cite{Yavorsky2019}, making SA less suited for solving large-scale problems. Variants such as noisy mean field annealing (NMFA)~\cite{King2018,Veszeli2021} and mean field annealing from a random state (MARS)~\cite{Yavorsky2019} aim to speed up convergence. However, they are typically prone to approximation errors and sensitive to hyperparameter settings. State-of-the-art techniques based on Hamiltonian dynamics, such as Simulated Bifurcation Machine (SBM)~\cite{puri2017,goto2019,goto2021,kanao2022,wang2023} and its ballistic variant (bSB)~\cite{goto2021}, offer parallelism but remain sensitive to parameter tuning and cooling schedules. On the other hand, Coherent Ising Machines (CIMs)~\cite{inagaki2016,tiunov2019,honjo2021}, which leverage optical hardware, suffer from issues like noise, decoherence, and the need for complex, precisely controlled hardware~\cite{yamamoto2017,pramanik2023}. More recently, neural network-based approaches have shown promise~\cite{zhang2018rbm,niazi2024,jiang2024}. However, they face inherent challenges, including limited scalability and the need for careful hyperparameter tuning~\cite{jiang2024}. Thus, there is a need to develop Ising solvers that can deliver both high-speed performance and reliable solution quality, particularly for ultra-large-scale problems.

\begin{figure}[t!]
   \centering
   \includegraphics[width=0.9\textwidth]{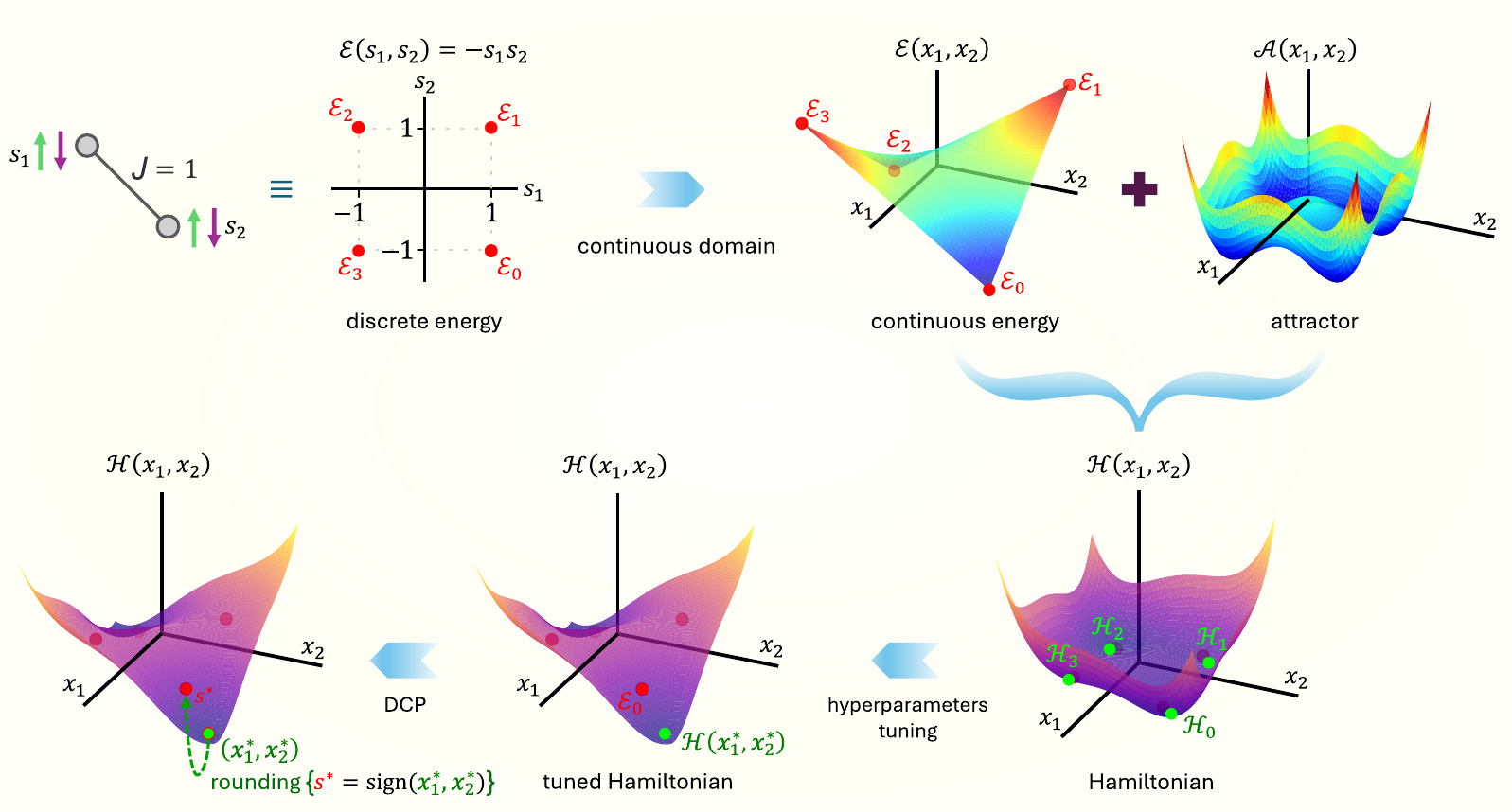} 
   \caption{\textbf{Overview of our Ising solver for a $2$-spin system}. We relax the $2^2 = 4$ discrete energy values \((\cE_0, \cE_1, \cE_2, \cE_3)\) to a smooth energy landscape \(\cE(x_1,x_2)\) defined over the continuous space $(x_1,x_2) \in \Re^2$.  This energy is combined with an attraction potential \(\cA(x_1,x_2)\) to form the Hamiltonian \(\cH  = \cE + \cA\). The local minima of this Hamiltonian, denoted by green dots \((\cH_0, \cH_1, \cH_2, \cH_3)\), serve as candidate ground states.  With suitable parameter tuning, the minimizer $(x^*_1, x^*_2)$ of the Hamiltonian produces a binary vector $\mathrm{sign}(x^*_1, x^*_2)$ that closely matches the true ground state \(\s^* = (s^*_1, s^*_2)\) of the original Ising energy \(\cE\).}
   \label{fig:dcp_ising_illus}
\end{figure}

In this work, we propose a continuous optimization-based Ising solver. Unlike the classical Goemans-Williamson Semidefinite Program (GW-SDP)~\cite{goemans1995}, our method avoids the computational bottleneck associated with large semidefinite programs~(Supplementary Note 5). To improve the scalability, we drop the binary spin constraints entirely and instead couple an attraction potential with the Ising energy, resulting in a Hamiltonian with soft spin constraints. A key insight is that this Hamiltonian can be expressed as a difference of convex polynomials, enabling the use of difference-of-convex programming (DCP)~\cite{abbaszadehpeivasti2023,phan2018}, which, to our knowledge, has not been applied to the Ising model. Specifically, we present two DCP-based iterative solvers that require a single matrix-vector multiplication per iteration and have just two tunable parameters. By exploiting the mathematical properties of our Hamiltonian, we prove that one of our solvers converges to a critical point, and under mild assumptions, to a local minimum, a guarantee that is difficult to obtain in general nonconvex optimization. Our solvers are well-suited for parallel execution on advanced graphical processing units (GPUs) and run efficiently across a wide range of platforms, from low-power edge devices to high-performance computing clusters. We demonstrate that they consistently outperform existing Ising solvers across a broad range of problem sizes, including ultra-large-scale instances with up to $10^8$ spins. Notably, we can solve a fully connected $10^7$-spin Ising model with nearly $50$ trillion coupling terms in just $11$ hours using four NVIDIA H100 GPUs. A quick comparison with existing Ising solvers is provided in Table~\ref{table:ising_solvers_comp_table}.

\section{Results}

\subsection*{Continuous-Energy Model}

We develop a continuous-energy formulation by embedding the binary spin vector in $\Re^n$ (Figure~\ref{fig:dcp_ising_illus}). As a first step, we encode the coupling coefficients in an $(n \times n)$ symmetric matrix $\J$ with $J_{ii}=0$, and express the homogeneous Ising problem as
\begin{equation} 
\label{eq:IsingEergy}
\underset{\s \in \{-1,1\}^n } {\min} \ \cE(\s)= - \frac{1}{2}\s^\top \J \s.
\end{equation}
We can identify the optimization domain $\{-1,1\}^n$ with the vertices of the unit hypercube in $\Re^n$. We then leverage the quadratic structure of $\cE$ to expand the domain to $\bbD = \big\{ \{-\lambda,\lambda\}^n: \ \lambda > 0 \big\}$, and consider the optimization problem
\begin{equation} 
\label{eq:IsingD}
\underset{\x \in \bbD} {\min} \ \  \cE(\x),
\end{equation}
where $\x = (x_1, \ldots, x_n)$ is the new optimization variable. The domain $\bbD$ consists of radial lines emanating from the origin along the vertices of $\{-1,1\}^n$. Though it is larger than the domain in~\eqref{eq:IsingEergy}, we can compute the ground state of~\eqref{eq:IsingEergy} from the global minimizers of~\eqref{eq:IsingD}. Namely, if $\x^*$ is a global minimizer of~\eqref{eq:IsingD}, then $\s^*=\mathrm{sign}(\x^*)$ is a ground state of~\eqref{eq:IsingEergy}. 

The formulation~\eqref{eq:IsingD} takes us a step closer to the continuum. However, from an optimization perspective, the difficulty is that the domain $\bbD$ is a non-convex set that is not even connected. A natural idea is to consider its convex hull. However, the convex hull is the entire space $\Re^n$, making the relaxation too loose and, therefore, ineffective for optimization. To address this, we introduce the attraction function that biases the minimizers towards $\bbD$. Specifically, for any fixed $\alpha, \beta > 0$, we consider the attractor
\begin{equation}
\label{eq:attractor}
    \cA(\x) =  \frac{\beta}{4} \big(x_1^4 + \dots + x_n^4 \big) - \frac{\alpha}{2} \big(x_1^2+ \dots + x_n^2 \big).
\end{equation}
The parameters \(\alpha\) and \(\beta\) are used to control the shape of the attractor (Figure~\ref{fig:attractor_plot_2d_3d}). { From this point onward, we use $\x \in \Re^n$ to denote the continuous relaxation variable.}

\begin{figure}[t!]
   \centering
   \includegraphics[width=0.3\textwidth]{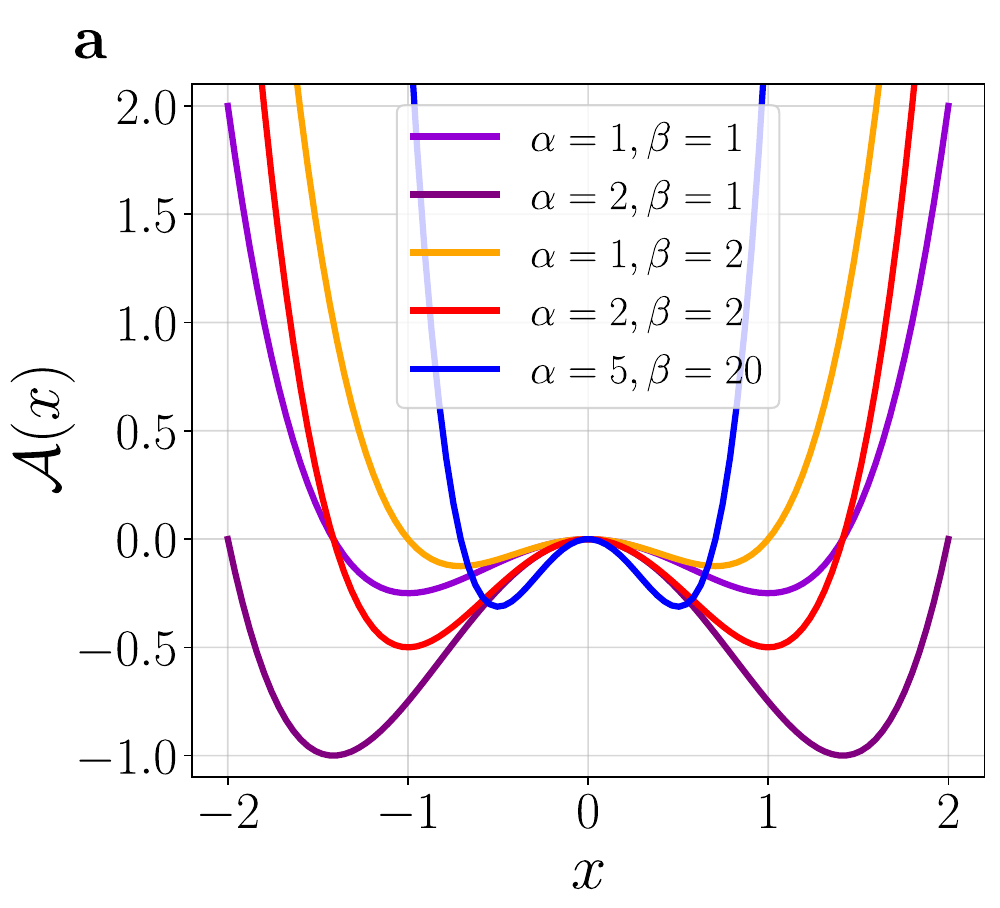} 
   \includegraphics[width=0.4\textwidth]{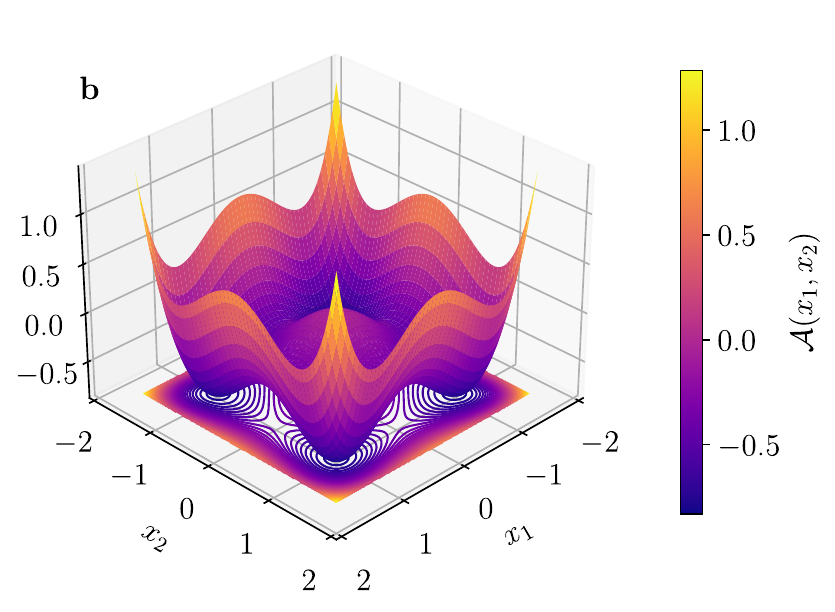}
   \caption{\textbf{Shape of the attractor.} \textbf{(a)} Plot of the one-variable attractor for different $\alpha, \beta$ values. The global minimizers at $\pm (\alpha/\beta)^{1/2}$ are shown. \textbf{(b)} Plot of the attractor in two variables for $\alpha = \beta = 1$, showing the four global minimizers at $(1, 1), (1, -1), (-1, 1)$ and $(-1, -1)$.}
   \label{fig:attractor_plot_2d_3d}
\end{figure}

This specific form of the attractor is motivated by the Hamiltonian of nonlinear parametric oscillators~\cite{goto2018} (Supplementary Note 4). Although this provides a strong physical motivation, our primary interest in~\eqref{eq:attractor} is because of its mathematical properties. In particular, we can show that the global minimizers of $\cA$ are exactly the vertices of the hypercube $ \{-\lambda,\lambda\}^n$, where $\lambda = (\alpha/\beta)^{1/2}$ (Supplementary Note 2). As this hypercube sits inside $\bbD$ for any choice of $\lambda$, we can use this property to bias the minimizers of the Hamiltonian $\cH =  \cA + \cE$ toward $\bbD$. Specifically, we consider the optimization problem
\begin{equation}
\label{eq:proposedH}
    \underset{\x \in \Re^n}{\min} \  \cH(\x) = \cA(\x) +  \cE(\x).
\end{equation}

{In summary, we begin with the exact Ising model in \eqref{eq:IsingEergy} with $\s \in \{-1,1\}^n$, and introduce the relaxed variable $\x \in \bbD$ in \eqref{eq:IsingD} to obtain an equivalent formulation. We then enlarge the domain to $\x \in \Re^n$ and incorporate the attractor in \eqref{eq:attractor}, which leads to the unconstrained energy formulation in \eqref{eq:proposedH}.}

Unlike the original Ising problem~\eqref{eq:IsingEergy}, which has a finite search space, a difficulty with continuous optimization problems is that they may not have a minimizer, i.e., the problem may not be well-posed. Nevertheless, we can prove that $\cH$ is bounded below and always has a global minimizer $\x^*$ (\textbf{Theorem S1} in Supplementary Note 2). 

Moreover, if it so happens that $\x^* \in \bbD$, then $\mathrm{sign}(\x^*)$ is guaranteed to be the ground state of the original Ising problem (Supplementary Note 2). {In practice, the recovered spin configuration $\mathrm{sign}(\x^*)$ may not always coincide with a true ground state. However, by appropriately tuning the parameters $\alpha$ and $\beta$, one can steer $\x^*$ toward $\bbD$. We empirically demonstrate later that this typically leads to solutions of higher quality than those obtained using existing Ising solvers.}

\subsection*{Optimization Algorithm}

A direct way of solving the optimization problem~\eqref{eq:proposedH} is to look for a critical point of $\cH$ by setting its gradient to zero. This leads to a system of cubic equations, which is difficult to solve in practice. Iterative schemes such as gradient descent provide a more feasible alternative, but they require evaluating the gradient $\nabla \cH$ at every step and rely on careful step-size tuning.

We propose an iterative algorithm that has a significantly lower per-iteration cost than gradient descent and comes with a formal convergence guarantee. This is based on the observation that we can write the Hamiltonian $\cH$ as the difference of convex functions for proper settings of $\alpha$ and $\beta$. More precisely, combining the quadratic components of $\cA$ and $\cE$,  we have
\begin{align}\label{eq:Hamilton}
    \cH(\x) &= \frac{\beta}{4} (x_1^4+ \dots + x_n^4) - \frac{\alpha}{2} (x_1^2+ \dots + x_n^2) - \frac{1}{2}\x^\top \J\x \nonumber
   \displaystyle \\ &=\underbrace{\frac{\beta}{4} (x_1^4+ \dots + x_n^4)}_{f(\x)} \,-\, \underbrace{\frac{1}{2}\x^\top (\J+\alpha \I)\x}_{g(\x)}.
\end{align}

The function $f$ is convex for any \(\beta > 0\). On the other hand, $g$ is convex if and only if  $\lambda_{\min}(\J+\alpha \I) \geqslant 0$, where $\lambda_{\min}$ is the smallest eigenvalue. This can be guaranteed by making $\alpha$ sufficiently large (Supplementary Note 2). Under these conditions, the optimization problem~\eqref{eq:proposedH} becomes 
\begin{equation}
\label{eq:DCP}
\min_{\x\in \Re^n} \  f(\x) - g(\x),
\end{equation}
where $f$ and $g$ are convex functions. This falls under the purview of difference-of-convex programming (DCP)~\cite{abbaszadehpeivasti2023,phan2018}. We consider a simple DCP algorithm called DCA~\cite{an2005}, along with its accelerated variant~\cite{phan2018}. 

DCA builds upon the classical principle of bound optimization~\cite{dempster1977maximum}. Starting with an initialization $\x^{(0)}$, DCA generates a sequence of estimates $\x^{(1)}, \x^{(2)}, \cdots$ that is expected to converge to a minimizer of $\cH$. The core idea is to exploit the convexity of $g$ to construct a global convex upper bound on $f-g$ around the current estimate $\x^{(k)}$. This surrogate is then minimized to obtain the next iterate $\x^{(k+1)}$. Applied to problem~\eqref{eq:DCP}, we obtain a simple update rule:
\begin{equation*}
\x^{(k+1)} = \cT(\x^{(k)}),
\end{equation*}
where $\cT$ is a linear transform followed by a pointwise nonlinearity (see Methods). In other words, the algorithm reduces to the repeated application of the operator $\cT$, referred to as a fixed-point algorithm. The cost per iteration is just a matrix-vector multiplication. A simple example illustrating the behaviour of this iterative algorithm is shown in Figure \ref{fig:2d_ising_dca}. We use a two-spin Ising problem to visualize the update steps and the convergence of the iterates. 

\begin{figure}[t!]
   \centering
   \includegraphics[width=1.0\textwidth]{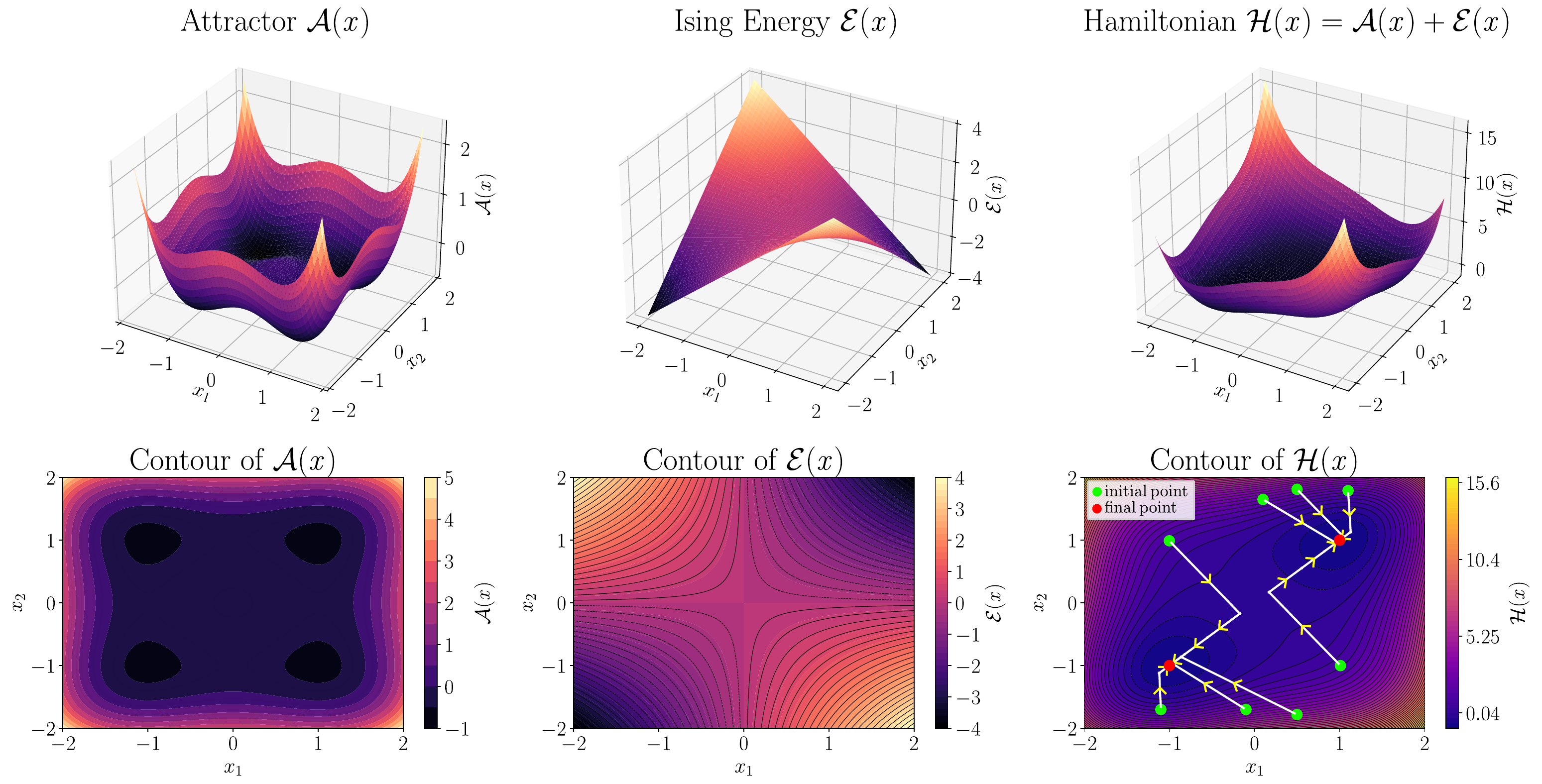}
   \caption{\textbf{Optimization trajectory for a $2$-spin system.} We consider the anti-ferromagnetic model $J_{12} = J_{21} = -1$, whose ground states are $(1,-1)$ and $(-1,1)$. \textbf{Top row}: Surface plots of the attractor $\cA$, Ising energy $\cE$, and the Hamiltonian $\cH=\cA+\cE$. \textbf{Bottom row}: Corresponding contour plots for $\alpha = 1$ and $\beta = 2$. In the bottom-right plot, the white curves represent the optimization trajectories of our Ising solver, starting from various initial points (indicated by green dots). Depending on the initialization, the iterates converge to one of the two degenerate ground states (red dots) after five iterations.}
   \label{fig:2d_ising_dca}
\end{figure}

An important aspect of iterative algorithms is their convergence behaviour. By exploiting the mathematical properties of $\cH$ (coercivity and real-analyticity) and the monotonicity guarantee for bound optimization ($\cH$ is guaranteed to decrease or remain the same at each iteration), we can prove that the iterates $\{\x^{(k)}\}$ converge to a critical point $\x^*$ of $\cH$ (\textbf{Theorem S2} in Supplementary Note 3). This is the strongest guarantee that can generally be achieved for nonconvex problems. We take $\mathrm{sign}(\x^*)$ to be the solution of the Ising problem~\eqref{eq:IsingEergy}.

We also consider a variant of DCA that achieves improved convergence rates using Nesterov acceleration~\cite{Nesterov2018}. We refer to this as Accelerated DCA (ADCA), which is described in detail in the Methods section. The solution quality achieved by ADCA consistently outperforms DCA as the problem size increases. Although the ADCA iterations consistently converge in practice, establishing a rigorous convergence guarantee is a challenging task. 

Our Ising solver has two parameters $\alpha$ and $\beta$. We propose a general parameter setting based on the coupling matrix $\J$ that is easy to configure and yields high-quality solutions. In particular, we recommend the following settings:
\begin{equation}
\label{eq:alphabeta}
\alpha \geqslant   \eta \, \lambda_{\max}(-\J) \quad \mbox{and}  \quad \beta = n\sqrt{n} \ \max_ {1 \leqslant j \leqslant n} \Big( \alpha  + \sum_{i\neq j} |  J_{ij}| \Big),
\end{equation}
where $\eta$ is a tuning parameter. It can be shown that the smallest eigenvalue of $\J$ is negative, so that $\lambda_{\max}(-\J) > 0$ in~\eqref{eq:alphabeta}. The above choice of $\alpha$ guarantees that $\J + \alpha \I$ is positive semidefinite, ensuring the convexity of $g$ in~\eqref{eq:DCP}. On the other hand, the choice of $\beta$ in~\eqref{eq:alphabeta} ensures boundedness of the DCA iterates for any \(\alpha\) (Supplementary Note 3). We have found that this particular setting consistently yields good results.

For large models (\(n \geqslant 10^4\)), computing \(\lambda_{\max}(\J)\) can be computationally demanding. To address this, we approximate \(\lambda_{\max}(\J)\) using Wigner’s semicircle law~\cite{mehta2004rmt}, following the approach in~\cite{goto2019,goto2021} for estimating eigenvalues of large matrices. Specifically, we estimate \(\lambda_{\max}(\J)\) with \(2\langle \J \rangle \sqrt{n}\), where \(\langle \J \rangle\) denotes the sample variance of the entries of \(\J\):
\begin{equation}\label{eq:estimate_eig_J}
    \langle \J \rangle^2 = \frac{1}{n(n-1)} \sum_{i\neq j}^n (J_{ij}-\bar{J})^2, \qquad 
    \bar{J} = \frac{1}{n(n-1)} \sum_{i\neq j}^n J_{ij}
\end{equation}
We tune the parameter $\eta$ in~\eqref{eq:alphabeta} by running a small number of iterations of our Ising solver, observing its behaviour, and adjusting $\alpha$ accordingly (Supplementary Note 6). This straightforward tuning procedure, combined with the low per-iteration cost, makes our DCA-based solvers particularly well-suited for ultra-large-scale Ising problems.

Our overall approach is summarized in Figure~\ref{fig:dcp_ising_illus}. We begin by relaxing the binary spin constraints, incorporating the attractor with the Ising energy, and expressing the resulting Hamiltonian in a DCP form. We then tune the parameters $\alpha$ and $\beta$, and apply DCA to optimize the tuned Hamiltonian. We refer to this as DOCH (Difference-Of-Convex-Hamiltonian), and its accelerated variant as ADOCH.

\subsection*{Benchmarking}

We have implemented and tested our algorithms across a range of NVIDIA GPUs, including a 4 GB 128-core Maxwell processor (Jetson Nano), 4 GB RTX 3050 (laptop), 24 GB RTX 3090, 32 GB V100, and 80 GB H100. A battery-powered 60,000 mAh edge-computing setup based on the Jetson Nano module is shown in Supplementary Figure~12. We have tested the effectiveness of our algorithms on the MAX-CUT problem, where the task is to partition the vertices of a graph into two disjoint subsets that maximize the number of edges between them~\cite{goemans1995}. This NP-hard problem is commonly used as a benchmark for evaluating Ising solvers. We have also implemented and compared our method with several state-of-the-art algorithms such as Simulated Annealing (SA)~\cite{isakov2015, inagaki2016}, ballistic Bifurcation Machine (bSB)~\cite{goto2021}, Simulated Coherent Ising Machine (SimCIM)~\cite{goto2021, inagaki2016}, and Spring Ising Algorithm (SIA)~\cite{jiang2024}. {To avoid being trapped in local minima, we have used $100$ different initializations while running each solver on small to medium-scale Ising models}. For completeness, detailed descriptions of these algorithms, along with the procedure for constructing ultra-large-scale Ising models, are provided in Supplementary Notes 7 to 9.

\subsubsection*{Small models}
We benchmark various Ising solvers on the Sherrington-Kirkpatrick (SK) model~\cite{sherrington1975}, a fully connected Ising model where the symmetric coupling coefficients $J_{ij}=J_{ji} \,( i \neq j)$ are sampled from the standard normal distribution~$\mathcal{N}(0,1)$. Each solver is tasked with minimizing the Ising energy of a system with $10^3$ spins. Their performance is evaluated based on the time required to reach the energy level obtained by the GW-SDP (using $100$ random rounding projections). As the solutions in the SK model typically converge quickly, we restrict the runtime of each solver to under {$0.1$ s (excluding the time needed to load data and compute the Ising energy)}. All experiments are repeated for $100$ random initializations. As shown in Figure~\ref{fig:1k_sk}a, {DOCH and ADOCH reach the GW-SDP energy threshold in about  $2.5$ and $3.5$ ms respectively}. Moreover, the histograms in Figures~\ref{fig:1k_sk}b and \ref{fig:1k_sk}c show that ADOCH consistently finds the lowest energy states with the highest frequency. Overall, our algorithms demonstrate the most robust and optimal performance at steady state compared to the other solvers.

Next, we evaluate our algorithm on the standard $K_{2000}$ benchmark, a fully connected graph with $2000$ nodes and nearly $2$ million edges that is commonly used for MAX-CUT problems. The MAX-CUT problem can be formulated as an Ising model, where the coupling matrix is given by \( \J = -(1/2)\W \), with \( \W \) denoting the weighted adjacency matrix of the graph (Supplementary Note 5). For our experiments, the weights $\{W_{ij}\}$ are sampled independently from $\{-1, 1\}$ with equal probability. Since the ground state is not available, we benchmark solvers based on the lowest Ising energy \(\cE\) achieved within a fixed runtime of {$1$ s (excluding data loading and energy calculation)} and evaluate their consistency across $100$ random initializations. We also compare the time each solver takes to reach the energy level obtained by GW-SDP. As shown in Figure~\ref{fig:2k_pm1}a, the DOCH and ADOCH solvers outperform the best-performing spring Ising algorithm (SIA), reaching the GW-SDP benchmark in approximately {$1.9$ and $2.6$ ms respectively.} {DOCH on average performs the best within the one-second window.} Figures~\ref{fig:2k_pm1}b and \ref{fig:2k_pm1}c further demonstrate that both DOCH and ADOCH consistently yield superior solutions across multiple runs with different initializations.


{We solve MAX-CUT on the (random) $G_{10}$ graph, an $800$-node, $94.01\%$ sparse graph with $\pm 1$ edge weights from the G-set dataset. We compute the mean cut values over $100$ random initializations as shown in Figure~\ref{fig:g10_M4_M7_M8}a. Each solver is run for $1000$ iterations. DOCH and ADOCH take only about $100$ iterations to reach the best solution. In particular, ADOCH attains GW-SDP level performance after just $3$ iterations on average, closely followed by DOCH and SIA. 

Results on additional G-set graphs are provided in Supplementary Figure~12. There, our solvers maintain superior performance with consistently smaller error bars even after $100$ iterations, especially on random $(G_{6}, G_{7}, G_{8})$ and toroidal $(G_{11}, G_{12}, G_{13})$ graphs. For longer runs ($\geq 100$ iterations), bSB and SimCIM gradually improve and beat our solvers for planner graphs $G_{18}, G_{19}, G_{20}, G_{21}$. However, such long runs are computationally prohibitive for ultra-large models. For example, on $10^7$-spin fully connected graphs, other solvers will take about $1000$ hours to run for $100$ iterations, including hyperparameter tuning (see Ultra-large models section).}

We also evaluate the solvers on random graphs with integer-valued edge weights from the MAX-CUT benchmark suite in the Biq Mac Library~\cite{bigq_mac}. Each solver is run for $1000$ iterations and over $100$ different initializations. The best cut values are shown as a bar graph in Figure~\ref{fig:biq-mac}b. We used the cut value obtained after $10^5$ iterations of SA as the largest cut value. For each run, we measure the time taken to reach $99\%$ of that largest cut value and average this across multiple runs to compute the average time-to-solution (average TTS). As shown in Figure~\ref{fig:biq-mac}a, the DOCH, ADOCH solvers consistently achieve the lowest Avg TTS across all graph instances, demonstrating superior convergence speed. Figure~\ref{fig:biq-mac}b further shows that DOCH and ADOCH attain the best or comparable cut values with respect to the other solvers. 


{The experiments on these small-scale Ising problems are conducted using a Jetson Nano module, except for the $K_{2000}$, the 1000-spin SK model, and the G-set graphs shown in Supplementary Figure~13. The computations are performed on a NVIDIA RTX 3050 GPU laptop with 4 GB RAM.}

\subsubsection*{Medium-to-large models} 
For Ising models of this scale, GW-SDP becomes computationally impractical to run. Moreover, for models with $>10^4$ spins, we require more powerful GPUs such as RTX 3090 and V100. In Figure~\ref{fig:g10_M4_M7_M8}b, we illustrate the performance of various Ising solvers on the $10^4$ spin SK model. Each algorithm is run for $1$ s with the same initialization until saturation. The plot in Figure~\ref{fig:g10_M4_M7_M8}b shows the time evolution of the Ising energy for each solver. We observe that {ADOCH outperforms other solvers, and DOCH achieves an energy below $-5 \times 10^5$ in under $0.1$ s}. Performance comparisons for larger Ising models with $10^5$ and $10^6$ spins are provided in (Supplementary Figures 4-7).

We also compare our Ising solver with the recently introduced free energy machine (FEM)~\cite{fem2025}, which is based on the principle of free-energy minimization. FEM shares some similarities with our approach, notably in its use of continuous relaxation and the incorporation of an external entropy term in the energy function. However, FEM inherits limitations common to SA methods, including sensitivity to hyperparameters, reliance on a carefully designed cooling schedule, and challenges in parameter tuning. Moreover, FEM employs optimizers like RMSprop and ADAM~\cite{2015-kingma} which do not offer convergence guarantees. Additionally, gradient computation becomes a significant bottleneck in FEM, complicating its implementation for large-scale problems ($n > 10^4$). As shown in (Supplementary Figures 1 and 2), our DOCH solvers consistently outperform FEM within $0.1$ and $1$ s for small ($10^3$ spin) and medium ($10^4$ spin) SK models respectively. It is also important to acknowledge that FEM is a more generalized formalism applicable to both the MAX-CUT and multi-spin Ising problems.

\subsubsection*{Ultra-large models} 
We benchmark the Ising solvers on ultra-large-scale Ising model problems involving $10^7-10^8$ spins, covering both sparse and dense connectivity regimes. Figure~\ref{fig:g10_M4_M7_M8}c shows the performance on a fully connected Ising model with $10^7$ spins with nearly $50$ trillion couplings. The coupling coefficients are generated using the pseudo-random function: $J_{ij}=\sin(i j + \mathrm{seed})$ with $\mathrm{seed} = 100$. The experiments are performed using four NVIDIA H100 GPUs. The runtime versus Ising energy plot in Figure~\ref{fig:g10_M4_M7_M8}c clearly demonstrates the superior performance of our methods. In particular, the ADOCH achieves the lowest energy level, followed closely by DOCH. These results underscore the scalability and efficiency of our solvers in tackling both sparse and fully connected Ising models at unprecedented scales. However the execution of Ising solvers on fully connected graphs demands much more computational resources than sparsely connected graphs.

{With a fixed runtime budget, state-of-the-art solvers must complete a full run for every hyperparameter choice to achieve the best Ising energy, while our solvers only need to inspect the first few iterations, allowing much faster parameter tuning. For instance, in the $10^7$-spin ultra-large Ising model from Figure~\ref{fig:g10_M4_M7_M8}c, tuning 10 hyperparameter values requires about $100$ hours for other solvers but only $20$ hours for ours. Subsequently, running each solver for $11$ hours showed that DOCH surpassed all others within $8$ hours, while ADOCH achieved its best results within $11$ hours.}

We further demonstrate the performance of the Ising solvers on a $10^8$-spin and $0.00001\%$ connected Ising model, having approximately 0.5 billion nonzero coupling coefficients. The nonzero $J_{ij}$ are sampled uniformly from the set of 9-bit signed integers. The matrix computations are performed using two H100 GPUs with parallelized matrix-vector multiplication (Supplementary Note 10). Each algorithm is run for $10^3$ s ($\approx 17$ minutes), during which the Ising energy reached a saturation point. The resulting runtime versus Ising energy plots are shown in Figure~\ref{fig:g10_M4_M7_M8}d. Both DOCH and ADOCH demonstrate superior performance, achieving the lowest energy values among all solvers. Additional results on similarly ultra-large-scale Ising models are provided in (Supplementary Figures 10 and 11). To our knowledge, no prior work has reported Ising solvers at this scale.

\section{Discussion}

We have introduced our Ising solver by relaxing the binary spins into continuous variables and formulating the resulting optimization as a difference-of-convex program. This approach is motivated by the success of first-order optimization methods for large-scale continuous optimization, particularly in modern deep learning~\cite{2015-kingma}. Central to our method is a tunable attractor function, which is coupled with the Ising energy to guide solutions toward binary spin assignments. By exploiting the structure of this attractor, we design two simple yet effective iterative algorithms (DOCH and ADOCH), which require only a single matrix-vector multiplication per iteration. We further establish that DOCH is guaranteed to converge to a critical point of the Hamiltonian. In contrast to conventional Ising solvers, our approach avoids reliance on cooling schedules or extensive hyperparameter tuning, making it especially well-suited for ultra-large-scale Ising problems. {Our methods require fewer iterations to reach high-quality Ising energies, which leads to a substantially lower overall runtime, even though the per-iteration cost is comparable to competing solvers (see \textbf{Remarks S1, S2} in the Supplementary).}


The key observations from our benchmarking are discussed below:

\begin{itemize}
    \item We demonstrate scalability by solving a fully connected Ising model with $10^7$ spins and $\sim 50$ trillion couplings (Figure~\ref{fig:g10_M4_M7_M8}c), far exceeding previously reported models capped at $10^6$ spins and $\sim 5$ billion couplings~\cite{goto2021}.

    \item On smaller dense graphs such as $K_{2000}$, DOCH performs best over short time horizons, while ADOCH converges fastest for large and ultra-large models.

    \item For benchmarking on small-scale Ising models, we used $100$ random initializations; the histograms in Figures~\ref{fig:1k_sk}b, \ref{fig:1k_sk}c, \ref{fig:2k_pm1}b, and \ref{fig:2k_pm1}c demonstrate strong robustness to initialization.

    \item Our solvers rapidly achieve GW-SDP-level energies and consistently perform better on the $G_{10}$ graph and Biq Mac instances.

    \item Across all scales, including the ultra-large models in Figures~\ref{fig:g10_M4_M7_M8}c, \ref{fig:g10_M4_M7_M8}d and Supplementary Figures~1 to 11, our methods exhibit faster convergence to lower-energy configurations than competing solvers.
\end{itemize}

A potential consideration arises when comparing our approach to traditional cooling-based Ising solvers, particularly for small graph instances. Annealing-type algorithms, which rely on carefully designed cooling schedules, are known to be asymptotically optimal under ideal conditions. For small-scale problems with modest computational requirements, these solvers can be executed for many iterations within a short runtime, potentially achieving more accurate ground state approximations than our solvers. In contrast, our solvers are inherently independent of cooling schedules and are designed for rapid convergence, often reaching high-quality approximate solutions in just a few iterations. This feature makes our approach especially well-suited for large and ultra-large-scale Ising models. For such large instances, the computational cost of achieving asymptotic optimality with traditional annealing methods becomes prohibitive. In this regime, our approach delivers efficient, high-quality solutions without extensive hyperparameter tuning or prolonged annealing runs. The distinction is clear: while annealing offers incremental benefits for small problems where long runtimes are acceptable, our solvers present a scalable and robust alternative that significantly broadens the range of tractable Ising models. 

In summary, our solvers offer a compelling alternative to traditional simulation-based, cooling-dependent approaches. Their simple update rules enable efficient implementation on GPU clusters, allowing for scalability to ultra-large problem instances. Furthermore, the inherently parallel structure of the algorithms makes them well-suited for deployment on low-level hardware such as FPGAs, offering significant gains in both computational speed and energy efficiency.

\section{Methods}

First-order methods have become the cornerstone of large-scale optimization, particularly in deep learning and related fields~\cite{2015-kingma,lecun2015deep,wir_beck_first-order_2017}.  Their success is driven by a combination of low computational cost per iteration and the ability to scale efficiently to problems with extremely large parameter spaces. By relying solely on gradient information, these methods bypass the need to compute or invert Hessians, making them highly effective for nonconvex problems involving millions or even billions of variables. Additionally, their algorithmic simplicity allows for straightforward parallelization and efficient deployment on hardware accelerators, which has been critical for scaling optimization in modern, data-intensive applications.

Our Ising solver builds on the observation that the Hamiltonian in~\eqref{eq:Hamilton} can be expressed as a difference of convex functions. A particularly effective technique for optimizing such functions is the Difference-of-Convex Algorithm (DCA)~\cite{abbaszadehpeivasti2023}, which is rooted in the classical principle of bound optimization~\cite{dempster1977maximum}. In this approach, the original nonconvex objective is iteratively approximated by a sequence of convex surrogate problems that upper-bound the original function.  By designing the surrogate problems to be easier to solve (for example, using efficient first-order methods), we can scale DCA to large-scale problems. 

As explained next, applying DCA to our problem~\eqref{eq:DCP} results in a simple iterative scheme. Starting from an initial point \(\x^{(k)}\), we linearize $g$ using its first-order approximation around \(\x^{(k)}\). This yields a convex surrogate, which is minimized to obtain the next iterate \(\x^{(k+1)}\). In particular, as $g$ is convex, its linear approximation at \(\x^{(k)}\) gives us a global lower bound~\cite{wir_beck_first-order_2017}. Specifically, for all $\x \in \Re^n$, we have
\begin{equation*}
g(\x) \geqslant g(\x^{(k)}) +  \nabla\! g(\x^{(k)})^\top\! (\x-\x^{(k)}).
\end{equation*}
This results in the following convex upper bound on the Hamiltonian:
\begin{equation*}
 \cH(\x)=  f(\x) - g(\x) \leqslant  f(\x) - g(\x^{(k)}) -  \nabla\! g(\x^{(k)})^\top\! (\x-\x^{(k)}). 
\end{equation*} 
We refine our current estimate $\x^{(k)}$ by minimizing this upper bound. Specifically, we set the update \(\x^{(k+1)}\) as the minimizer of the 
surrogate function:
\begin{equation*}
F(\x) = f(\x) - g(\x^{(k)})- \nabla\! g(\x^{(k)})^\top\! (\x-\x^{(k)}).
\end{equation*}
As $F$ is convex and differentiable, by first-order optimality, 
\begin{equation}
\label{eq:optcon}
\0=\nabla F(\x^{(k+1)}) =\nabla\! f(\x^{(k+1)}) - \nabla\! g(\x^{(k)}).
\end{equation}
Substituting $\nabla\! g(\x) = (\J + \alpha \I)\x$ in \eqref{eq:optcon} and after some calculation, we get
\begin{equation}
\label{eq:FPE}
    \x^{(k+1)}  = \cT (\x^{(k)}), 
\end{equation}
where 
\begin{equation*}
\cT(\x) = \varphi \big( \beta^{-1}(\J + \alpha \I) \x \big) \qquad \mbox{and} \qquad \varphi(x_1,\ldots,x_n) = (\sqrt[3] x_1,\ldots,\sqrt[3] x_n).
\end{equation*}
Thus, $\cT$ is a linear transform followed by a componentwise cube root operation. The resulting algorithm, called Difference-of-Convex Hamiltonian (DOCH), is summarized in \textbf{Algorithm~1}. 

A distinctive property of DOCH is that the updates are monotone:
\begin{equation}
\label{eq:descent}
\cH(\x^{(0)}) \geqslant \cH(\x^{(1)}) \geqslant \cH(\x^{(2)}) \geqslant  \cdots.
\end{equation}
That is, $\cH$ is guaranteed to decrease or stay constant after each iteration. Since $\cH$ is continuous and coercive, the descent property~\eqref{eq:descent} ensures that the sequence $\{\cH(\x^{(k)}\}$ converges to a limiting value. Moreover, because $\cH$ is a polynomial, we can establish that the sequence of iterates $\{\x^{(k)}\}$ converges to a limit point $\x^* \in \Re^n$. We (unconditionally) prove that $\x^*$ is  a critical point of $\cH$, and under mild additional assumptions, that it is a strict local minimizer of $\cH$ (Supplementary Note 3).

\begin{center}
    \begin{tabular}{ll}
        \hline
        \multicolumn{2}{l}{\textbf{Algorithm 1:} DOCH} \\
        \hline
        1: & \textbf{initialization}: $\x^{(0)}$, $N \geqslant 1$ \\
        2: & \textbf{for} $k = 0$ to $N-1$ \textbf{do} \\
        3: & \hspace{1em} compute $\x^{(k+1)} =\cT(\x^{(k)})$. \\
        4: & \textbf{end for} \\
        5: & \textbf{return:} $\s = \mathrm{sign}(\x^{(N)})$. \\
        \hline
    \end{tabular}
    \label{algo:DOCH}
\end{center}

A natural extension of our algorithm is to incorporate acceleration into DCA~\cite{phan2018}.  Accelerated first-order methods improve the convergence speed of traditional first-order algorithms. A prominent example is Nesterov’s acceleration~\cite{Nesterov2018,2015-kingma}, which achieves the optimal convergence rate for convex problems, outperforming standard gradient descent. Accelerated methods preserve the simplicity and low per-iteration cost of standard first-order methods, while achieving faster convergence. This makes them especially effective for large-scale optimization problems. In our case, we apply Nesterov-style acceleration to develop the Accelerated DOCH (ADOCH) algorithm, with the main steps outlined in \textbf{Algorithm~2}. 

\begin{center}
    \begin{tabular}{ll}
        \hline
        \multicolumn{2}{l}{\textbf{Algorithm 2:} Accelerated DOCH (ADOCH)} \\
        \hline
        1: & \textbf{initialization}: $\x^{(0)}$, $\y^{(0)} = \x^{(0)}$, $t_0=1$, $q \geqslant 1$, $N \geqslant 2$. \\
        2:  & \textbf{for} $k = 0$ to $N-1$ \textbf{do} \\
        3: & \hspace{1em} compute $t_{k+1} = \big(1+\sqrt{1+4t_k^2} \big)/2$. \\
        4: & \hspace{1em} \textbf{if} $k\geqslant 1$ \\
        5: & \hspace{2em} compute $\y^{(k)} = \x^{(k)} + ((t_k-1)/t_{k+1}) \, ( \x^{(k)} - \x^{(k-1)})$. \\
        6: & \hspace{1em} \textbf{if} $\cH(\y^{(k)}) \leqslant \max \big\{\cH(\x^{(\max(0, k-q))}),\dots, \cH(\x^{(k)}) \big\}$ \\
        7: & \hspace{2em} set $\bv^{(k)} = \y^{(k)}$. \\
        8: & \hspace{1em} \textbf{else} \\
        9: & \hspace{2em} set $\bv^{(k)} = \x^{(k)}$. \\
        10: & \hspace{1em} \textbf{end if} \\
        11: & \hspace{1em} compute $\x^{(k+1)} = \cT(\bv^{(k)})$. \\
        12: & \textbf{end for} \\
        13: & \textbf{return:} $\s = \mathrm{sign}(\x^{(N)})$. \\
        \hline
    \end{tabular}
     \label{algo:ADOCH}
\end{center}

The main difference with DOCH is the use of momentum. We extrapolate $\x^{(k)}$ using the previous update $\x^{(k-1)}$:
\begin{equation*}
\y^{(k)} = \x^{(k)} + \frac{t_k-1}{t_{k+1}} \, ( \x^{(k)} - \x^{(k-1)}),
\end{equation*}
where the sequence $\{t_k\}$ is set using Nesterov's optimal scheme~\cite{Nesterov2018}. To decide whether to accept the extrapolated point \(\y^{(k)}\), we check the condition
\begin{equation}
\label{eq:BBscheme}
\cH(\y^{(k)}) \leqslant \max  \big\{\cH(\x^{(k-q)}),\dots, \cH(\x^{(k)}) \big\},
\end{equation}
where $q$ controls the look-back window~\cite{phan2018}. This Barzilai–Borwein type criterion helps the method escape shallow local minima of \(\cH\) \cite{grippo2002}. Theoretically, a large value of $q$ ensures better acceleration and quicker convergence~\cite{wright2009}. If \eqref{eq:BBscheme} holds, we update \(\x^{(k+1)} = \cT(\y^{(k)})\); otherwise, we fall back to the non-extrapolated update. Taking the update at \(\y^{(k)}\) rather than \(\x^{(k)}\) is the key distinction from DOCH.

{\paragraph*{Data availability:} The authors declare that the main data supporting the findings of this study are provided within the paper and its supplementary files. The public benchmark datasets used in our experiments, including the G-set graphs and the Biq Mac Library, can be accessed at
\url{https://web.stanford.edu/~yyye/yyye/Gset/} and \url{https://biqmac.aau.at/biqmaclib.html}.}






\clearpage
\begin{sidewaystable}[t!]
\centering
    \begin{tabular}{|c|c|c|c|c|c|c|c|c|}
    \hline
    \multicolumn{3}{|c|}{\begin{cellvarwidth}[t]\centering Ising Solvers \end{cellvarwidth}} &
    \begin{cellvarwidth}[t]\centering Computational\\ Complexity \end{cellvarwidth} &
    \begin{cellvarwidth}[t]\centering Scalability/\\Parallelization \end{cellvarwidth} &
    \begin{cellvarwidth}[t]\centering Approximation\\Quality \end{cellvarwidth} &
    \begin{cellvarwidth}[t]\centering Convergence\\Guarantee \end{cellvarwidth} &
    \begin{cellvarwidth}[t]\centering Use of \\Cooling \end{cellvarwidth} &
    \begin{cellvarwidth}[t]\centering Parameter\\Tuning \end{cellvarwidth} \tabularnewline
    \hline\hline
    \multirow{4}{*}{\begin{cellvarwidth}[t]\centering $ $\\$ $\\Annealing\\$ $ \end{cellvarwidth}} &
    \begin{cellvarwidth}[t]\centering Markov Chain \\ Monte Carlo \\ (MCMC) \end{cellvarwidth} &
    \begin{cellvarwidth}[t]\centering SA~\cite{kirkpatrick1983,isakov2015},\\ MARS~\cite{Yavorsky2019},\\ NMFA~\cite{King2018} \end{cellvarwidth} &
    \begin{cellvarwidth}[t]\centering $ $\\high\\$ $ \end{cellvarwidth} &
    \begin{cellvarwidth}[t]\centering $ $\\yes\\$ $ \end{cellvarwidth} &
    \begin{cellvarwidth}[t]\centering $ $\\poor\\$ $ \end{cellvarwidth} &
    \begin{cellvarwidth}[t]\centering $ $\\no\\$ $ \end{cellvarwidth} &
    \begin{cellvarwidth}[t]\centering $ $\\no\\$ $ \end{cellvarwidth} &
    \begin{cellvarwidth}[t]\centering $ $\\hard\\$ $ \end{cellvarwidth} \tabularnewline
    \cline{2-9}
    & \multirow{3}{*}{\begin{cellvarwidth}[t]\centering Simulated\\Hamiltonian\\Dynamics \end{cellvarwidth}} &
    bSB~\cite{goto2019,goto2021} & high & yes & good & no & no & hard \tabularnewline
    \cline{3-9}
    & & CIM~\cite{tiunov2019,honjo2021} & high & yes & good & no & no & hard \tabularnewline
    \cline{3-9}
    & & SIA~\cite{jiang2024} & high & yes & good & no & no & hard \tabularnewline
    \hline
    \multirow{3}{*}{\begin{cellvarwidth}[t]\centering Continuous\\Optimization\\$ $ \end{cellvarwidth}} &
    \multicolumn{2}{c|}{CP~\cite{beck2002}} & low & \begin{cellvarwidth}[t]\centering no \end{cellvarwidth} & poor & yes & yes & --- \tabularnewline
    \cline{2-9}
    & \multicolumn{2}{c|}{SDP~\cite{Luo2010}} & low & \begin{cellvarwidth}[t]\centering no \end{cellvarwidth} & poor & yes & yes & --- \tabularnewline
    \cline{2-9}
    & \multicolumn{2}{c|}{\begin{cellvarwidth}[t]\centering DOCH (present work) \end{cellvarwidth}} & high & yes & good & yes & yes & easy \tabularnewline
    \hline
    \end{tabular}
    \caption{Comparison of key aspects of our Ising solver with existing solvers. \textbf{SA}: Simulated Annealing, MARS: Mean field Annealing from a Random State, \textbf{NMFA}: Noisy Mean Field Annealing, \textbf{bSB}: ballistic Bifurcation Machine, \textbf{CIM}: Coherent Ising Machine, \textbf{SIA}: Spring Ising Algorithm, \textbf{CP}: Convex Programming, \textbf{SDP}: Semidefinite Programming, \textbf{DOCH}: Difference of Convex Hamiltonian. (A convergence guarantee means that the iterative process will asymptotically reach a fixed point or a local minimum. An approximation guarantee, on the other hand, concerns the quality of the final solution, describing how close it is to the true ground state).}
    \label{table:ising_solvers_comp_table}
\end{sidewaystable}

\clearpage
\begin{figure}[t!]
	\centering
    \includegraphics[width=0.80\linewidth]{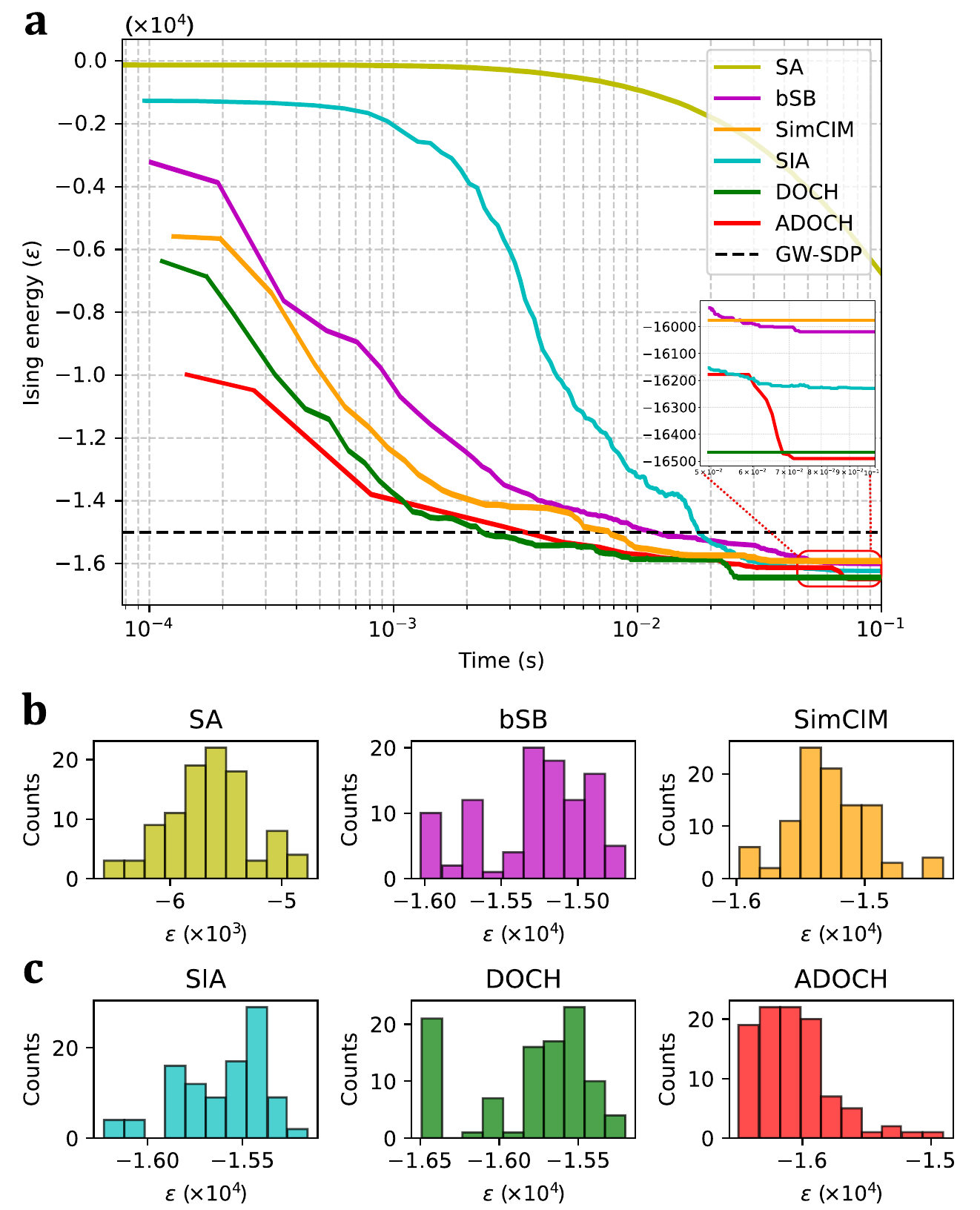}
	\caption{\textbf{Evolution of Ising energy with runtime for the $10^3$-spin SK model.} \textbf{(a)} Comparison of our solvers (DOCH and ADOCH) with state-of-the-art Ising solvers: Simulated Annealing (SA)~\cite{isakov2015, inagaki2016}, ballistic Bifurcation Machine (bSB)~\cite{goto2021}, Simulated Coherent Ising Machine (SimCIM)~\cite{goto2021}, and Spring Ising Algorithm (SIA)~\cite{jiang2024}. Solid curves show the best energy values achieved across $100$ independent runs (Supplementary Notes 6 and 7 for the parameter settings). The dashed black line indicates the lowest energy obtained using the GW-SDP~\cite{goto2019, inagaki2016}. \textbf{(b), (c)} Histograms of $100$ Ising energies, each obtained after $1000$ iterations and with different initializations. Results obtained on a NVIDIA RTX 3050 GPU laptop.}
	\label{fig:1k_sk}
\end{figure}

\clearpage

\begin{figure}[t!]
	\centering
\includegraphics[width=0.80\linewidth]{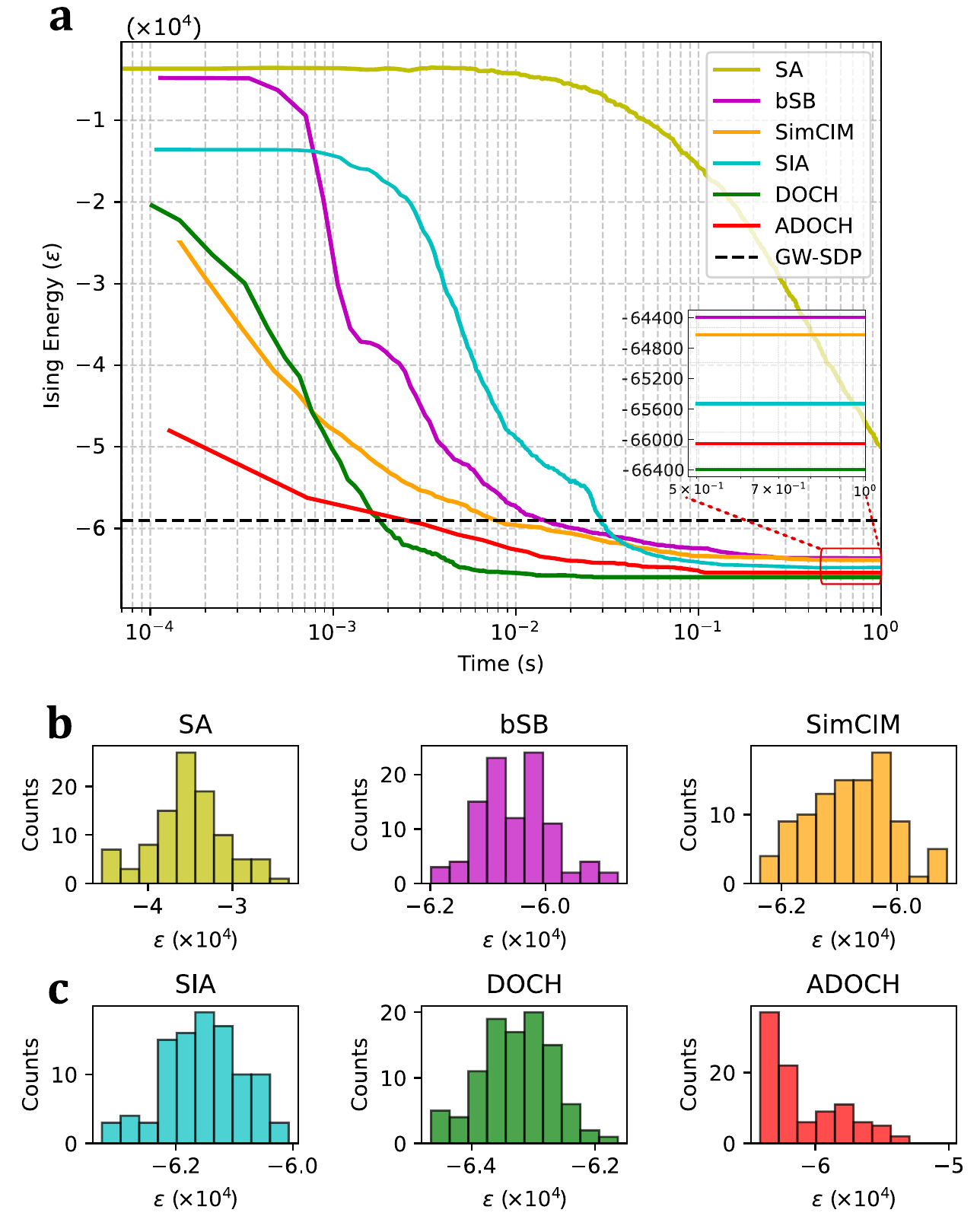}
	\caption{\textbf{Ising energy versus runtime on the \(K_{2000}\) benchmark.} \textbf{(a)} Comparison of our solvers (DOCH and ADOCH) with state-of-the-art Ising solvers. Solid curves show the mean Ising energy across $100$ independent runs (Supplementary Notes 6 and 7 for the parameter settings). The dashed black line indicates the lowest energy obtained using the GW-SDP~\cite{goto2019, inagaki2016}. \textbf{(b), (c)} Histograms of $100$ Ising energies, each obtained after $1000$ iterations and with different initializations. Results obtained on a NVIDIA RTX 3050 GPU laptop.}
	\label{fig:2k_pm1}
\end{figure}

\clearpage

\begin{figure}[t!]
	\centering
	\includegraphics[width=1.0\linewidth]{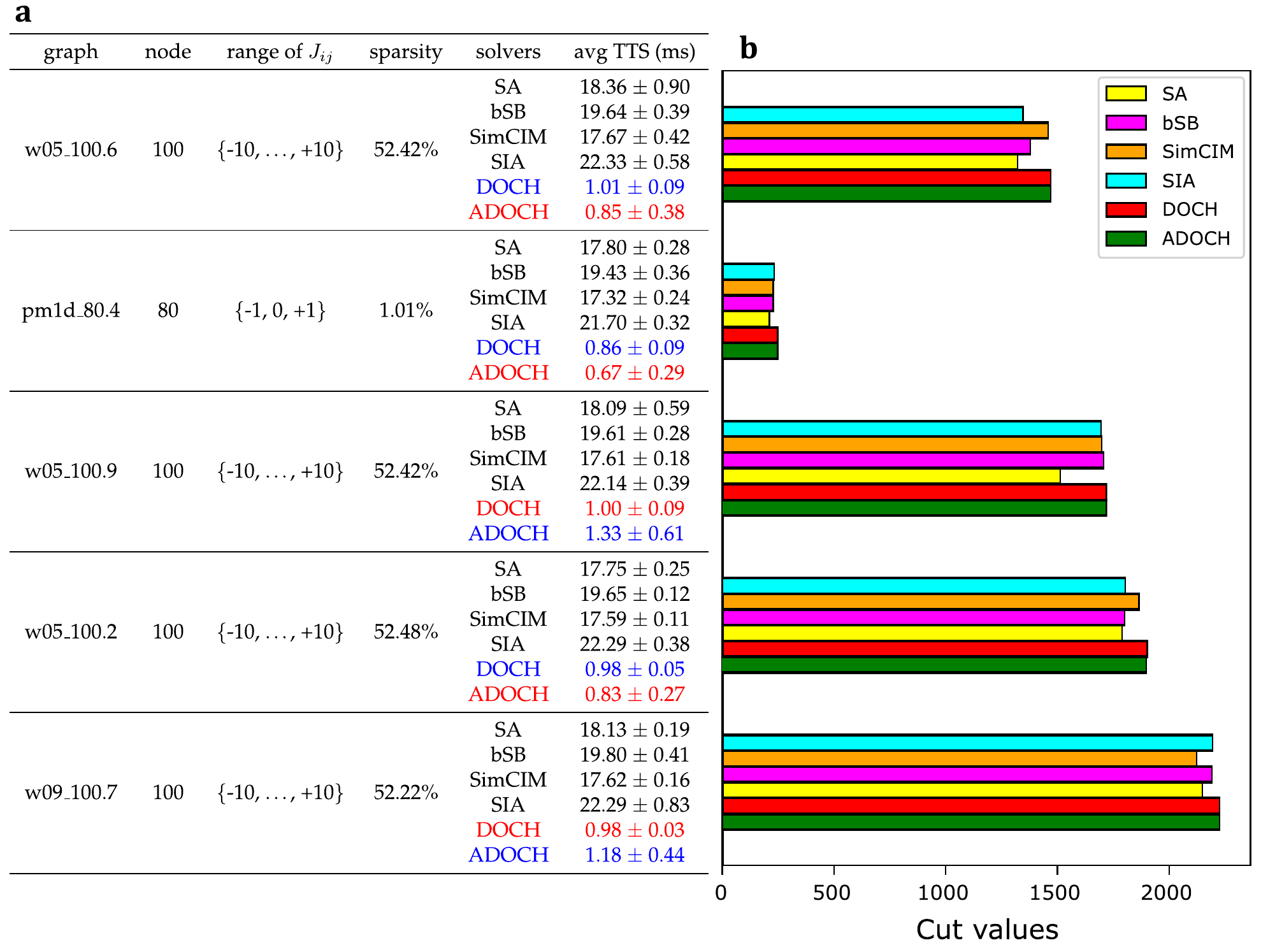}
	\caption{\textbf{MAX-CUT results on Biq Mac graphs.} \textbf{(a)} Table summarizing the graph characteristics and the average time-to-solution (Avg TTS) for each solver to reach $99$\% of the maximum cut value. (Best Avg TTS values are highlighted in red, and second-best in blue). \textbf{(b)} Bar chart comparing the best cut values achieved after $1000$ iterations by SA~\cite{isakov2015, inagaki2016}, bSB~\cite{goto2021}, SimCIM~\cite{goto2021}, SIA~\cite{jiang2024}, and our Ising solvers DOCH and ADOCH. Results obtained on NVIDIA Jetson Nano.}
    \label{fig:biq-mac}
\end{figure}

\clearpage

\begin{figure}[t!]
    \centering
    \includegraphics[width=1.0\linewidth]{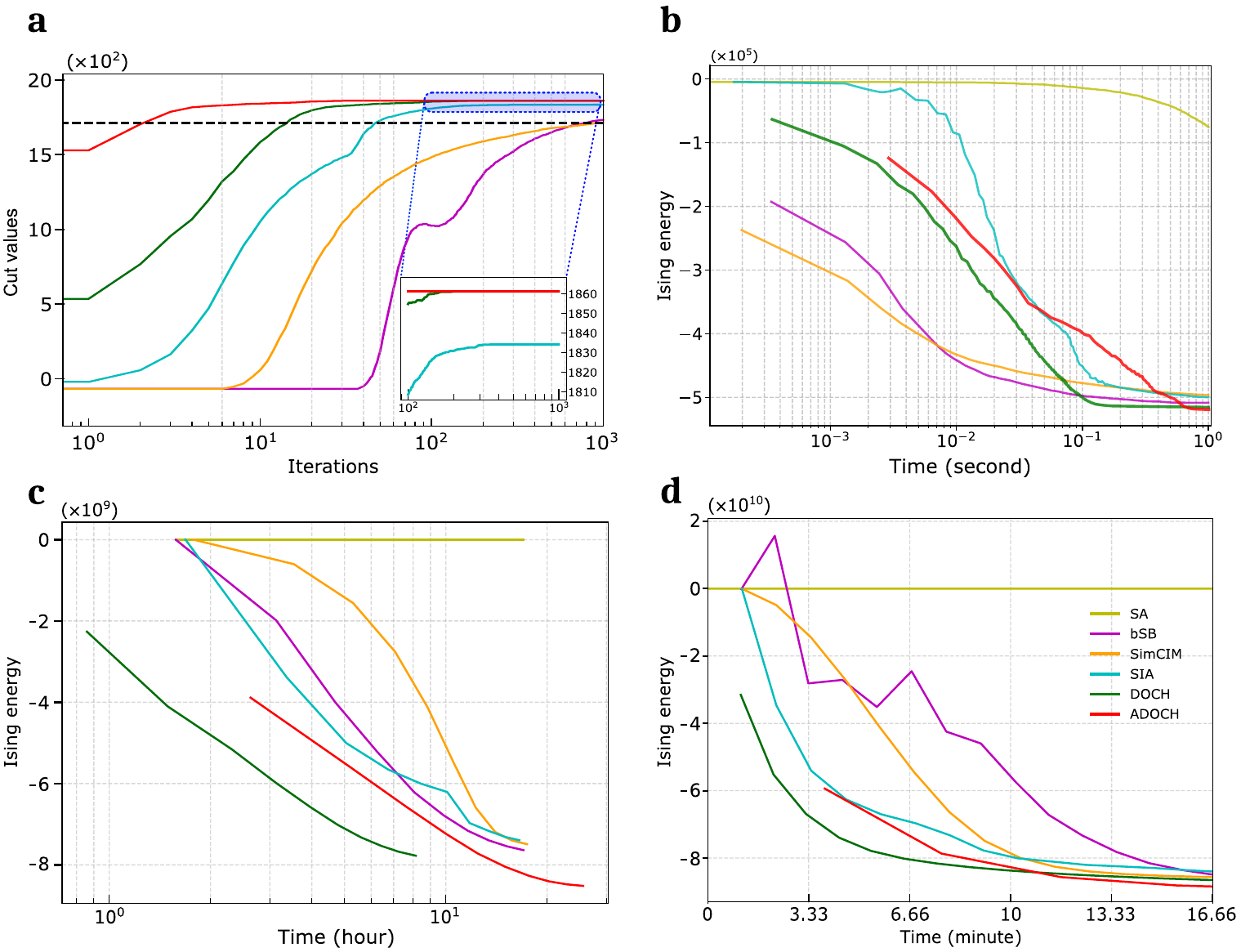}
    \caption{\textbf{Benchmarking solver performance at varying Ising model sizes.} 
    \textbf{(a)} MAX-CUT optimization on an $800$-node $G_{10}$ graph. The dashed black line marks the lowest energy found using the GW-SDP. {Each solver is run for $1000$ iterations, with $100$ different initializations. The mean cut values are then plotted against the number of iterations.} Results obtained on an NVIDIA Jetson Nano, with cut values plotted against iterations. \textbf{(b)} Benchmarking on a $10^4$-spin Sherrington–Kirkpatrick (SK) model, executed on NVIDIA Jetson Nano. Ising energy is shown as a function of time (log scale) in seconds.
    \textbf{(c)} Fully connected $10^7$-spin Ising model with couplings $J_{ij} = \sin(i \, j + \mathrm{seed})$. Experiments were conducted using four NVIDIA H100 GPUs. The energy trajectories over time are shown.
    \textbf{(d)} Sparse $10^8$-spin Ising model with $0.00001\%$ connectivity and $9$-bit signed integer couplings ($J_{ij} \in \{-2^9+1, \dots, 2^9-1\}$), executed on two NVIDIA H100 GPUs. Energy is plotted against computation time (in minutes).}
    \label{fig:g10_M4_M7_M8}
\end{figure}

\clearpage



\renewcommand{\thefigure}{S\arabic{figure}}
\renewcommand{\thetable}{S\arabic{table}}
\renewcommand{\theequation}{S\arabic{equation}}
\renewcommand{\thepage}{S\arabic{page}}
\setcounter{figure}{0}
\setcounter{table}{0}
\setcounter{equation}{0}
\setcounter{page}{1} 

\makeatletter
\renewcommand{\fnum@figure}{\textbf{Figure \thefigure}}
\renewcommand{\fnum@table}{\textbf{Table \thetable}}
\makeatother

\section{Supplementary Material}

\subsection*{Supplementary Figures}

\begin{figure}[htbp]
    \centering
    \includegraphics[width=0.8\textwidth]{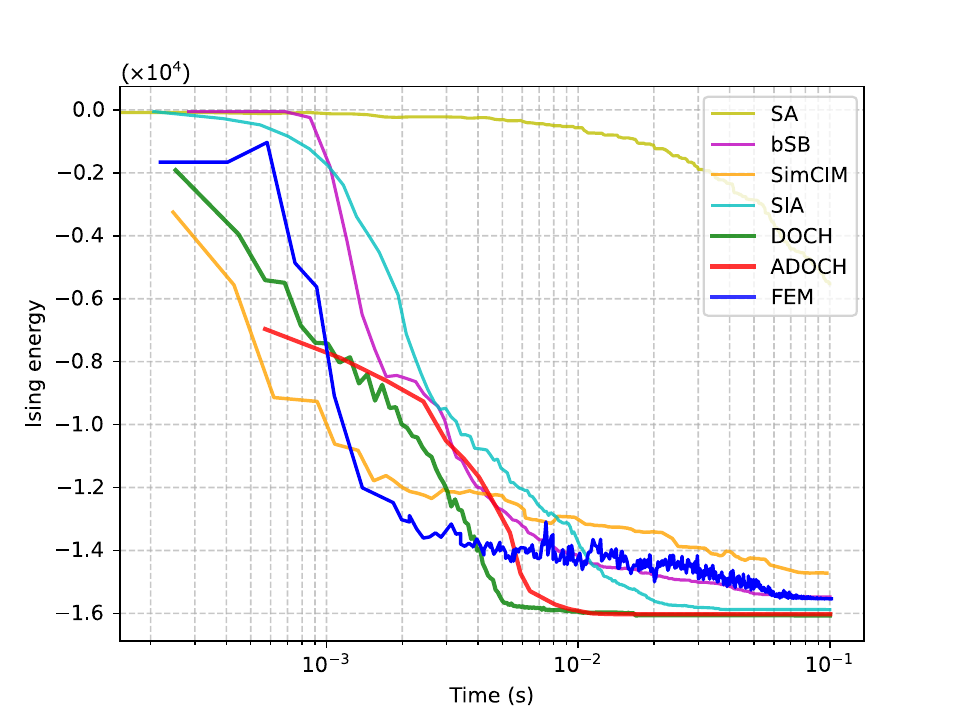}
    \caption{Benchmarking results for a $10^3$ spin Sherrington-Kirkpatrick (SK) model. The plot shows the evolution of Ising energy as a function of computation time (log scale) for several state-of-the-art solvers alongside our proposed solvers. 
    Our solvers exhibit significantly faster convergence and attain lower energy configurations compared to existing techniques. 
    The Free Energy Machine (FEM) was evaluated using its official implementation~\cite{fem2025} on an NVIDIA Jetson Nano.
    }
    \label{fig:supp_more_fem_1}
\end{figure}

\begin{figure}[htbp]
    \centering
    \includegraphics[width=0.8\textwidth]{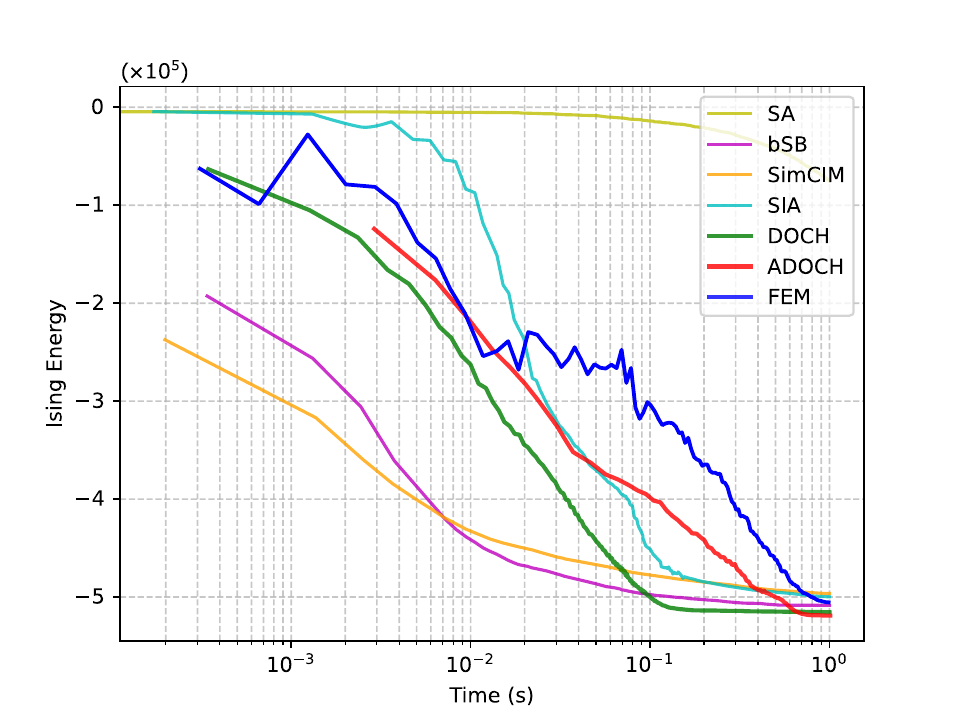}
    \caption{Benchmark results for a $10^4$ spin SK model. The plot shows Ising energy versus computation time on a logarithmic scale for a range of established and proposed solvers. 
    Our solvers consistently demonstrate advantages in both convergence speed and energy minimization as the problem size increases. 
    FEM results are obtained using the official implementation~\cite{fem2025}, executed on an NVIDIA Jetson Nano.
    }
    \label{fig:supp_more_fem_2}
\end{figure}

\begin{figure}[htbp]
    \centering
    \includegraphics[width=0.8\textwidth]{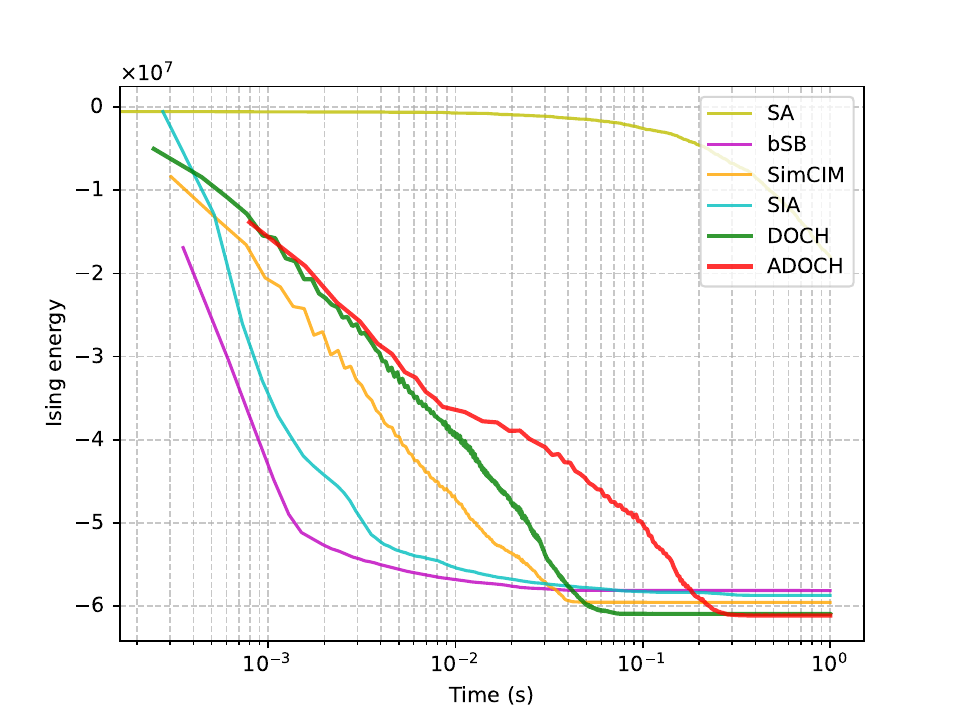}
    \caption{Benchmark results for a large-scale sparse Ising instance with $10^4$ spins and $1\%$ connectivity. 
    Non-zero couplings $J_{ij}$ are drawn uniformly from $9$-bit signed integers ($J_{ij} \in \{-2^9+1,\dots, 2^9-1\}$). 
    Ising energy is plotted against computation time on a logarithmic scale. 
    The results highlight the scalability and efficiency of our solvers under low-connectivity and discrete-weight regimes. 
    All experiments were executed on a NVIDIA Jetson Nano.
    }
    \label{fig:supp_more_3}
\end{figure}

\begin{figure}[htbp]
    \centering
    \includegraphics[width=0.8\textwidth]{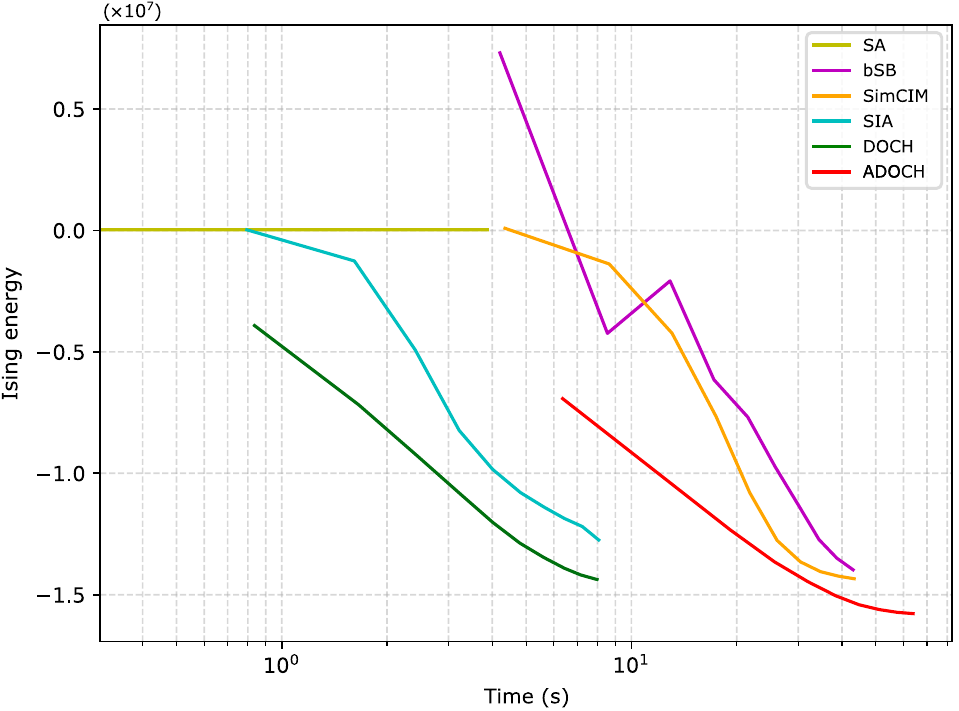}
    \includegraphics[width=0.8\textwidth]{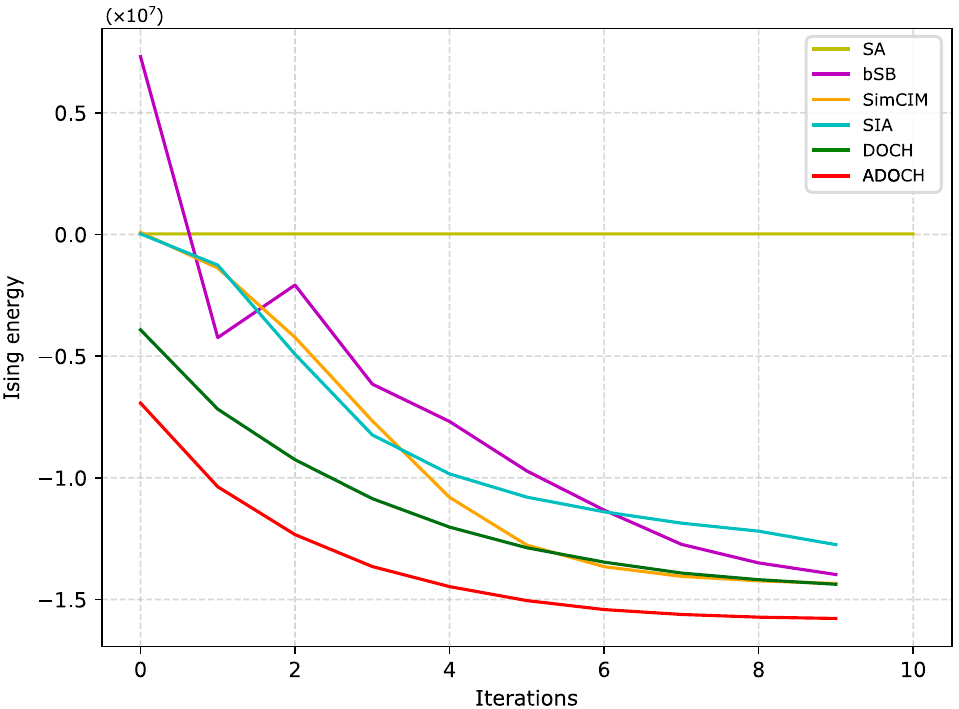}
    \caption{Performance comparison on K100000 (a fully connected Ising model with $10^5$ spins and binary couplings $J_{ij} \in \{-1, +1\}$ drawn with equal probability). 
    \textbf{Top:} Ising energy evolution over time (log scale). 
    \textbf{Bottom:} Ising energy as a function of iteration count. 
    These results demonstrate the rapid energy descent of our solvers on large and dense instances. All experiments were performed on a single NVIDIA RTX 3090 GPU.
    }
    \label{fig:supp_more_4}
\end{figure}

\begin{figure}[htbp]
    \centering
    \includegraphics[width=0.8\textwidth]{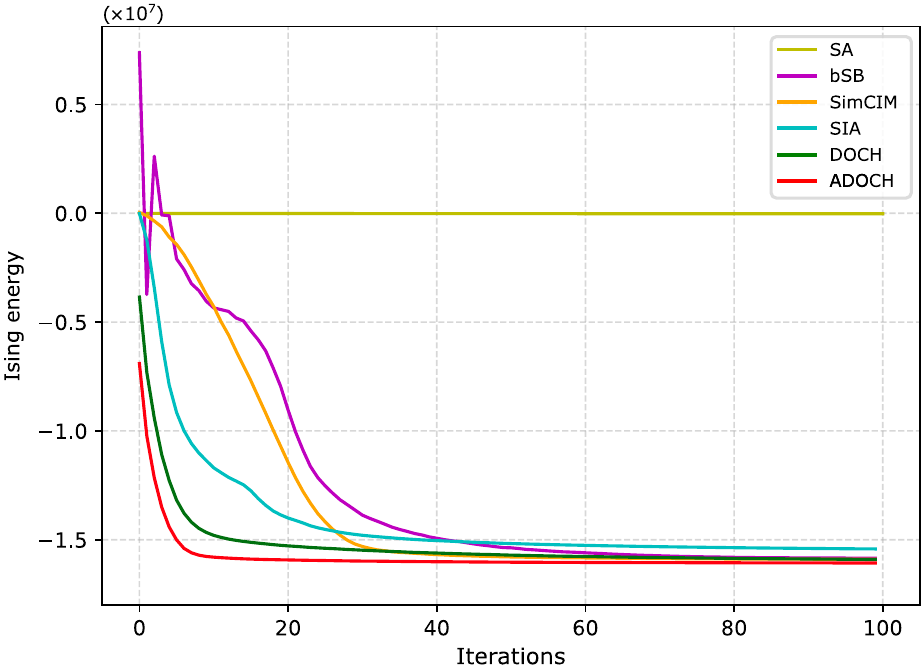}
    \caption{Performance comparison on a fully connected $10^5$ spin SK. The Ising energy is plotted against the iteration count, up to $100$ iterations.
    We used a single NVIDIA RTX 3090 GPU for the computations.
    }
    \label{fig:supp_more_5}
\end{figure}

\begin{figure}[htbp]
    \centering
    \includegraphics[width=0.8\textwidth]{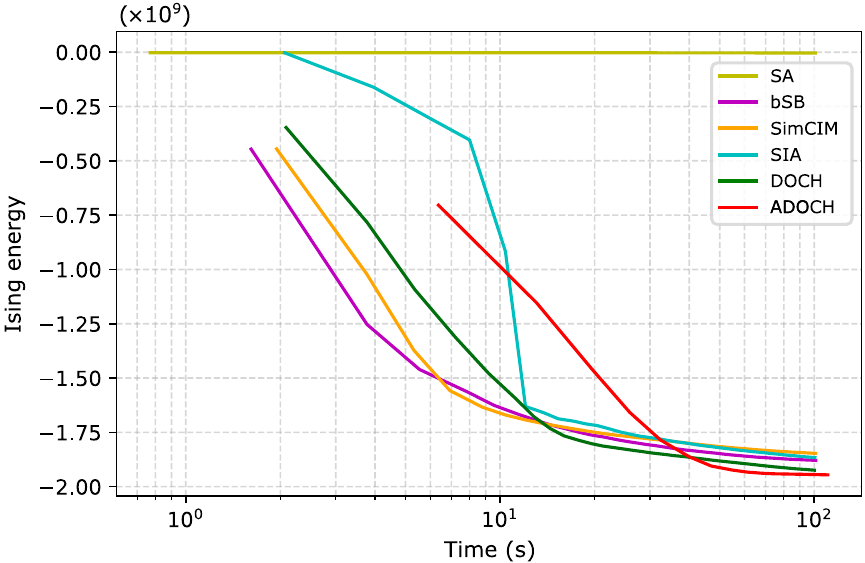}
    \caption{Performance comparison of Ising energy for a $10^5$ spin Ising model with $1\%$ connectivity. Couplings $J_{ij}$ are drawn uniformly from signed $9$-bit integers ($J_{ij} \in \{-2^9+1, \dots, 2^9-1\}$). The results are for a single NVIDIA RTX 3090 GPU.
    }
    \label{fig:supp_more_6}
\end{figure}

\begin{figure}[htbp]
    \centering
    \includegraphics[width=\textwidth]{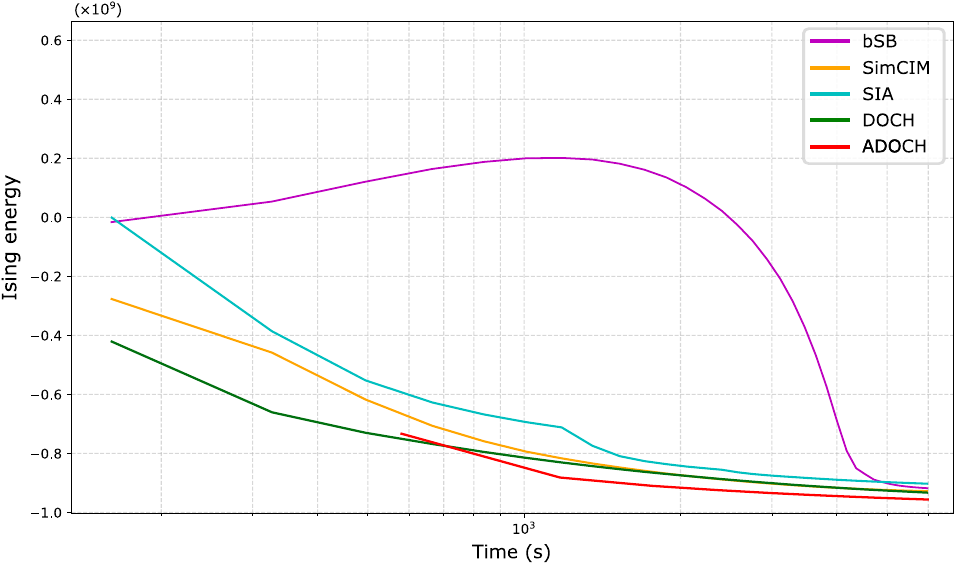}
    \caption{Time evolution of Ising energy for a $10^6$ spin fully connected Ising model with structured couplings $J_{ij} = \sin(i \, j + \mathrm{seed})$. 
    The results are for a single NVIDIA V100 GPU.
    }
    \label{fig:supp_more_7}
\end{figure}

\begin{figure}[htbp]
    \centering
    \includegraphics[width=0.8\textwidth]{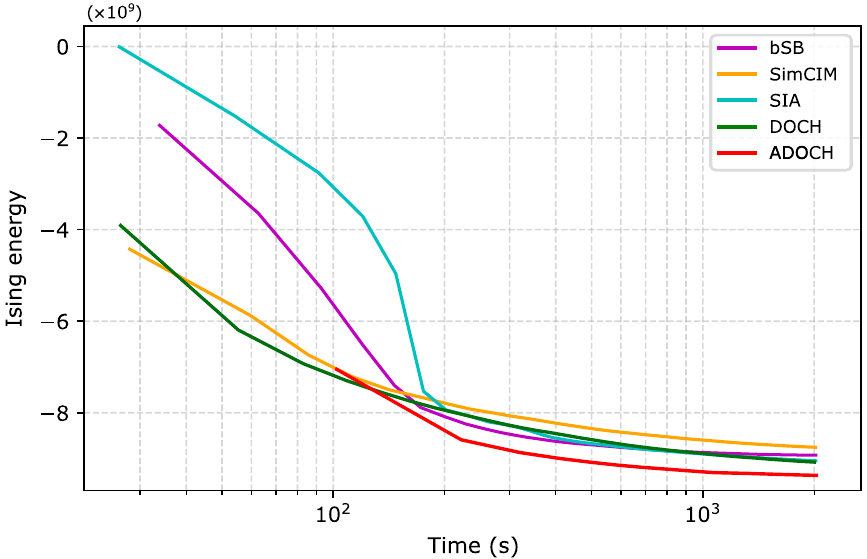}
    \caption{Time evolution of Ising energy for a $10^6$ spin sparsely connected Ising model, where $0.1\%$ of the couplings are nonzero. 
    Each active coupling $J_{ij}$ is sampled uniformly from the set of $9$-bit signed integers ($J_{ij} \in \{-2^9+1, \dots, 2^9-1\}$). 
    All solvers were run on a single NVIDIA V100 GPU.
    }
    \label{fig:supp_more_8}
\end{figure}

\begin{figure}[htbp]
    \centering
    \includegraphics[width=0.8\textwidth]{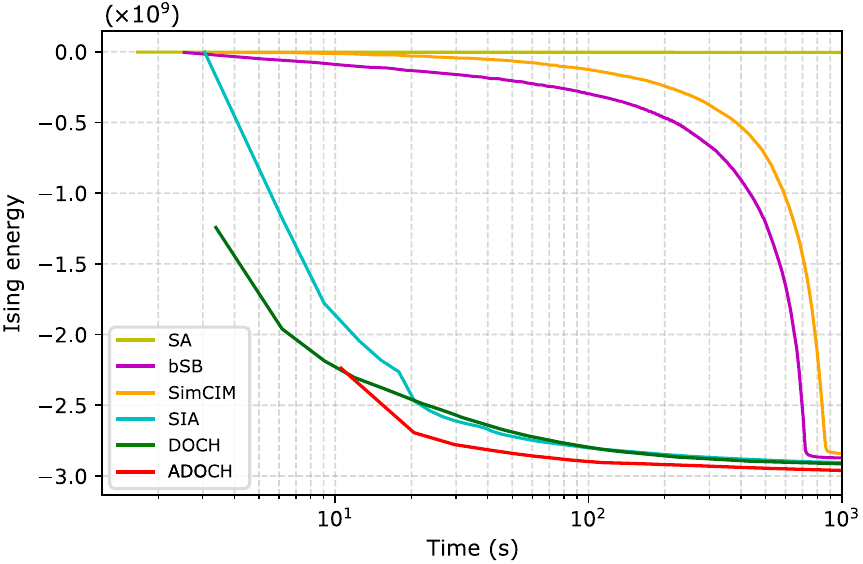}
    \caption{Time evolution of Ising energy for a $10^6$ spin extremely sparse Ising model with $0.01\%$ connectivity.
    Each nonzero coupling $J_{ij}$ is drawn uniformly from the set of $9$-bit signed integers ($J_{ij} \in \{-2^9+1, \dots, 2^9-1\}$).
    All solver experiments were executed on a single NVIDIA V100 GPU.
    }
    \label{fig:supp_more_9}
\end{figure}

\begin{figure}[htbp]
    \centering
    \includegraphics[width=0.8\textwidth]{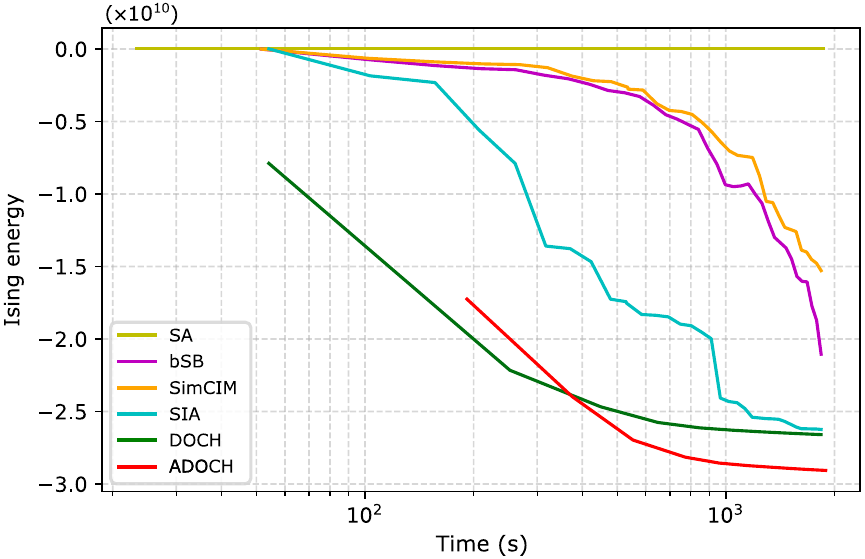}
    \caption{Benchmarking performance on a $10^7$ spin sparse Ising model with $0.001\%$ connectivity. Coupling coefficients $J_{ij}$ are drawn uniformly at random from the set of signed 9-bit integers ($J_{ij} \in \{-2^9+1, \dots, 2^9-1\}$). The vertical axis represents the Ising energy, while the horizontal axis (on a log scale) denotes computation time in seconds. Each solver was executed using four NVIDIA V100 GPUs.}
    \label{fig:supp_more_10}
\end{figure}

\begin{figure}[htbp]
    \centering
    \includegraphics[width=0.8\textwidth]{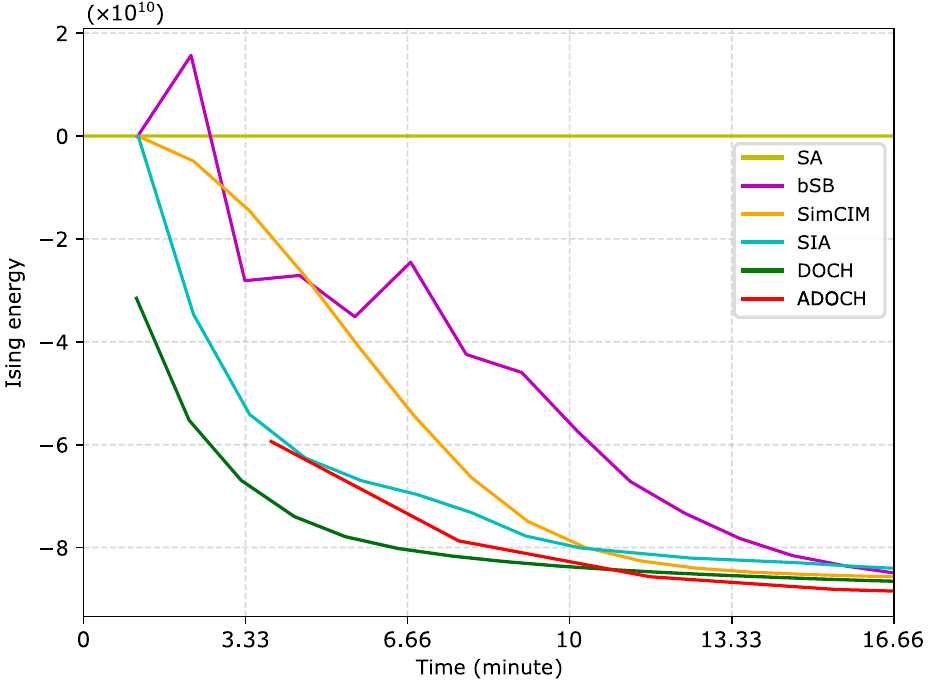}
    \caption{Benchmarking performance on a $10^8$-spin extremely sparse Ising model with connectivity of only $0.00001\%$. Each nonzero coupling $J_{ij}$ is drawn uniformly at random from the set of signed 9-bit integers, $\{-2^9 + 1, \dots, 2^9 - 1\}$. The algorithms were run on two NVIDIA H100 GPUs. The Ising energy is plotted as a function of computation time (in minutes).}
    \label{fig:supp_more_11}
\end{figure}

\begin{figure}[htbp]
   \centering
   \includegraphics[width=0.9\textwidth]{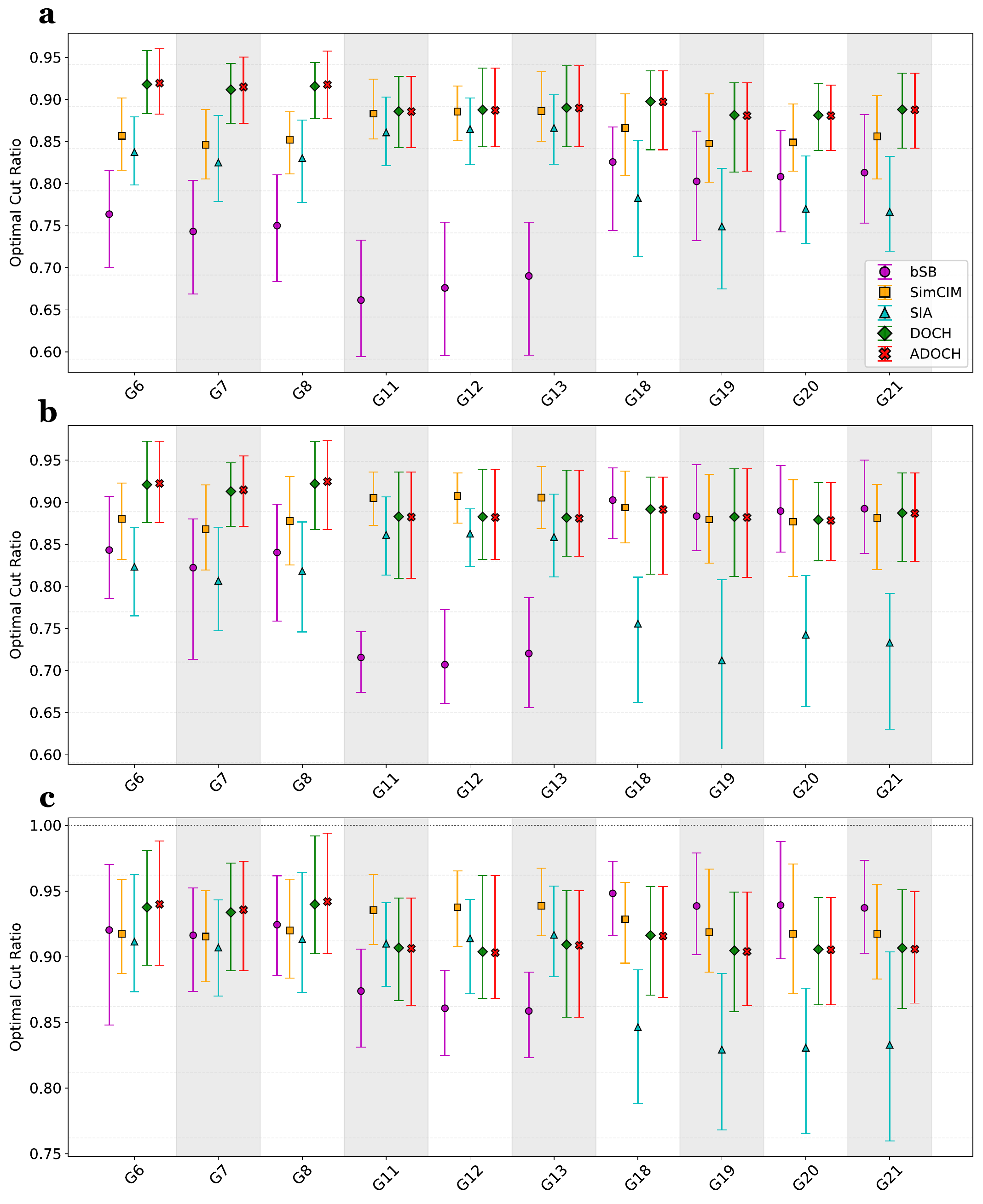}
   {\caption{\textbf{MAX-CUT on G-set graphs.} The mean cut ratio (mean cut normalized by the cut value reported in~\cite{maxcut2017multiple}) is shown for \href{https://web.stanford.edu/~yyye/yyye/Gset/}{G-set graphs} (random: $G_6, G_7, G_8$; toroidal: $G_{11}, G_{12}, G_{13}$; planar: $G_{18}, G_{19}, G_{20}, G_{21}$). Panels \textbf{(a)}, \textbf{(b)}, and \textbf{(c)} show the results after $5$, $10$, and $100$ iterations. The error bars indicate the maximum and minimum values across $100$ random initializations, respectively. The computations were performed on a NVIDIA RTX 3050 GPU laptop.}
   \label{fig:error_bar_plot_gset}}
\end{figure}

\begin{figure}[htbp]
   \centering
   \includegraphics[width=1.0\textwidth]{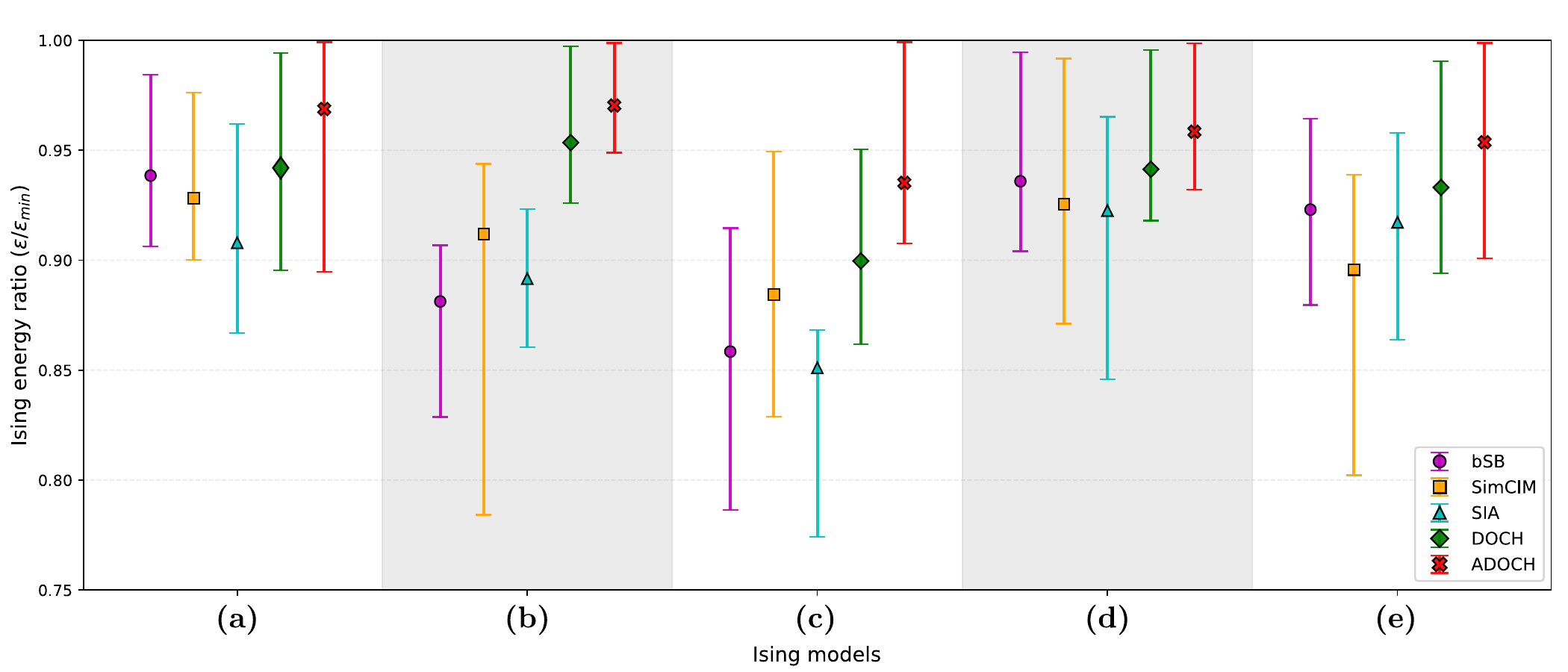}
   {\caption{\textbf{Error-bar analysis for medium and large-scale Ising models.} The plots show the Ising energy ratio (energy obtained by a solver normalized by the best energy across solvers) after a fixed runtime $T$. For each solver, the mean energy ratio is shown, with error bars indicating the maximum and minimum energy values over $100$ random initializations. (a) $10^4$-spin SK model, runtime $T = 1$ s; (b) $10^4$-spin Ising model with $1\%$ connectivity, runtime $T = 1$ s. Couplings $J_{ij}$ are drawn uniformly from signed 9-bit integers ($J_{ij} \in {-2^9+1,\dots,2^9-1}$); (c) $K100000$ graph, $T = 10$ s; (d) $10^5$-spin SK model, $T = 10$ s; (e) $10^5$-spin Ising model with $1\%$ connectivity, runtime $T = 10$ s. Couplings $J_{ij}$ are drawn uniformly from signed 9-bit integers ($J_{ij} \in {-2^9+1,\dots,2^9-1}$). All the computations were performed on a single NVIDIA RTX 3090 GPU.}
   \label{fig:error_bar_plots}}
\end{figure}

\begin{figure}[htbp]
    \centering
    \includegraphics[width=0.8\textwidth]{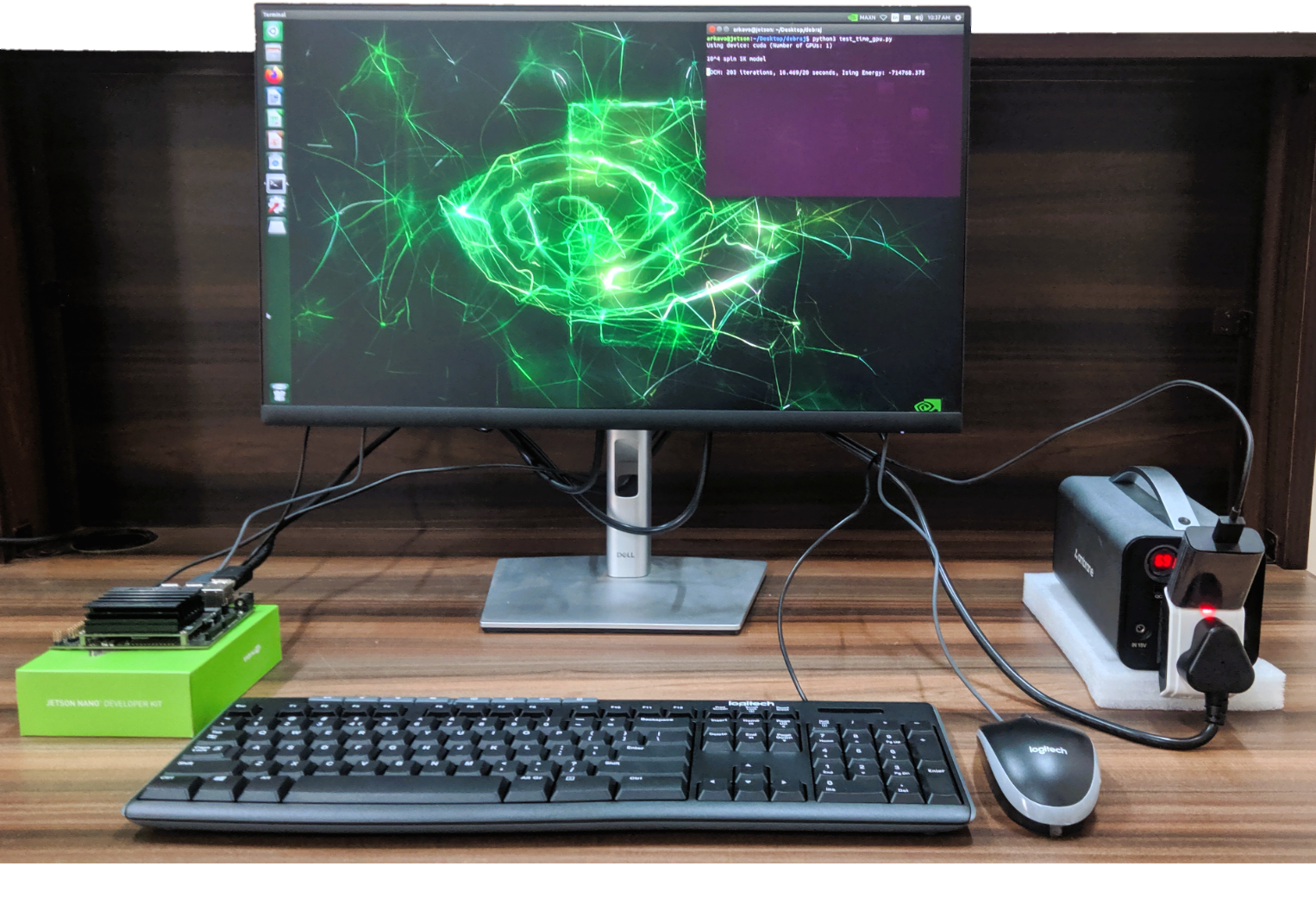}
    \caption{\textbf{Battery-powered edge computing setup based on the NVIDIA Jetson Nano module.} The system includes a display, keyboard, and mouse interfaced with the Jetson board, powered through a $60,000$ mAh portable power system.}
    \label{fig:battery_jetson_nano_setup}
\end{figure}

\clearpage

\subsection*{Supplementary Note 1: Reduction to the Homogeneous Model}

We can represent any $n$-spin Ising model with an external field as an $(n+1)$-spin Ising model without an external field. Consider the Ising model with an external field:
\begin{equation}\label{eq:h_field_ising}
 \cE(\s)= -\frac{1}{2}\s^\top \J \s- \h^\top \! \s \qquad \big(\s \in \{-1,1\}^n \big),
\end{equation} 
where \(\J\) is the coupling matrix and \(\h\) is the external field vector. Now, define the coupling matrix and energy function
\begin{equation}
\label{eq:zero_field_ising}
\hat{\J} =
 \begin{pmatrix}
        \J & \h\\
        \h^\top  & 0
    \end{pmatrix}, \qquad 
    \hat{\cE}(\boldsymbol{\sigma}) 
 = -\frac{1}{2} \boldsymbol{\sigma}^\top \hat{\J} \boldsymbol{\sigma} \qquad \big(\boldsymbol{\sigma} \in \{-1,1\}^{n+1}\big).
\end{equation}

\begin{proposition}
\label{prop:h_field_zero_field}
Suppose $\boldsymbol{\sigma}^* = (\s_0, t_0)$ is the ground state of \eqref{eq:zero_field_ising}, where $\s_0 \in \{-1, +1\}^n$ and $t_0 \in \{-1,1\}$. Then $\s^* = t_0  \s_0 \in \{-1, +1\}^n$ is the ground state of \eqref{eq:h_field_ising}.
\end{proposition}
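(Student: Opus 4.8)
The plan is to exploit the key algebraic identity relating the energy of the homogeneous $(n+1)$-spin model to the energy of the original model with external field. First I would expand $\hat{\cE}(\boldsymbol{\sigma})$ for a general $\boldsymbol{\sigma} = (\s, t)$ with $\s \in \{-1,1\}^n$ and $t \in \{-1,1\}$. Using the block structure of $\hat{\J}$, we get
\begin{equation*}
\hat{\cE}(\s, t) = -\frac{1}{2}\s^\top \J \s - t\, \h^\top \s = \cE(t\s) + \big(\cE(\s) - \cE(t\s)\big),
\end{equation*}
but more cleanly, since $\s^\top \J \s = (t\s)^\top \J (t\s)$ (the $t^2 = 1$ cancels), we obtain the exact relation $\hat{\cE}(\s,t) = \cE(t\s)$, where $\cE$ is the energy of the field model in \eqref{eq:h_field_ising}.

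The next step is to turn this identity into the claimed correspondence of ground states. Given the identity, as $(\s, t)$ ranges over $\{-1,1\}^{n+1}$, the value $\hat\cE(\s,t)$ equals $\cE(t\s)$, and the product $t\s$ ranges over all of $\{-1,1\}^n$ (for any target $\r \in \{-1,1\}^n$, pick $t = 1, \s = \r$). Hence $\min_{\boldsymbol{\sigma}} \hat\cE(\boldsymbol{\sigma}) = \min_{\r} \cE(\r)$, and if $\boldsymbol{\sigma}^* = (\s_0, t_0)$ attains the minimum on the left, then $\r^* = t_0 \s_0$ attains the minimum on the right, i.e.\ $\s^* = t_0\s_0$ is a ground state of \eqref{eq:h_field_ising}. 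I would also note the converse direction for completeness (any ground state $\s^*$ of the field model lifts to the ground state $(\s^*, 1)$ of the homogeneous model), which justifies the reduction claimed in the main text.

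There is essentially no hard part here — the proof is a one-line computation plus an elementary surjectivity argument. The only thing requiring a modicum of care is making sure the $t^2=1$ cancellation in the quadratic block term is stated explicitly (so the reader sees why the $\s^\top\J\s$ term is unaffected by the sign flip $\s \mapsto t\s$), and being careful that $\hat{J}_{n+1,n+1} = 0$ so there is no spurious $t^2$ diagonal contribution. I would write the identity $\hat\cE(\s,t) = \cE(t\s)$ as the centerpiece, display the expansion of the quadratic form once, and then state the minimization conclusion in a sentence or two.
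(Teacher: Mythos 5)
Your proof is correct and rests on exactly the same identity $\hat{\cE}((\s,t)) = \cE(t\s)$ that the paper uses; your surjectivity argument (take $t=1$, $\s=\r$) is just a rephrasing of the paper's step of comparing against $\hat{\cE}((\s,1))$. The only blemish is the garbled intermediate expression $\cE(t\s) + \big(\cE(\s)-\cE(t\s)\big)$, which is not equal to $-\frac{1}{2}\s^\top\J\s - t\,\h^\top\s$ when $t=-1$, but you immediately discard it in favour of the correct identity, so the argument as a whole stands.
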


\begin{proof}
Let $\s \in \{-1,1\}^n$ and $t \in \{-1,1\}$. From \eqref{eq:h_field_ising} and \eqref{eq:zero_field_ising}, 
\begin{equation}
\label{eq:equiv}
 \hat{\cE}\left((\s,t) \right) = \cE(t \s).
\end{equation}
On the other hand, by the definition of $\boldsymbol{\sigma}^*$, 
\begin{equation}
\label{eq:comp}
\hat{\cE}(\boldsymbol{\sigma}^*) = \hat{\cE}\left((\s_0, t_0)\right) \leqslant   \hat{\cE}\left((\s, t)\right).
\end{equation}
Comparing \eqref{eq:equiv} and \eqref{eq:comp}, we obtain 
\begin{equation*}
\cE(\s^*)=  \cE(t_0 \s_0) = \hat{\cE}(\left(\s_0, t_0\right) )\leqslant   \hat{\cE}(\left(\s, 1)\right) =  \cE(\s)
\end{equation*}
Since this holds for any $\s \in \{-1,1\}^n$, it follows that $\s^*$ is a ground state of \eqref{eq:h_field_ising}.
\end{proof}

\subsection*{Supplementary Note 2: Mathematical Results}

We establish several useful mathematical properties of the Hamiltonian underlying our optimization model. Some of these properties will be useful for the convergence analysis of our iterative solver. Recall that we work with the homogeneous Ising problem:
\begin{equation}
\label{eq:IsingE}
    \min_{\s \in\{-1,+1\}^n} \  \cE(\s) = -\frac{1}{2}\s^\top \J \s.
\end{equation}
We replace the binary vector \(\s\) with \(\x \in \Re^n\) and consider the corresponding relaxed energy
\begin{equation}
\label{eq:E}
 \cE(\x) = -\frac{1}{2}\x^\top \J \x.
\end{equation}
To bias the components of \(\x\) toward binary values, we introduce the attractor 
\begin{equation}
\label{eq:A}
        \cA(\x) = \frac{\beta}{4}(x_1^4 + \dots + x_n^4) - \frac{\alpha}{2}(x_1^2 + \dots + x_n^2).
\end{equation}
\begin{proposition}
\label{prop:global_minima_A}
The global minimizers of \(\cA\) are exactly the points $\{-\lambda,+\lambda\}^n,\, \lambda = \sqrt{\alpha/\beta}$.
\end{proposition}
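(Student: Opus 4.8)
The plan is to exploit the fact that $\cA$ is \emph{separable}: it can be written as $\cA(\x) = \sum_{i=1}^n \psi(x_i)$ where $\psi(t) = \tfrac{\beta}{4} t^4 - \tfrac{\alpha}{2} t^2$ is a one-variable function. Since the sum of functions in disjoint variables is minimized exactly when each summand is minimized, the global minimizers of $\cA$ are precisely the points $(x_1,\dots,x_n)$ with each $x_i$ a global minimizer of $\psi$. So the entire problem reduces to the scalar claim: the global minimizers of $\psi(t) = \tfrac{\beta}{4} t^4 - \tfrac{\alpha}{2} t^2$ are exactly $t = \pm\lambda$ with $\lambda = \sqrt{\alpha/\beta}$.

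First I would analyze $\psi$ as a one-dimensional calculus exercise. Compute $\psi'(t) = \beta t^3 - \alpha t = \beta t\,(t^2 - \alpha/\beta)$, so the critical points are $t = 0$ and $t = \pm\sqrt{\alpha/\beta} = \pm\lambda$. Evaluating, $\psi(0) = 0$ while $\psi(\pm\lambda) = \tfrac{\beta}{4}\lambda^4 - \tfrac{\alpha}{2}\lambda^2 = -\tfrac{\alpha^2}{4\beta} < 0$ (using $\lambda^2 = \alpha/\beta$). Since $\psi$ is a quartic with positive leading coefficient, it is coercive (tends to $+\infty$ as $|t|\to\infty$) and continuous, hence attains its global minimum; the minimum must occur at a critical point, and among the three critical points the value $-\tfrac{\alpha^2}{4\beta}$ at $t = \pm\lambda$ is strictly smaller than the value $0$ at $t=0$. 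Therefore the global minimizers of $\psi$ are exactly $\{-\lambda, +\lambda\}$, and the global minimum value is $-\alpha^2/(4\beta)$.

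Then I would assemble the multivariate conclusion. Because $\cA(\x) = \sum_{i=1}^n \psi(x_i)$ and the $x_i$ range independently over $\Re$, we have $\min_{\x\in\Re^n}\cA(\x) = \sum_{i=1}^n \min_{t\in\Re}\psi(t) = -n\alpha^2/(4\beta)$, and $\x$ achieves this value if and only if $\psi(x_i) = -\alpha^2/(4\beta)$ for every $i$, i.e. if and only if $x_i \in \{-\lambda,+\lambda\}$ for every $i$. This is exactly the set $\{-\lambda,+\lambda\}^n$.

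There is no real obstacle here — the argument is elementary once separability is noticed. The only point requiring a sentence of care is the justification that the global minimum is attained and is located at a critical point: this follows from continuity plus coercivity of the quartic $\psi$ (so a global minimum exists) together with differentiability on all of $\Re$ (so any interior extremum satisfies $\psi'=0$, and every point is interior). An alternative, fully self-contained route that avoids invoking coercivity is to complete the square / factor directly: $\psi(t) = \tfrac{\beta}{4}(t^2 - \lambda^2)^2 - \tfrac{\alpha^2}{4\beta}$, which is manifestly $\geq -\alpha^2/(4\beta)$ with equality iff $t^2 = \lambda^2$, i.e. $t = \pm\lambda$. I would likely present this factored form as the cleanest proof, then sum over coordinates.
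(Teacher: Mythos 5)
Your proof is correct, and it takes a cleaner route than the paper's. The paper works directly with the multivariate gradient and Hessian of $\cA$: it classifies all critical points as $\{-\lambda,0,\lambda\}^n$, uses the (diagonal) Hessian to show that a critical point is a local minimizer exactly when no coordinate is zero, and then concludes global minimality from the observation that $\cA$ takes the same value at all of these local minimizers --- a final step that tacitly relies on a global minimizer existing in the first place (i.e.\ on coercivity, which the paper only establishes separately for $\cH$). You instead make the separability $\cA(\x)=\sum_i \psi(x_i)$ explicit and reduce to the scalar problem; your preferred form $\psi(t)=\tfrac{\beta}{4}(t^2-\lambda^2)^2-\tfrac{\alpha^2}{4\beta}$ gives a direct global lower bound with equality precisely at $t=\pm\lambda$, so no critical-point classification, no second-order test, and no existence argument is needed. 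The two proofs exploit the same underlying structure (the paper's diagonal Hessian \emph{is} the separability), but your sum-of-squares version is more elementary and closes the small logical gap in the paper's ``equal values at all local minimizers'' step. One minor slip in your write-up: the chain $\min_{\x}\cA(\x)=\sum_i\min_t\psi(t)$ with equality characterization is exactly right, but the phrase ``functions in disjoint variables is minimized exactly when each summand is minimized'' deserves the one-line justification you in effect give (each summand is bounded below by its own minimum, and the bounds are simultaneously attainable because the variables are independent).
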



\begin{proof}
The gradient of the attractor $\nabla \! \cA (\x) \in \Re^n$ has components
\begin{equation}
\label{eq:gradH}
\nabla \! \cA(\x)_i =  \beta x_i^3 - \alpha x_i \qquad (i=1,\ldots,n),
\end{equation}
and its Hessian $\nabla^2 \! \cA(\x) \in \Re^{n \times n}$ is a diagonal matrix with 
\begin{equation}
\label{eq:hessianH}
\nabla^2\! \cA(\x)_{ii} = 3\beta x_i^2 - \alpha  \qquad (i=1,\ldots,n).
\end{equation}
We can conclude from~\eqref{eq:gradH} that the critical points of $\cA$ are exactly the points $\{-\lambda, 0, \lambda\}^n$, where  $\lambda = \sqrt{\alpha/\beta}$. This is because $\nabla \! \cA(\x)= \0$ if and only if each $x_i \in \{-\lambda,0, \lambda\}$. Now, a sufficient condition for a critical point $\x$ to be a local minimizer is that $\nabla^2\! \cA(\x)_{ii} > 0$ for all $i=1,\ldots,n$. It follows from~\eqref{eq:hessianH} that this condition holds if each $x_i \in \{-\lambda, \lambda\}$. On the other hand, if $x_i =0$ for some $i=1,\ldots,n$, then $\nabla^2\! \cA(\x)_{ii} < 0$, so that $\x$ cannot be a local minimizer. Thus, the local minimizer of $\cA$ are exactly the points $\{-\lambda,+\lambda\}^n$. Finally, observe that \(\cA\) attains the same value at each of these points. Therefore, these points are exactly the global minimizers of \(\cA\).
\end{proof}

In other words, the attractor penalizes deviations from $\pm \sqrt{\alpha/\beta}$. This property is useful because it forces the relaxed continuous variables toward values that closely mimic the original binary spins, thereby guiding the optimization process toward candidate ground states of the Ising model.

\vspace{1em}

We form the Hamiltonian by combining the Ising energy \(\cE\) with the attractor \(\cA\):
\begin{equation}
\label{eq:Hamiltonian}
\cH = \cA + \cE.
\end{equation}
which forms the basis for our proposed Ising solver. The key observation behind our algorithm is that we can write~\eqref{eq:Hamiltonian} as a difference of convex functions:
\begin{equation*}
\cH = f-g,
\end{equation*}
where 
   \begin{equation}
   \label{eq:fandg}
        f(\x) = \frac{\beta}{4}\left(x_1^4 + \dots + x_n^4 \right) 
        \quad \mbox{and} \quad g(\x) = \frac{1}{2}\x^\top \! (\J + \alpha \I) \x.
    \end{equation}

\begin{proposition}
\label{prop:convexity_fg}
The function $f$ is convex for all \(\beta > 0\), and there exists $\alpha_0$ such that $g$ is strongly convex  for all \(\alpha > \alpha_0\).
\end{proposition}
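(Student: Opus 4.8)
The plan is to prove the two claims separately, both by reducing to the Hessian of the respective function and invoking standard characterizations of convexity for twice-differentiable functions.

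First I would handle $f$. Since $f(\x) = \tfrac{\beta}{4}\sum_{i=1}^n x_i^4$ is separable, its Hessian $\nabla^2 f(\x)$ is the diagonal matrix with entries $3\beta x_i^2 \geqslant 0$ for all $\x \in \Re^n$ (using $\beta > 0$). Hence $\nabla^2 f(\x) \succeq \0$ everywhere, which gives convexity of $f$ on $\Re^n$. I would state this as a one-line verification.

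Next I would handle $g(\x) = \tfrac{1}{2}\x^\top(\J + \alpha\I)\x$. Being a quadratic form, its Hessian is constant, equal to $\J + \alpha\I$. Strong convexity with modulus $m > 0$ is equivalent to $\J + \alpha\I \succeq m\I$, i.e.\ $\lambda_{\min}(\J + \alpha\I) \geqslant m > 0$. Since $\J$ is symmetric it has a real smallest eigenvalue $\lambda_{\min}(\J)$, and $\lambda_{\min}(\J + \alpha\I) = \lambda_{\min}(\J) + \alpha$. Therefore, choosing $\alpha_0 = -\lambda_{\min}(\J)$ (equivalently $\alpha_0 = \lambda_{\max}(-\J)$), we get that for every $\alpha > \alpha_0$ the quantity $\lambda_{\min}(\J) + \alpha > 0$, so $g$ is strongly convex with modulus $m = \lambda_{\min}(\J) + \alpha$. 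I would also remark, as the surrounding text does, that $\lambda_{\min}(\J) < 0$ because $\J$ has zero diagonal (hence zero trace) and is not the zero matrix, so its eigenvalues sum to zero and at least one is strictly negative; this makes $\alpha_0 > 0$, though strictly speaking the statement only needs existence of $\alpha_0$.

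There is no real obstacle here — the argument is essentially the spectral characterization of positive (semi)definiteness of symmetric matrices. The only point requiring a word of care is making sure the right notion of ``strongly convex'' is used (modulus $m$ tied to $\lambda_{\min}(\J + \alpha\I)$) and that the eigenvalue shift identity $\lambda_{\min}(\J + \alpha\I) = \lambda_{\min}(\J) + \alpha$ is invoked cleanly; everything else is immediate from $f$ and $g$ being, respectively, a nonnegative separable quartic and a quadratic form.
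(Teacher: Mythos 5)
Your proposal is correct and follows essentially the same route as the paper: compute the diagonal Hessian $3\beta\,\mathrm{diag}(x_1^2,\ldots,x_n^2)$ for $f$, and reduce strong convexity of $g$ to $\lambda_{\min}(\J+\alpha\I)>0$ with $\alpha_0=\lambda_{\max}(-\J)$. Your extra care in identifying the strong-convexity modulus and in justifying $\lambda_{\min}(\J)<0$ via the zero trace is a slight refinement but not a different argument.
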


\begin{proof}
Both $f$ and $g$ are twice differentiable, and their Hessians are
\begin{equation*}
\nabla^2\! f(\x) =
 3\beta \begin{pmatrix}
 x_1^2 & & \\
& \ddots & \\
& &  x_n^2
\end{pmatrix}  
\quad \text{and} \quad 
\nabla^2\! g(\x) = \J + \alpha \I.
\end{equation*}
Clearly, \(\nabla^2 \! f(\x)\) is positive semidefinite for all $\beta > 0$, which establishes the convexity of $f$. 

On the other hand, \(\nabla^2 \!g(\x)\) is positive definite if and only if $\lambda_{\min}(\J + \alpha\I) > 0$. As $\J$ is symmetric, it has real eigenvalues. Moreover, since $\J_{ii}=0$ for all $i$, one of its eigenvalues must be $<0$. In particular, if we define $\alpha_0= \lambda_{\max}(-\J)=-\lambda_{\min}(\J)$, then $\alpha_0 > 0$, and $\lambda_{\min}(\J + \alpha\I) > 0$ for all $\alpha > \alpha_0$. This establishes the strong convexity of $g$.
\end{proof}

\begin{proposition}\label{prop:coercive}
The Hamiltonian \(\cH\) is coercive for all \(\beta > 0\), that is, \(\cH(\x)\to+\infty\) as \(\|\x\|_2\to \infty\). 
\end{proposition}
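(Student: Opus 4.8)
The plan is to show that $\cH(\x) \to +\infty$ by bounding $\cH$ below by a function that is manifestly coercive. The key point is that the quartic term $f(\x) = \tfrac{\beta}{4}\sum_i x_i^4$ dominates all the quadratic terms (coming from both $-\tfrac{\alpha}{2}\|\x\|_2^2$ and $-\tfrac12 \x^\top \J \x$) at infinity. First I would collect the quadratic part: write $\cH(\x) = \tfrac{\beta}{4}\sum_{i=1}^n x_i^4 - \tfrac12 \x^\top(\J + \alpha\I)\x$, and bound the quadratic form from above by $\tfrac12 \lambda_{\max}(\J + \alpha \I)\,\|\x\|_2^2 =: \tfrac{C}{2}\|\x\|_2^2$, where $C$ is a finite constant depending on $\J$ and $\alpha$ (note $C$ may be positive, zero, or negative — it does not matter).

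Next I would relate $\sum_i x_i^4$ to $\|\x\|_2^2$. By the power-mean (or Cauchy--Schwarz) inequality, $\sum_{i=1}^n x_i^4 \geq \tfrac1n\big(\sum_{i=1}^n x_i^2\big)^2 = \tfrac1n \|\x\|_2^4$. Hence
\begin{equation*}
\cH(\x) \;\geq\; \frac{\beta}{4n}\,\|\x\|_2^4 \;-\; \frac{C}{2}\,\|\x\|_2^2.
\end{equation*}
Writing $r = \|\x\|_2$, the right-hand side is the univariate polynomial $\tfrac{\beta}{4n} r^4 - \tfrac{C}{2} r^2$, whose leading coefficient $\tfrac{\beta}{4n}$ is strictly positive for every $\beta > 0$. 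Therefore this lower bound tends to $+\infty$ as $r \to \infty$, and consequently $\cH(\x) \to +\infty$ as $\|\x\|_2 \to \infty$, which is precisely coercivity. (As an immediate byproduct, since $\cH$ is continuous and coercive, it attains a global minimum on $\Re^n$, though that is Theorem S1 rather than this proposition.)

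There is essentially no hard step here; the only thing requiring a moment's care is making sure the argument works regardless of the sign of $C$, i.e., not assuming $\J + \alpha\I$ is positive semidefinite — the quartic growth overwhelms an arbitrary quadratic. One could alternatively avoid the $\lambda_{\max}$ bound and argue componentwise, grouping $\tfrac{\beta}{4}x_i^4 - (\text{linear combination of }x_j^2)$ terms, but the norm-based bound above is cleanest. I would present it in the two displayed inequalities above plus one sentence invoking that a quartic with positive leading coefficient diverges.
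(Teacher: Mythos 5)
Your proposal is correct and follows essentially the same route as the paper's proof: bound the quadratic form by $\tfrac12\lambda_{\max}(\J+\alpha\I)\|\x\|_2^2$ and dominate it with the quartic via a power-mean/Cauchy--Schwarz lower bound on $\sum_i x_i^4$ (you use the sharper constant $1/n$ where the paper uses $1/n^2$, which is immaterial). Your remark that the sign of the constant $C$ does not matter is a correct and slightly cleaner observation than the paper's, which notes $c>0$ without needing it.
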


\begin{proof}
We have
\begin{equation*}
 \sum_{i=1}^{n}x_{i}^{4} \geq \frac{1}{n^{2}}\Bigl(\sum_{i=1}^{n}x_{i}^{2}\Bigr)^{2} = \frac{1}{n^{2}}\|\x\|_2^{4},
\end{equation*}
where $\| \x\|_2$ is the standard Euclidean norm of $\x$. This gives us the lower bound
\begin{equation*}
\label{eq:f_lower_bound}
f(\x)
 = \frac{\beta}{4}  \sum_{i=1}^{n}x_{i}^{4} 
 \geq \frac{\beta}{4n^{2}}\|\x\|_2^{4}.
\end{equation*}
On the other hand, we have the upper bound
\begin{equation*}
\label{eq:g_upper_bound}
g(\x) = \frac{1}{2}\x^{\top}\!(\J + \alpha \I)\x  \leq \frac{1}{2}  c \|\x\|_2^{2}, 
\end{equation*}
where $c=\lambda_{\max}(\J+\alpha \I)>0$.
Combining the above bounds, we have
\begin{equation*}
\cH(\x) = f(\x)-g(\x) \geq \frac{\beta}{4n^{2}}\|\x\|_2^{4}
 - \frac{1}{2} c\|\x\|_2^{2} = \frac{1}{4 n^2} \|\x\|_2^{2} \, \Bigl(\beta \|\x\|_2^{2}-2n^2 c\Bigr).
\end{equation*}
As $\beta > 0$, the right side goes to $+\infty$ as $ \|\x\|_2 \to \infty$. Hence, \(\cH\) is coercive.
\end{proof}

\begin{theorem}\label{thm:existence_minimizer}
The Hamiltonian $\cH$ is bounded below on \(\Re^{n}\) and has a minimizer.
\end{theorem}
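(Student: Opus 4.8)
The plan is to obtain this as an immediate consequence of the coercivity of $\cH$ proved in Proposition~\ref{prop:coercive}, combined with the trivial observation that $\cH$ is continuous on $\Re^n$ (it is a polynomial in $x_1,\ldots,x_n$). The underlying principle is the classical Weierstrass-type fact that a coercive continuous function on $\Re^n$ is bounded below and attains its infimum; the work is just to package this for our $\cH$.

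Concretely, first I would fix the reference point $\x_0 = \0$ and note $\cH(\0) = 0$. Invoking coercivity, there is $R > 0$ such that $\cH(\x) > 0$ whenever $\|\x\|_2 > R$. Hence every point outside the closed Euclidean ball $\bar{B} = \{\x \in \Re^n : \|\x\|_2 \leq R\}$ has $\cH$-value strictly larger than $\cH(\x_0)$ with $\x_0 \in \bar B$, so
\begin{equation*}
\inf_{\x \in \Re^n} \cH(\x) = \inf_{\x \in \bar{B}} \cH(\x).
\end{equation*}

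Next, since $\bar B$ is compact and nonempty and $\cH$ is continuous, the extreme value theorem yields a point $\x^* \in \bar B$ with $\cH(\x^*) = \min_{\x \in \bar B} \cH(\x)$. By the reduction above, $\cH(\x^*) \leq \cH(\x)$ for all $\x \in \Re^n$, so $\x^*$ is a global minimizer and $\cH$ is bounded below by $\cH(\x^*) > -\infty$. (Equivalently, one could phrase it via sublevel sets: the set $\{\x : \cH(\x) \leq 0\}$ is closed by continuity, bounded by coercivity, and nonempty since it contains $\0$; hence it is compact and $\cH$ attains its minimum over it, which is then a global minimum.)

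Since the two analytic ingredients — continuity and coercivity — are already established, there is no genuine obstacle; the only point requiring a moment's care is to use coercivity first to confine the search to a bounded, hence compact, region before applying compactness, so that the argument does not circularly assume a minimizer exists.
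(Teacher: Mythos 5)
Your proposal is correct and follows essentially the same route as the paper, which simply cites continuity and coercivity together with the Weierstrass extreme value theorem; you have merely spelled out the standard reduction to a compact ball that the paper leaves implicit. No issues.
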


\begin{proof}
This follows from the Weierstrass extreme value theorem~\cite{rudin1987real}, since $\cH$ is continuous and coercive.
\end{proof}

\begin{proposition}\label{thm:scaled_ising}
Suppose $\x^*$ is a minimizer of $\cH$ and it lies on a vertex of the scaled hypercube $\{-\lambda, \lambda\}^n$, where $\lambda > 0$. Then $\mathrm{sign}(\x^*)$ is a minimizer of~\eqref{eq:IsingE}.
\end{proposition}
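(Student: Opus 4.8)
The plan is to exploit the homogeneity (degree-$4$ and degree-$2$ scaling) of the two pieces $f$ and $g$ of the Hamiltonian, together with the fact that on any fixed vertex direction $\s \in \{-1,1\}^n$ the restriction $\cH(t\s)$ is a one-variable quartic whose minimizing radius is completely determined by the Ising energy $\cE(\s) = -\tfrac12 \s^\top \J \s$. The key point to extract is that the minimum of $\cH$ over the scaled hypercube $\{-\lambda,\lambda\}^n$ (over all $\lambda>0$ and all sign patterns) is a \emph{strictly decreasing} function of $\cE(\s)$, so that a global minimizer of $\cH$ which happens to sit at a vertex must sit at a vertex whose sign pattern minimizes $\cE$.

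First I would reduce to radial lines: for $\s\in\{-1,1\}^n$ and $t\in\Re$, write $\x = t\s$ and compute
\begin{equation*}
\cH(t\s) = f(t\s) - g(t\s) = \frac{\beta}{4}\, n\, t^4 - \frac{1}{2} t^2 \s^\top(\J+\alpha\I)\s = \frac{\beta n}{4} t^4 - \Big(\frac{\alpha n}{2} - \cE(\s)\Big) t^2,
\end{equation*}
using $\|\s\|_2^2 = n$, $\sum_i s_i^4 = n$, and $\tfrac12\s^\top\J\s = -\cE(\s)$. Set $c(\s) = \tfrac{\alpha n}{2} - \cE(\s)$. Since $\alpha > \alpha_0 = \lambda_{\max}(-\J) \geq |\cE(\s)|\cdot \tfrac{2}{n}$ can be arranged (or simply because $\cE(\s) = -\tfrac12\s^\top\J\s \leq \tfrac12\lambda_{\max}(-\J)n = \tfrac{\alpha_0 n}{2} < \tfrac{\alpha n}{2}$), we have $c(\s) > 0$. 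Minimizing the quartic $q(t) = \tfrac{\beta n}{4}t^4 - c(\s)t^2$ over $t\in\Re$ gives $t_*^2 = c(\s)/(\beta n)$ and minimum value $q(t_*) = -c(\s)^2/(\beta n)$.

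Next I would assemble the comparison. Suppose $\x^*$ is a global minimizer of $\cH$ lying on some vertex $\lambda\s$ with $\s\in\{-1,1\}^n$, $\lambda>0$. On one hand $\cH(\x^*) = \cH(\lambda\s) \geq \min_t \cH(t\s) = -c(\s)^2/(\beta n)$. On the other hand, for \emph{any} other sign pattern $\s'\in\{-1,1\}^n$ and the optimal radius $t_*' $ along its direction, global minimality of $\x^*$ gives $\cH(\x^*) \leq \cH(t_*'\s') = -c(\s')^2/(\beta n)$. Combining, $c(\s')^2 \leq c(\s)^2$ for all $\s'$; since all $c(\cdot)$ are positive this forces $c(\s') \leq c(\s)$, i.e. $\cE(\s) \leq \cE(\s')$ for every $\s'\in\{-1,1\}^n$. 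Hence $\s$ is a ground state of~\eqref{eq:IsingE}. Finally, because $\x^* = \lambda\s$ with $\lambda>0$, we have $\mathrm{sign}(\x^*) = \s$, which completes the argument.

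The only genuine subtlety — the ``hard part'' — is the sign bookkeeping at the two ends: first, checking $c(\s)>0$ so that the quartic really does have its minimum at a nonzero radius (otherwise $t_*=0$ and the vertex couldn't be a minimizer unless $\lambda$ were forced to $0$, contradicting $\lambda>0$); and second, ensuring $\lambda$ in the hypothesis actually equals the optimal radius $t_*$ along direction $\s$ — but this is automatic, since $\x^*$ is a global minimizer of $\cH$ and in particular minimizes $\cH$ along the line $\{t\s : t\in\Re\}$, pinning $\lambda^2 = c(\s)/(\beta n)$. No appeal to Proposition~\ref{prop:global_minima_A} or the DCP structure is needed beyond the coercivity/existence already established in Theorem~\ref{thm:existence_minimizer}; the entire proof is the one-variable quartic computation plus a monotonicity comparison.
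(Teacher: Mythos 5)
Your proof is correct, but it takes a genuinely different route from the paper's. The paper's argument is a single two-point comparison at the \emph{same} radius: since $\x^* = \lambda\,\mathrm{sign}(\x^*)$, it plugs $\x = \lambda\s^*$ (the true ground state scaled to the same hypercube) into the global optimality inequality $\cH(\x^*)\leq\cH(\x)$, observes that $\cA$ is invariant under sign flips so $\cA(\lambda\,\mathrm{sign}(\x^*)) = \cA(\lambda\s^*)$ cancels, and uses the homogeneity $\cE(\lambda\s)=\lambda^2\cE(\s)$ to divide out $\lambda^2>0$. That version needs no condition on $\alpha$ or $\beta$ and no computation of the optimal radius. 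You instead compute the full radial profile $\cH(t\s)=\tfrac{\beta n}{4}t^4 - c(\s)t^2$ with $c(\s)=\tfrac{\alpha n}{2}-\cE(\s)$, minimize over $t$, and compare the per-direction minima $-c(\s)^2/(\beta n)$; this buys you extra information (the optimal radius $\lambda^2 = c(\s)/(\beta n)$ and the explicit minimal value), but at the cost of the positivity bookkeeping: the identity $\min_t \cH(t\s') = -c(\s')^2/(\beta n)$ and the implication $c(\s')^2\leq c(\s)^2 \Rightarrow c(\s')\leq c(\s)$ both require $c(\s')>0$ for \emph{every} comparison direction $\s'$, which you secure only by importing the hypothesis $\alpha>\alpha_0=\lambda_{\max}(-\J)$. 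That hypothesis is not part of the proposition statement (though it does hold in the paper's DCP regime); if you wanted your argument to match the stated generality you would need the small extra case analysis that $c(\s')\leq 0$ forces $\min_t\cH(t\s')=0$ while $c(\s)>0$ is already forced by $\lambda>0$, so the conclusion $c(\s')\leq c(\s)$ still holds. The paper's cancellation trick sidesteps all of this.
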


\begin{proof}
Let $\s^* \in \{-1,1\}^n$ be a minimizer of of \eqref{eq:IsingE}. By optimality, $\cE(\s^*) \leqslant \cE(\mathrm{sign}(\x^*))$. Hence, we just have to show that 
\begin{equation}
\label{eq:claim}
\cE(\mathrm{sign}(\x^*)) \leqslant \cE(\s^*).
\end{equation}
By assumption, for all $\x \in \Re^n$,
\begin{equation}
\label{eq:optH}
\cA(\x^*) + \cE(\x^*)=\cH(\x^*) \leqslant \cH(\x)=  \cA(\x) + \cE(\x).
\end{equation}
{Since $\x^* \in \{-\lambda, +\lambda\}^n$, we have $\x^* = \lambda \, \mathrm{sign}(\x^*)$. Setting $\x^* = \lambda \, \mathrm{sign}(\x^*)$ and $\x = \lambda \s^*$ in~\eqref{eq:optH} gives}
\begin{equation}
\label{eq:temp}
\cA(\lambda \, \mathrm{sign}(\x^*)) + \cE(\lambda \, \mathrm{sign}(\x^*))\leqslant  \cA(\lambda \s^*) + \cE(\lambda \s^*).
\end{equation}
{From~\eqref{eq:E} and ~\eqref{eq:A}, it follows that $\cE(\lambda \x) = \lambda^2 \cE(\x)$ and $\cA(\lambda\, \mathrm{sign}(\x^*)) = \cA(\lambda \s^*)$. Substituting these into~\eqref{eq:temp}, we get~\eqref{eq:claim}.}
\end{proof}

\subsection*{Supplementary Note 3: Convergence Analysis of DOCH}

We present a self-contained convergence analysis of our Ising solver DOCH. Recall that DOCH iteratively minimizes the Hamiltonian $\cH=f-g$ using the DCA algorithm. Following Proposition~\ref{prop:convexity_fg}, we assume that $g$ is strongly convex, and 
\begin{equation}
\label{eq:mu}
\mu=\lambda_{\min}(\J + \alpha\I) > 0.
\end{equation}

\begin{proposition}
\label{prop:monotone_descent_H} 
Let \(\{\x^{(k)}\}\) be the iterates generated by DOCH. Then
\begin{equation}
\label{eq:descent_ineq}
\cH\bigl(\x^{(k)}\bigr) - \cH\big(\x^{(k+1)}\big) \geq \frac{\mu}{2}\, \|\x^{(k+1)}-\x^{(k)}\|^{2}.
\end{equation}
\end{proposition}

\begin{proof}
By construction, \(\x^{(k+1)}\) is the minimizer of the function 
\begin{equation}
\label{eq:defxk}
F(\x) = f(\x) - \big(g(\x^{k}) + \nabla \! g(\x^{(k)})^\top \!(\x-\x^{(k)}) \big).
\end{equation}
Therefore,
\begin{equation*}
f(\x^{(k+1)}) - g(\x^{k}) - \nabla \! g(\x^{(k)})^\top \!(\x^{(k+1)}-\x^{(k)})  = F(\x^{(k+1)}) \leqslant F(\x^{(k)}) = f(\x^{(k)}) -  g(\x^{k}),
\end{equation*}
giving us
\begin{equation}
\label{eq:delf}
f(\x^{(k)}) - f(\x^{(k+1)})  \geqslant  \nabla \! g(\x^{(k)})^\top \!(\x^{(k)}-\x^{(k+1)}).
\end{equation}
As $\cH= f- g$, we have from~\eqref{eq:delf} that
\begin{align}
\label{eq:right}
\cH\bigl(\x^{(k)}\bigr) - \cH\big(\x^{(k+1)}\big) &= f(\x^{(k)}) - g(\x^{(k)}) - \big(f(\x^{(k+1)}) - g(\x^{(k+1)}) \big) \nonumber \\
& \geqslant  g(\x^{(k+1)}) -  g(\x^{(k)})  - \nabla \! g(\x^{(k)})^\top \!(\x^{(k+1)}-\x^{(k)}).
\end{align}
Substituting the formula of $g$ and $\nabla \! g$  (see~\eqref{eq:fandg}) in~\eqref{eq:right}, we get
\begin{align*}
\cH\bigl(\x^{(k)}\bigr) - \cH\big(\x^{(k+1)}\big)  \geqslant  \frac{1}{2} (\x^{(k+1)}-\x^{(k)} )^\top \! (\J+\alpha \I)  (\x^{(k+1)}-\x^{(k)} ).
\end{align*}
Moreover, from~\eqref{eq:mu},
\begin{align*}
(\x^{(k+1)}-\x^{(k)} )^\top \! (\J+\alpha \I)  (\x^{(k+1)}-\x^{(k)} )  \geqslant \mu \|\x^{(k)}-\x^{(k+1)}\|^{2}.
\end{align*}
This establishes the desired result~\eqref{eq:descent_ineq}.
\end{proof}

In general, the challenging part of convergence analysis is to show that the iterates remain bounded. This is often assumed in the analysis~\cite{phan2018}. However, for DOCH, boundedness of the iterates can be established unconditionally.

\begin{proposition}\label{prop:bounded_iterates}
The DOCH iterates are bounded, and
\begin{equation}
\label{eq:step_vanish}
\lim_{k\to\infty} \ \bigl\|\x^{(k+1)}-\x^{(k)}\bigr\|_2  = 0.
\end{equation}
\end{proposition}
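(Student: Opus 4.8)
The plan is to combine the monotone descent inequality from Proposition~\ref{lemma:monotone_descent_H} with the coercivity of $\cH$ (Proposition~\ref{prop:coercive}) to first pin down boundedness, and then telescope the descent inequality to get the vanishing-step claim~\eqref{eq:step_vanish}.

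First I would establish boundedness. By Proposition~\ref{lemma:monotone_descent_H}, since $\mu > 0$, we have $\cH(\x^{(k+1)}) \leqslant \cH(\x^{(k)})$ for every $k$, so the whole sequence lies in the sublevel set $S = \{\x \in \Re^n : \cH(\x) \leqslant \cH(\x^{(0)})\}$. Because $\cH$ is continuous and coercive (Proposition~\ref{prop:coercive}), every sublevel set of $\cH$ is closed and bounded; in particular $S$ is bounded, hence $\{\x^{(k)}\} \subseteq S$ is bounded. (Coercivity gives boundedness: if $S$ were unbounded there would be a sequence in $S$ with norm tending to infinity, forcing $\cH \to +\infty$ along it, contradicting $\cH \leqslant \cH(\x^{(0)})$ on $S$.)

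Next I would derive~\eqref{eq:step_vanish}. Since $\cH$ is bounded below on $\Re^n$ (Theorem~\ref{thm:existence_minimizer}) and the sequence $\{\cH(\x^{(k)})\}$ is nonincreasing, it converges to some finite limit $\cH^* := \lim_{k\to\infty} \cH(\x^{(k)})$; in particular $\cH(\x^{(k)}) - \cH(\x^{(k+1)}) \to 0$. Summing the descent inequality~\eqref{eq:descent_ineq} over $k = 0, \dots, N-1$ telescopes to
\begin{equation*}
\frac{\mu}{2} \sum_{k=0}^{N-1} \bigl\|\x^{(k+1)} - \x^{(k)}\bigr\|_2^2 \;\leqslant\; \cH(\x^{(0)}) - \cH(\x^{(N)}) \;\leqslant\; \cH(\x^{(0)}) - \cH^*,
\end{equation*}
which is a finite bound independent of $N$. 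Hence $\sum_{k=0}^{\infty} \|\x^{(k+1)} - \x^{(k)}\|_2^2 < \infty$, and since $\mu > 0$ this forces the general term to vanish: $\|\x^{(k+1)} - \x^{(k)}\|_2 \to 0$, which is~\eqref{eq:step_vanish}.

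The argument is essentially routine given the tools already assembled; the only place requiring a little care is the boundedness step, where one must invoke coercivity correctly — namely that a continuous coercive function has bounded sublevel sets — rather than trying to bound the iterates directly through the fixed-point map $\cT$. I do not anticipate any genuine obstacle; the main thing is to state the coercivity-implies-bounded-sublevel-sets fact cleanly and to note that $\mu > 0$ is exactly what converts summability of squared steps into the limit~\eqref{eq:step_vanish}.
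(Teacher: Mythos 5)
Your proposal is correct and follows essentially the same route as the paper: monotonicity of $\{\cH(\x^{(k)})\}$ places the iterates in the sublevel set $\{\x : \cH(\x) \leqslant \cH(\x^{(0)})\}$, coercivity makes that set bounded, and telescoping the descent inequality with $\mu > 0$ yields summability of the squared steps and hence~\eqref{eq:step_vanish}. The only cosmetic difference is that you invoke the global lower bound on $\cH$ from Theorem~\ref{thm:existence_minimizer} to bound the telescoped sum, whereas the paper uses boundedness of $\{\cH(\x^{(k)})\}$ along the (already bounded) iterates; both are valid.
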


\begin{proof}
By Proposition \ref{prop:monotone_descent_H}, the sequence \(\{\cH(\x^{(k)})\}\) is monotone (nonincreasing). Hence, for all \(k \geqslant 1\),
\begin{equation}
\label{eq:mono}
\cH(\x^{(k)}) \leq \cH(\x^{(0)}).
\end{equation}
As $\cH$ is coercive, the sublevel set $\{\x : \cH(\x) \leq \cH(\x^{(0)})\}$ is bounded. However, by~\eqref{eq:mono}, the sequence $\{\x^{(k)}\}$ is contained in this sublevel set and is hence bounded.

On the other hand, summing the descent inequality~\eqref{eq:descent_ineq} from \(k=0\) to \(N\) gives
\begin{equation*}
\cH(\x^{(0)}) - \cH(\x^{(N+1)})= \sum_{k=0}^{N}\Bigl(\cH\bigl(\x^{(k)}\bigr)-\cH\bigl(\x^{(k+1)}\bigr)\Bigr)
 \geq \frac{\mu}{2}\sum_{k=0}^{N}\bigl\|\x^{(k)}-\x^{(k+1)}\bigr\|_2^{2}.
\end{equation*}
Since $\cH(\x^{(k)})$ is bounded, there exists $C>0$ such that $\cH(\x^{(k)}) \geqslant C$ for all $k \geqslant 1$. Hence, 
\begin{equation*}
\frac{\mu}{2}\sum_{k=0}^{N} \|\x^{(k)}-\x^{(k+1)} \|_2^{2}  \leqslant  \cH(\x^{(0)}) - \cH(\x^{(N+1)}) \leqslant \cH(\x^{(0)}) - C < +\infty .
\end{equation*}
Letting $N \to \infty$ gives
\begin{equation}
\label{eq:finite_square_sum}
\frac{\mu}{2}\sum_{k=0}^{\infty} \|\x^{(k)}-\x^{(k+1)} \|_2^{2}  <  +\infty.
\end{equation}
Consequently, we obtain~\eqref{eq:step_vanish}.
\end{proof}

Note that~\eqref{eq:finite_square_sum} by itself does not guarantee that the sequence \(\{\x^{(k)}\}\) converges. To prove convergence, we must show that \(\{\x^{(k)}\}\) is a Cauchy sequence~\cite{rudin1987real}. This will be done in Theorem~\ref{thm:DOCH_convergence}, using the fact that the Hamiltonian $\cH$ is real-analytic (being a polynomial) and therefore satisfies the Łojasiewicz gradient inequality~\cite{KL1963}. Based on the analysis so far, we can already state the following result.

\begin{proposition}\label{prop:limit_critical}
All limit points of the DOCH iterates are critical points of \(\cH\).
\end{proposition}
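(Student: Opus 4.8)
The plan is to pass to the limit in the optimality condition that defines each DOCH iterate. Recall from the derivation of the algorithm (equation~\eqref{eq:optcon}) that every iterate satisfies $\nabla\! f(\x^{(k+1)}) = \nabla\! g(\x^{(k)})$; equivalently, using the explicit gradients from~\eqref{eq:fandg}, $\beta\,(x^{(k+1)}_i)^3 = \big((\J+\alpha\I)\x^{(k)}\big)_i$ for every $i = 1,\ldots,n$. On the other hand, $\x^*$ is a critical point of $\cH = f - g$ precisely when $\nabla\! \cH(\x^*) = \0$, i.e., $\nabla\! f(\x^*) = \nabla\! g(\x^*)$. So the entire argument reduces to taking a limit in the displayed identity along a suitable subsequence and invoking continuity of the two gradient maps.

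Concretely, let $\x^*$ be a limit point of $\{\x^{(k)}\}$, so that $\x^{(k_j)} \to \x^*$ along some subsequence $\{k_j\}$. By Proposition~\ref{prop:bounded_iterates}, $\|\x^{(k_j+1)} - \x^{(k_j)}\|_2 \to 0$, and hence $\x^{(k_j+1)} \to \x^*$ as well. The map $\nabla\! f$ is a (componentwise cubic) polynomial map and $\nabla\! g(\x) = (\J+\alpha\I)\x$ is linear, so both are continuous on $\Re^n$. Letting $j \to \infty$ in $\nabla\! f(\x^{(k_j+1)}) = \nabla\! g(\x^{(k_j)})$ therefore yields $\nabla\! f(\x^*) = \nabla\! g(\x^*)$, that is, $\nabla\! \cH(\x^*) = \0$. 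This shows $\x^*$ is a critical point of $\cH$.

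I do not expect a genuine obstacle here: the result is the standard consequence of continuity together with the already-established facts that the iterates are bounded and that consecutive iterates become arbitrarily close. The one point requiring care is to invoke~\eqref{eq:step_vanish} first, so that $\x^{(k_j)}$ and $\x^{(k_j+1)}$ are guaranteed to share the same limit before passing $j \to \infty$; without this step one could only relate two a priori distinct limit points. Boundedness of the iterates (Proposition~\ref{prop:bounded_iterates}) also ensures, via Bolzano--Weierstrass, that at least one limit point exists, so the statement is not vacuous. The stronger claim that the \emph{whole} sequence converges (to a single such critical point) is not needed here and is deferred to the Lojasiewicz-based argument in Theorem~\ref{thm:DOCH_convergence}.
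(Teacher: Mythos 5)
Your proposal is correct and follows essentially the same route as the paper: both rely on the stationarity identity $\nabla\! f(\x^{(k+1)}) = \nabla\! g(\x^{(k)})$, use Proposition~\ref{prop:bounded_iterates} to conclude that $\x^{(k_j)}$ and $\x^{(k_j+1)}$ share the limit $\x^*$, and then pass to the limit by continuity of the gradient maps. The only cosmetic difference is that the paper rewrites $\nabla\cH(\x^{(k_i)})$ as $\nabla\! f(\x^{(k_i)}) - \nabla\! f(\x^{(k_i+1)})$ before taking the limit, whereas you take the limit directly in the identity $\nabla\! f(\x^{(k_j+1)}) = \nabla\! g(\x^{(k_j)})$; the substance is identical.
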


\begin{proof}
The iterates \(\{\x^{(k)}\}\) are bounded by Proposition~\ref{prop:bounded_iterates}, and hence, by the Bolzano-Weierstrass theorem~\cite{rudin1987real}, have at least one limit point, say, \(\x^* \in \Re^n\). In particular, we can extract a subsequence \(\{\x^{(k_i)}\}\) such that
\begin{equation}
\label{eq:lim1}
\lim_{i \to \infty} \ \x^{(k_i)} = \x^*.
\end{equation}
We claim that $\x^*$ is a critical point of $\cH$, i.e., $\nabla\cH(\x^{*})=\0$. Indeed, since $\x^{(k+1)}$ is a minimizer of \eqref{eq:defxk},  we have by first-order optimality that
\begin{equation*}
\0 = F(\x^{(k+1)})=\nabla \! f(\x^{(k+1)}) - \nabla \! g(\x^{(k)}).
\end{equation*}
Thus, for all $k \geqslant 1$, 
\begin{equation}
\label{eq:successive}
\nabla \! f(\x^{(k+1)}) = \nabla \! g(\x^{(k)})
\end{equation}
In particular, 
\begin{equation}
\label{eq:diffH}
\nabla  \cH (\x^{(k_i)}) = \nabla \! f(\x^{(k_i)}) - \nabla \! g(\x^{(k_i)}) = \nabla \! f(\x^{(k_i)}) - \nabla \! f(\x^{(k_i+1)}).
\end{equation}
Now, by the triangle inequality, 
\begin{equation*}
\|\x^{(k_{i+1})} -\x^* \|_2 \leqslant \|\x^{(k_i+1)} -\x^{(k_i)} \|_2+ \|\x^{(k_i)} -\x^* \|_2.
\end{equation*}
Letting $i \to \infty$ on both sides using \eqref{eq:finite_square_sum} gives
\begin{equation}
\label{eq:lim2}
\lim_{i \to \infty} \ \x^{(k_i+1)} = \x^*.
\end{equation}
Finally, letting $i \to \infty$ in~\eqref{eq:diffH}, we obtain $\nabla \cH(\x^*)=\0$ from~\eqref{eq:lim1} and~\eqref{eq:lim2}.
\end{proof}

We now use the classical Łojasiewicz gradient inequality \cite{KL1963} to show that the DOCH iterates actually converge, that is, they admit a unique limit point. For this, we need the following result.

\begin{proposition}\label{prop:H_grad_bound}
There exists $\sigma > 0$ such that  for all $k \geqslant 1$,
    \begin{equation}
    \label{eq:H_grad_bound}
        \|\nabla\cH(\x^{(k)})\|_2 \leq \sigma \|\x^{(k)} - \x^{(k-1)}\|_2.
    \end{equation}
\end{proposition}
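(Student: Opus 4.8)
The plan is to control $\nabla\cH(\x^{(k)})$ using the already-established identity relating successive gradients of $f$ and $g$ (equation~\eqref{eq:successive}), together with the fact that both $\nabla\! f$ and $\nabla\! g$ are Lipschitz continuous on the bounded region containing the iterates. First I would write, exactly as in the proof of Proposition~\ref{prop:limit_critical},
\begin{equation*}
\nabla\cH(\x^{(k)}) = \nabla\! f(\x^{(k)}) - \nabla\! g(\x^{(k)}) = \nabla\! f(\x^{(k)}) - \nabla\! f(\x^{(k+1)})
\end{equation*}
for $k \geqslant 1$ (shifting the index in~\eqref{eq:successive}), wait --- more carefully, \eqref{eq:successive} gives $\nabla\! f(\x^{(k)}) = \nabla\! g(\x^{(k-1)})$, so
\begin{equation*}
\nabla\cH(\x^{(k)}) = \nabla\! f(\x^{(k)}) - \nabla\! g(\x^{(k)}) = \nabla\! g(\x^{(k-1)}) - \nabla\! g(\x^{(k)}),
\end{equation*}
which already has the difference $\x^{(k-1)} - \x^{(k)}$ sitting inside it. Since $\nabla\! g(\x) = (\J + \alpha\I)\x$ is globally linear, this is the cleanest route: $\|\nabla\cH(\x^{(k)})\|_2 = \|(\J+\alpha\I)(\x^{(k-1)} - \x^{(k)})\|_2 \leqslant \|\J + \alpha\I\|_2\,\|\x^{(k)} - \x^{(k-1)}\|_2$, so one may simply take $\sigma = \|\J + \alpha\I\|_2 = \lambda_{\max}(\J + \alpha\I)$.

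Then I would present the argument in that order: (i) invoke Proposition~\ref{prop:bounded_iterates} so that all iterates lie in a fixed compact set $K$ (this is needed only if one prefers to route through $\nabla\! f$; with the $\nabla\! g$ route it is not strictly necessary, but it does no harm to mention it); (ii) recall the first-order optimality condition~\eqref{eq:successive}, shifted by one index; (iii) substitute to get $\nabla\cH(\x^{(k)}) = \nabla\! g(\x^{(k-1)}) - \nabla\! g(\x^{(k)})$; (iv) use linearity/Lipschitzness of $\nabla\! g$ to bound the norm by $\lambda_{\max}(\J+\alpha\I)\,\|\x^{(k)}-\x^{(k-1)}\|_2$; (v) set $\sigma = \lambda_{\max}(\J+\alpha\I) > 0$ (positive since $\J+\alpha\I$ is positive definite by~\eqref{eq:mu}), which is independent of $k$, completing the proof. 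If instead one wants a proof that generalizes to non-quadratic $g$, replace step (iv) by: $\nabla\! f$ is $C^1$ hence Lipschitz on $K$ with some constant $L_f$, and the route via $\nabla\cH(\x^{(k)}) = \nabla\! f(\x^{(k)}) - \nabla\! f(\x^{(k+1)})$ gives $\sigma = L_f$ — but this requires a one-index shift in the index of~\eqref{eq:step_vanish} and referencing the compact set.

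There is no real obstacle here: the statement is essentially a restatement of the optimality condition combined with the boundedness already proven. The only thing to be careful about is the index bookkeeping — \eqref{eq:successive} holds for $k \geqslant 1$ in the form $\nabla\! f(\x^{(k+1)}) = \nabla\! g(\x^{(k)})$, so to get a bound on $\nabla\cH(\x^{(k)})$ valid for $k \geqslant 1$ one reindexes to $\nabla\! f(\x^{(k)}) = \nabla\! g(\x^{(k-1)})$, which needs $k \geqslant 1$, matching the claimed range. I expect the proof to be four or five lines. The mild subtlety, if the authors use the $\nabla\! f$ route, is justifying the Lipschitz bound on $\nabla\! f$ (whose Hessian $3\beta\,\mathrm{diag}(x_i^2)$ is unbounded on $\Re^n$) — which is exactly why Proposition~\ref{prop:bounded_iterates} is cited first, restricting attention to the compact sublevel set where $\nabla\! f$ is Lipschitz. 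I would flag that as the one place where the earlier boundedness result is genuinely used.
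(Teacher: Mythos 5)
Your proposal is correct and follows essentially the same route as the paper: reindex the optimality condition~\eqref{eq:successive} to write $\nabla\cH(\x^{(k)}) = \nabla\! g(\x^{(k-1)}) - \nabla\! g(\x^{(k)})$, exploit the linearity of $\nabla\! g(\x) = (\J+\alpha\I)\x$, and take $\sigma = \lambda_{\max}(\J+\alpha\I)$. The paper does not need (and does not invoke) the boundedness of the iterates or any Lipschitz argument for $\nabla\! f$, exactly as you anticipated in noting that the $\nabla\! g$ route makes those unnecessary.
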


\begin{proof}
From~\eqref{eq:successive}, we have
\begin{equation*}
\nabla\cH(\x^{(k)}) = \nabla \! f(\x^{(k)}) -  \nabla \! g(\x^{(k)})= \nabla \! g(\x^{(k-1)}) -  \nabla \! g(\x^{(k)}).
\end{equation*}
In particular, since $ \nabla \! g(\x) = (\J+\alpha \I) \x$, 
\begin{equation*}
\| \nabla\cH(\x^{(k)}) \|_2 \leqslant \| (\J+\alpha \I)(\x^{(k-1)}-\x^{(k)}) \|_2.
\end{equation*}
Letting $\sigma = \lambda_{\max} (\J+\alpha \I) >0$, we get~\eqref{eq:H_grad_bound}.
\end{proof}

The following is the main convergence result. The proof follows the analysis in~\cite{absil2005,bolte_kl_2013}.

\begin{theorem}\label{thm:DOCH_convergence}
The DOCH iterates \(\{\x^{(k)}\}\) converge to a critical point of $\cH$.
\end{theorem}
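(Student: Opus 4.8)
The standard route to proving convergence of a descent sequence for an analytic (here polynomial) cost is the Kurdyka--\L ojasiewicz (K\L) machinery, and that is the approach I would take. All the ingredients are already in place: the sufficient decrease estimate $\cH(\x^{(k)}) - \cH(\x^{(k+1)}) \geqslant \tfrac{\mu}{2}\|\x^{(k+1)}-\x^{(k)}\|_2^2$ (Proposition~\ref{lemma:monotone_descent_H}), the relative-error estimate $\|\nabla\cH(\x^{(k)})\|_2 \leqslant \sigma\|\x^{(k)}-\x^{(k-1)}\|_2$ (Proposition~\ref{prop:H_grad_bound}), boundedness of the iterates together with $\|\x^{(k+1)}-\x^{(k)}\|_2 \to 0$ (Proposition~\ref{prop:bounded_iterates}), and the fact that every limit point is critical with $\cH$ constant along the whole sequence in the limit (Proposition~\ref{prop:limit_critical} plus monotonicity). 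These are exactly the three abstract hypotheses (H1) sufficient decrease, (H2) relative error, (H3) continuity/subsequential convergence in the Attouch--Bolte--Svaiter framework.

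\textbf{Key steps, in order.} First I would fix a limit point $\x^*$ of $\{\x^{(k)}\}$ (one exists by Bolzano--Weierstrass since the iterates are bounded) and note, using monotonicity of $\{\cH(\x^{(k)})\}$ and continuity of $\cH$, that $\cH(\x^{(k)}) \downarrow \cH(\x^*) =: \ell$ and that $\nabla\cH(\x^*) = \0$. If $\cH(\x^{(k_0)}) = \ell$ for some finite $k_0$, monotonicity and the decrease inequality force $\x^{(k)} = \x^{(k_0)}$ for all $k \geqslant k_0$ and we are done; so assume $\cH(\x^{(k)}) > \ell$ for all $k$. Next, since $\cH$ is a polynomial it satisfies the \L ojasiewicz gradient inequality at $\x^*$: there are $\theta \in (0,1)$, $C > 0$, and a neighbourhood $U$ of $\x^*$ such that $|\cH(\x) - \ell|^{\theta} \leqslant C\|\nabla\cH(\x)\|_2$ for all $\x \in U$; equivalently, with $\psi(s) = \tfrac{C}{1-\theta} s^{1-\theta}$ a concave desingularizing function, $\psi'(\cH(\x)-\ell)\,\|\nabla\cH(\x)\|_2 \geqslant 1$ on $U$. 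Then I would run the now-classical telescoping argument: using concavity of $\psi$,
\begin{equation*}
\psi\big(\cH(\x^{(k)})-\ell\big) - \psi\big(\cH(\x^{(k+1)})-\ell\big) \;\geqslant\; \psi'\big(\cH(\x^{(k)})-\ell\big)\,\big(\cH(\x^{(k)}) - \cH(\x^{(k+1)})\big) \;\geqslant\; \frac{\cH(\x^{(k)}) - \cH(\x^{(k+1)})}{\|\nabla\cH(\x^{(k)})\|_2},
\end{equation*}
combine with (H1) and (H2) to get $\cH(\x^{(k)}) - \cH(\x^{(k+1)}) \geqslant \tfrac{\mu}{2\sigma}\,\|\x^{(k+1)}-\x^{(k)}\|_2^2 / \|\x^{(k)}-\x^{(k-1)}\|_2$, hence by AM--GM
\begin{equation*}
2\|\x^{(k+1)}-\x^{(k)}\|_2 \;\leqslant\; \|\x^{(k)}-\x^{(k-1)}\|_2 + \frac{2\sigma}{\mu}\Big(\psi\big(\cH(\x^{(k)})-\ell\big) - \psi\big(\cH(\x^{(k+1)})-\ell\big)\Big),
\end{equation*}
and sum over $k$: the right-hand side telescopes (using $\psi \geqslant 0$ and $\psi(\cH(\x^{(k)})-\ell)\to 0$), so $\sum_k \|\x^{(k+1)}-\x^{(k)}\|_2 < \infty$. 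A finite-length sequence in $\Re^n$ is Cauchy, hence convergent; its limit must be a limit point, i.e.\ a critical point of $\cH$ by Proposition~\ref{prop:limit_critical}, which finishes the proof.

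\textbf{Main obstacle.} The one genuinely delicate point is the bookkeeping that keeps the iterates inside the neighbourhood $U$ where the \L ojasiewicz inequality is valid. One shows that $\x^*$ is actually \emph{the} limit: pick $k_1$ large so that $\x^{(k_1)}$ is close to $\x^*$, $\cH(\x^{(k_1)})-\ell$ is small, and $\tfrac{2\sigma}{\mu}\psi(\cH(\x^{(k_1)})-\ell) + \|\x^{(k_1)}-\x^{(k_1-1)}\|_2$ plus the remaining partial sum is smaller than the radius of $U$; then an induction on $k$ shows every $\x^{(k)}$, $k \geqslant k_1$, stays in $U$, so the telescoping estimate above is legitimate for all such $k$. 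This "staying in the ball" induction — standard in K\L proofs but requiring care with the constants — is the part I expect to demand the most attention; the rest is routine once (H1)--(H3) and the \L ojasiewicz inequality are on the table.
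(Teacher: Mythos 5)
Your proposal is correct and follows essentially the same route as the paper's proof: the \L ojasiewicz gradient inequality for the polynomial $\cH$, the sufficient-decrease and relative-error estimates from Propositions~\ref{lemma:monotone_descent_H} and~\ref{prop:H_grad_bound}, the concavity/AM--GM telescoping bound, and the induction keeping the tail of the sequence inside the \L ojasiewicz neighbourhood to conclude finite length and hence Cauchy convergence. (The only blemish is a typo in one display where the left-hand side should be the difference of $\psi$-values rather than of $\cH$-values, but the subsequent AM--GM inequality you state is the correct one.)
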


\begin{proof}
We know that \(\{\x^{(k)}\}\)  is bounded and hence has at least one limit point \(\x^* \in \Re^n\). Moreover, we know from Proposition~\ref{prop:limit_critical} that $\x^*$ is a critical point of \(\cH\). To complete the proof, we just have to show that the entire sequence \(\{\x^{(k)}\}\) converges to \(\x^*\). 

From Proposition \ref{prop:monotone_descent_H}, we know that \(\{\cH(\x^{(k)})\}\) is non-increasing. Moreover, since $\x^*$ is a limit point of $\x^{(k)}$ and $\cH$ is continuous,  
\begin{equation}
\label{eq:limH}
\lim_{k \to \infty} \, \cH(\x^{(k)}) =  \cH(\x^{*}).
\end{equation}
Let 
\begin{equation*}
\Delta_{k}=\cH(\x^{(k)})-\cH(\x^{*}) \qquad \mbox{and} \qquad \delta_k = \|\x^{(k)}-\x^{(k-1)}\|.
\end{equation*}
We can assume that $\Delta_k > 0$ and $\delta_k >0$ for all $k \geqslant 1$. Indeed, if $\Delta_K=0$ for some $K \geqslant 1$, then $\cH(\x^{(k)})=\cH(\x^{(k+1)})$ for all $k \geqslant  K$. Consequently, we have from~\eqref{eq:descent_ineq} that $\delta_k =0$ for all $k > K$. In other words, if $\Delta_K$ or $\delta_k$ vanishes for some $K$, then the sequence \(\{\x^{(k)}\}\) becomes eventually constant, and therefore must converge to the limit point \(\x^*\).

Since \(\cH\) is a polynomial, it satisfies the Łojasiewicz gradient inequality~\cite{KL1963,absil2005} in some neighbourhood of the critical point \(\x^*\). More specifically, there exist \(\theta \in [0, 1), c >0\) and \(r>0\), such that for all $ \|\x-\x^{*}\|<r$,
\begin{equation}
\label{eq:L_inequality}
    |\cH(\x) - \cH(\x^*)|^\theta \leq c \, \|\nabla\cH(\x)\|_2,
\end{equation}
We can conveniently reformulate this in terms of the function
\begin{equation}
\label{eq:varphi}
\varphi(t) = \frac{c}{1-\theta} \, t^{1-\theta}.
\end{equation}
We can verify that \eqref{eq:varphi} is concave and non-decreasing on $[0, \infty)$, and $\varphi'(t) = c t^{-\theta}$ for any $t > 0$. Therefore, we can write~\eqref{eq:L_inequality} as
\begin{equation}
\label{eq:newform}
\varphi' \big(|\cH(\x) - \cH(\x^*)| \big)\,  \bigl\|\nabla\cH(\x)\bigr\|_2 \geq 1.
\end{equation}

We know from~\eqref{eq:step_vanish} and~\eqref{eq:limH} that $\delta_k \to 0$ and $\Delta_k \to 0$. Since $\varphi$ is non-decreasing and satisfies $\varphi(t)\to 0$ as $t\to 0$, it follows that $\varphi(\Delta_k)\to 0$. Moreover, since $\x^*$ is a limit point of $\{\x^{(k)}\}$, there exists $k_0 \geqslant 1$ such that 
\begin{equation}
\label{eq:prox}
\| \x^{(k_0)} -\x^*\| < \frac{r}{2} \qquad \mbox{and} \qquad \frac{2\sigma}{\mu} \varphi(\Delta_{k_0}) + \delta_{k_0} < \frac{r}{2}.
\end{equation}
In particular, we have from~\eqref{eq:newform} that $\varphi' (\Delta_{k_0}  )  \|\nabla\cH(\x^{(k_0)})\|_2 \geq 1$. Combining this with Proposition~\ref{prop:H_grad_bound}, we get 
\begin{equation}\label{eq:kll}
\varphi'\!\bigl(\Delta_{k_0} \bigr) \, \geq \frac{1}{\sigma\delta_{k_0}}.
\end{equation}
On the other hand, by~\eqref{eq:descent_ineq}, we have
\begin{equation}
\label{eq:diffDelta}
\Delta_{k_0}-\Delta_{k_0+1} = \cH\bigl(\x^{(k_0)}\bigr)-\cH\bigl(\x^{(k_0+1)}\bigr)
 \geq \frac{\mu}{2}\,\delta_{k_0+1}^{2}.
\end{equation}
Since \(\varphi\) is concave and differentiable  on $(0, \infty)$, 
\begin{equation*}
\varphi(\Delta_{k_0})-\varphi(\Delta_{k_0+1}) \geq \varphi'(\Delta_{k_0}) (\Delta_{k_0}-\Delta_{k_0+1}) .
\end{equation*}
From~\eqref{eq:kll} and~\eqref{eq:diffDelta}, we obtain
\begin{align*}
\varphi(\Delta_{k_0})-\varphi(\Delta_{k_0+1}) \geq \frac{\mu}{2}\,\varphi'(\Delta_{k_0})\,\delta_{k_0+1}^{2} \geqslant  \frac{\mu}{2\sigma}\,
\left(\frac{\delta_{k_0+1}^{2}}{\delta_{k_0}} \right).
\end{align*}
In other words,
\begin{equation*}
\delta_{k_0+1}^2 \leqslant \frac{2\sigma}{\mu} \bigl(\varphi(\Delta_{k_0})-\varphi(\Delta_{k_0+1})\bigr)\,\delta_{k_0}.
\end{equation*}
Applying the AM-GM inequality, we get
\begin{equation}
\label{eq:delbnd}
 2\delta_{k_0+1} \leqslant \frac{2\sigma}{\mu} \bigl(\varphi(\Delta_{k_0})-\varphi(\Delta_{k_0+1})\bigr) + \delta_{k_0}.
\end{equation}
Combining this with~\eqref{eq:prox} gives
\begin{equation}
\label{eq:bnd}
\delta_{k_0+1}  \leqslant \frac{2\sigma}{\mu} \bigl(\varphi(\Delta_{k_0})-\varphi(\Delta_{k_0+1})\bigr) + \delta_{k_0} \leqslant \frac{2\sigma}{\mu} \varphi(\Delta_{k_0}) + \delta_{k_0} \leqslant \frac{r}{2}.
\end{equation}
From~\eqref{eq:prox} and~\eqref{eq:bnd}, we have
\begin{equation*}
\|\x^{k_0+1} - \x^*\|_2 \leqslant \|\x^{k_0+1} - \x^{k_0}\|_2+\|\x^{k_0} - \x^*\|_2 = \delta_{k_0+1} +\|\x^{k_0} - \x^*\|_2< r.
\end{equation*}

Thus $\x^{k_0+1}$ is in the neighbourhood of $\x^*$ stipulated in~\eqref{eq:L_inequality}. In fact, we claim that the entire tail $\{\x^{k}\}, k > k_0$, belongs to this neighbourhood. This can be shown using contradiction. Indded, suppose \(k_1\) is the smallest \(k > k_0\) such that $\|\x^{(k_1)}-\x^{*}\|_2 \geqslant r$. This means that the points $\x_{k_0},\ldots,\x_{k_1-1}$ are with the neighbourhood stipulated in~\eqref{eq:L_inequality}. Applying~\eqref{eq:delbnd} repeatedly, we have 
\begin{equation*}
\label{eq:recurrent_inequality}
\delta_{k+1} \leqslant \frac{2\sigma}{\mu} 
\bigl(\varphi(\Delta_{k})-\varphi(\Delta_{k+1})\bigr) + (\delta_k - \delta_{k+1}), \qquad (k_0 \leq k < k_1).
\end{equation*}
Summing both sides, we obtain
\begin{equation}
\label{eq:summed_recurrent_inequality_0}
 \sum_{k=k_0}^{k_1-1} \delta_{k+1} \leqslant \frac{2\sigma}{\mu}  \bigl(\varphi(\Delta_{k_0})-\varphi(\Delta_{k_1})\bigr) + (\delta_{k_0} - \delta_{k_1}).
\end{equation}
In particular, 
\begin{equation*}
\sum_{k=k_0}^{k_1-1} \delta_{k+1} \leqslant  \frac{2\sigma}{\mu}  \varphi(\Delta_{k_0}) + \delta_{k_0} < \frac{r}{2},
\end{equation*}
and 
\begin{equation*}
\|\x^{(k_1)}-\x^{*}\|_2 \leqslant \|\x^{(k_0)}-\x^{*}\|_2 +  \sum_{k=k_0}^{k_1-1} \delta_{k+1} < r,
\end{equation*}
which contradicts our assumption about $\x_{k_1}$. Thus, must we have $\|\x^{(k)}-\x^{*}\|_2 < r$ for all $k > k_0$. 

Similar to~\eqref{eq:summed_recurrent_inequality_0}, for all $K > k_0$, we have
\begin{equation*}
 \sum_{k=k_0}^{K-1} \|\x^{(k+1)}-\x^{(k)}\|_2 \leqslant \frac{2\sigma}{\mu}  \bigl(\varphi(\Delta_{k_0})-\varphi(\Delta_{K})\bigr) + (\delta_{k_0} - \delta_{K}) \leqslant \frac{2\sigma}{\mu}  \varphi(\Delta_{k_0})+\delta_{k_0} <r.
\end{equation*}
Thus the sequence \(\{\x^{(k)}\}\) is Cauchy and therefore convergent. Since \(\x^*\) is a limit point of \(\{\x^{(k)}\}\), it follows that the entire sequence converges to \(\x^*\).
\end{proof}

Unlike DOCH, providing a convergence guarantee for ADOCH is more challenging. This difficulty arises primarily from the use of the Barzilai-Borwein-type update~\cite{grippo2002} in ADOCH, which can lead to nonmonotonic behaviour in the objective sequence \(\{\mathcal{H}(\x^{(k)})\}\). On the other hand, this scheme allows for more effective exploration of the energy landscape, often leading to better local minima of $\cH$. Empirically, we found that ADOCH gives higher-quality ground states than DOCH, particularly for large Ising models. We wish to clarify that even if we assume that the ADOCH iterates \(\{\x^{(k)}\}, \{\y^{(k)}\}\) are bounded, it is difficult to guarantee convergence of objective values \(\{\cH(\x^k)\}\) to a local minimum of \(\cH\)~\cite{phan2018}.

We now present a result that supports our choice of $\beta$ in the experiments. For sufficiently large $\alpha$, the function $g$ becomes convex (see Proposition~\ref{prop:convexity_fg}), and the boundedness of the iterates \(\{\x^{(k)}\}\) follows directly from the structure of DOCH and the properties of \(\cH\) (Proposition~\ref{prop:bounded_iterates}). However, if $\alpha$ is small, \(\cH\) is no longer a difference of convex functions. The next result tells us that we can still ensure boundedness of the iterates through a careful choice of $\beta$.

\begin{proposition}
\label{prop:bounded_iters_all_alpha}
The DOCH iterates \(\{\x^{(k)}\}\) are bounded if \(\alpha >0\) and
        \begin{equation}\label{eq:l1_norm}
        \beta \geqslant \|\J + \alpha \I \|_{\infty}=\max_ {1 \leqslant i \leqslant n} \Big( \alpha  + \sum_{j \neq i} |  J_{ij}| \Big) .
            \end{equation}
\end{proposition}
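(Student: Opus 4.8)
The plan is to argue directly at the level of the fixed-point map $\cT$, since hypothesis~\eqref{eq:l1_norm} constrains $\cT$ itself rather than the Hamiltonian, so the coercivity/Łojasiewicz route used for Theorem~\ref{thm:DOCH_convergence} is not the natural one here. Write $\M = \J + \alpha \I$. Recall that $\x^{(k+1)} = \cT(\x^{(k)}) = \varphi\big(\beta^{-1}\M\x^{(k)}\big)$, so in coordinates $x^{(k+1)}_i = \sqrt[3]{\beta^{-1}(\M\x^{(k)})_i}$. Since the real cube root obeys $|\sqrt[3]{t}| = |t|^{1/3}$, I would take absolute values and bound each row sum, $|(\M\x^{(k)})_i| \leq \sum_j |M_{ij}|\,|x^{(k)}_j| \leq \|\M\|_\infty\,\|\x^{(k)}\|_\infty$, which gives the one-line estimate
\begin{equation*}
\|\x^{(k+1)}\|_\infty^{3} \;\leq\; \beta^{-1}\,\|\M\|_\infty\,\|\x^{(k)}\|_\infty .
\end{equation*}

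Next I would feed in the assumption $\beta \geq \|\M\|_\infty = \|\J + \alpha \I\|_\infty$, which reduces the above to the scalar recursion $a_{k+1} \leq a_k^{1/3}$ for the nonnegative sequence $a_k := \|\x^{(k)}\|_\infty$. The remaining task is to observe that such a recursion cannot escape to infinity: the map $t \mapsto t^{1/3}$ is increasing, fixes $1$, maps $[0,1]$ into itself, and pulls $(1,\infty)$ back toward $1$. Concretely, a one-line induction using monotonicity of $t\mapsto t^{1/3}$ yields $a_k \leq a_0^{\,3^{-k}} \leq \max(a_0,1)$ for every $k$, whence $\|\x^{(k)}\|_\infty \leq \max(\|\x^{(0)}\|_\infty, 1)$, and by equivalence of norms on $\Re^n$ the iterates are bounded.

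There is no real obstacle beyond bookkeeping: one must keep the two regimes $\|\x^{(0)}\|_\infty \leq 1$ and $\|\x^{(0)}\|_\infty > 1$ straight in the cube-root recursion, and respect the sign convention for the real cube root in $\varphi$ when passing to absolute values. The cleanest way to phrase the conclusion is that the closed $\ell^\infty$-ball of radius $\max(\|\x^{(0)}\|_\infty, 1)$ is forward-invariant under $\cT$ as soon as $\beta \geq \|\J + \alpha \I\|_\infty$, with a bound uniform in $\alpha$; note this is weaker in form than Proposition~\ref{prop:bounded_iterates} (no vanishing-step conclusion) but holds without the convexity of $g$, and it covers the setting $n\sqrt{n}\,(\alpha + \max_j \sum_{i\neq j}|J_{ij}|) \geq \|\J+\alpha\I\|_\infty$ used in~\eqref{eq:alphabeta}.
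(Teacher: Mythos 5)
Your proposal is correct and follows essentially the same route as the paper's own proof: bound $\|\cT(\x)\|_\infty$ via the induced $\infty$-norm and the identity $\|\varphi(\z)\|_\infty = \|\z\|_\infty^{1/3}$, reduce to the scalar recursion $a_{k+1}\leqslant a_k^{1/3}$ under $\beta \geqslant \|\J+\alpha\I\|_\infty$, and conclude $\|\x^{(k)}\|_\infty \leqslant \max(1,\|\x^{(0)}\|_\infty)$ by induction. The only cosmetic difference is that you make the forward-invariant $\ell^\infty$-ball explicit, which the paper leaves implicit.
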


\begin{proof}
Starting with an initial point $\x^{(0)} \in \Re^n$, the DOCH iterates are generated using 
\begin{equation}
\label{eq:fpe_supp}
\x^{(k+1)}  = \cT (\x^{(k)}),
\end{equation}
where 
\begin{equation}
\label{eq:FPE_supp}
\cT(\x) = \varphi \big( \beta^{-1}(\J + \alpha \I) \x \big) \qquad \mbox{and} \qquad \varphi(x_1,\ldots,x_n) = (\sqrt[3] x_1,\ldots,\sqrt[3] x_n).
\end{equation}
The matrix norm $\|\cdot  \|_{\infty}$ is induced by the vector norm $\|\z\|_{\infty}= \max_{1 \leqslant i \leqslant n} \, |z_i|$. Hence,
\begin{equation*}
\|\beta^{-1}(\J + \alpha \I) \x \|_{\infty} \leqslant \beta^{-1} \|\J + \alpha \I\|_{\infty} \| \x \|_{\infty}.
\end{equation*}
On the other hand,
\begin{equation*}
\| \varphi(\boldsymbol z) \|_{\infty} = \|\boldsymbol z \|_{\infty}^{1/3}.
\end{equation*}
Combining these, we have from~\eqref{eq:FPE_supp} that
\begin{equation*}
\| \cT(\x) \|_{\infty} \leqslant \big(\beta^{-1} \|\J + \alpha \I\|_{\infty}\big)^{1/3} \|\x \|_{\infty}^{1/3}.
\end{equation*}
In particular, $\| \cT(\x) \|_{\infty} \leqslant    \|\x \|_{\infty}^{1/3}$ if $\beta \geqslant \|\J + \alpha \I \|_{\infty}$. Thus, letting $\x=\x^{(k)}$, we get
\begin{equation*}
\|\x^{(k+1)}\|_{\infty} = \|\cT(\x^{(k)})\|_{\infty} \leqslant  \|\x^{(k)} \|_{\infty}^{1/3}.
\end{equation*}
From this estimate, we can easily verify using induction that \(\|\x^{(k)}\|_\infty \leqslant \max (1, \|\x^{(0)}\|_\infty\)) for all $k \geqslant 1$. This shows that the iterates are bounded.
\end{proof}

We have established that the DOCH iterates converge to a critical point of \(\cH\). Under additional assumptions, this critical point is in fact guaranteed to be a local minimizer.

\begin{proposition}
\label{prop:nonzero_x0_fixed-point_critical _point}
Suppose the limit point of the iterates \(\x^*\) is such that   
\begin{equation}
\label{eq:stablefp}
\|\J_{\cT}(\x^*)\|_2 < 1,
\end{equation}
where $\J_{\cT}$ is the Jacobian (derivative) of $\cT$ and $\|\cdot \|_2$ is the spectral norm (largest singular value). Moreover, suppose that the components of \(\x^*\) are nonzero. Then $\x^*$ is a strict minimizer of \(\cH\).
\end{proposition}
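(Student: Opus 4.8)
The plan is to establish that the Hessian $\nabla^2\cH(\x^*)$ is positive definite and then invoke the standard second-order sufficient optimality condition. Since $\x^*$ is the limit of the DOCH iterates, it is a critical point of $\cH$ by Theorem~\ref{thm:DOCH_convergence}, so $\nabla\cH(\x^*)=\0$; once we also know $\nabla^2\cH(\x^*)\succ0$, it follows that $\x^*$ is a strict local minimizer of $\cH$, which is the claim.

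To carry this out, first I would assemble the two relevant matrices. Writing $\M=\J+\alpha\I$, the Hessian of $\cH=f-g$ at any point is $3\beta\,\mathrm{diag}(x_1^2,\dots,x_n^2)-\M$, using the expressions for $\nabla^2 f$ and $\nabla^2 g$ from the proof of Proposition~\ref{prop:convexity_fg}. Next, since $\x^*$ is a limit point of $\{\x^{(k)}\}$ and $\cT$ is continuous, $\x^*$ is a fixed point, $\x^*=\cT(\x^*)$; by the definition of $\cT$ in~\eqref{eq:FPE_main} this means $\beta^{-1}\M\x^*=\big((x_1^*)^3,\dots,(x_n^*)^3\big)$. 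Because the components of $\x^*$ are nonzero, every component of $\beta^{-1}\M\x^*$ is nonzero as well, so $\varphi$ is differentiable there, and the chain rule gives $\J_{\cT}(\x^*)=D\,\beta^{-1}\M$, where $D=\mathrm{diag}\big(1/(3(x_1^*)^2),\dots,1/(3(x_n^*)^2)\big)$ is a diagonal matrix with strictly positive entries. Noting that $3\beta\,\mathrm{diag}\big((x_i^*)^2\big)=\beta D^{-1}$, the goal reduces to showing $\beta D^{-1}-\M\succ0$.

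Then I would symmetrize. Since $D^{1/2}$ is invertible, $\beta D^{-1}-\M\succ0$ is equivalent to $\beta\I-D^{1/2}\M D^{1/2}\succ0$, i.e. to $\lambda_{\max}\big(D^{1/2}\M D^{1/2}\big)<\beta$. The matrix $D^{1/2}\M D^{1/2}$ is symmetric, so its largest eigenvalue is bounded by its spectral radius $\rho(\cdot)$; moreover it is similar to $D\M=D^{-1/2}\big(D^{1/2}\M D^{1/2}\big)D^{1/2}=\beta\,\J_{\cT}(\x^*)$, and hence has the same spectral radius. Combining these facts with $\rho(\cdot)\leq\|\cdot\|_2$ and the hypothesis~\eqref{eq:stablefp} gives $\lambda_{\max}\big(D^{1/2}\M D^{1/2}\big)\leq\rho\big(D\M\big)=\beta\,\rho\big(\J_{\cT}(\x^*)\big)\leq\beta\,\|\J_{\cT}(\x^*)\|_2<\beta$, which is exactly what is needed.

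The main obstacle is this last link: the hypothesis controls the spectral norm (largest singular value) of the \emph{non-symmetric} matrix $\J_{\cT}(\x^*)=D\beta^{-1}\M$, whereas positive definiteness of the Hessian is an eigenvalue statement about the \emph{symmetric} matrix $\beta\I-D^{1/2}\M D^{1/2}$. The bridge is the similarity $D\beta^{-1}\M\sim D^{1/2}(\beta^{-1}\M)D^{1/2}$, which converts the singular-value bound into a spectral-radius bound and then, by symmetry of $D^{1/2}\M D^{1/2}$, into the required eigenvalue bound. A smaller point to handle carefully is the differentiability of $\cT$ at $\x^*$, which is precisely where the assumption that the components of $\x^*$ are nonzero enters.
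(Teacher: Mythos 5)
Your proposal is correct and follows essentially the same route as the paper: compute $\nabla^2\cH(\x^*)=3\beta\,\mathrm{diag}\big(({\x^*})^2\big)-(\J+\alpha\I)$, use the fixed-point identity $(\J+\alpha\I)\x^*=\beta(\x^*)^3$ to write $\J_{\cT}(\x^*)=\frac{1}{3\beta}\,\mathrm{diag}(\x^*)^{-2}(\J+\alpha\I)$, and conclude positive definiteness of the Hessian from $\|\J_{\cT}(\x^*)\|_2<1$. In fact your symmetrization step --- passing from the non-symmetric $D\M$ to the similar symmetric matrix $D^{1/2}\M D^{1/2}$ and chaining $\lambda_{\max}\leq\rho$ with $\rho\leq\|\cdot\|_2$ --- spells out precisely the implication that the paper compresses into ``comparing the two displays, we see that\dots'', so your write-up is, if anything, more complete than the original.
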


\begin{proof}
To prove that $\x^*$ is a strict minimizer of \(\cH\), it suffices to show that the Hessian $\nabla^2\cH(\x^*) $ is positive definite. It follows from definition~\eqref{eq:fandg} that
\begin{align}
\label{eq:HessH}
\nabla^2\cH(\x^*) = \nabla^2 \! f(\x^*)  - \nabla^2 \! g(\x^*)  = 3\beta\, \mathrm{diag}({\x^*}^2) - (\J + \alpha \I),
\end{align}
where ${\x^*}^2$ denotes componentwise squaring and $\mathrm{diag}(\z)$ denotes a diagonal matrix such that $\mathrm{diag}(\z)_{ii}=z_i$. On the other hand, applying the chain rule of differentiation to~\eqref{eq:FPE_supp}, we have
\begin{equation}
 \label{eq:jacobian}
        \J_{\cT}(\x^*) = \frac{1}{3\beta^{1/3}} \mathrm{diag}\big[(\A\x^*)^{-2/3}\big]\A, \qquad (\A= \J + \alpha\I).
\end{equation}
This is where we need the assumption that the components of \(\x^*\) are nonzero, namely, to ensure that the Jacobian is well-defined at \(\x^*\).
We can simplify~\eqref{eq:jacobian} using the fact that $\x^*$ is a fixed point of $\cT$. Indeed, letting $k \to \infty$ in~\eqref{eq:fpe_supp}, 
we have
\begin{equation*}
 \x^* = \cT(\x^*)= \varphi(\beta^{-1}  \A \x^*),
\end{equation*}
that is, $\A\x^*= \beta \, (\x^*)^3$. Plugging this in~\eqref{eq:jacobian}, we obtain
 \begin{equation}
 \label{eq:jacob}
 \J_{\cT}(\x^*)=\frac{1}{3\beta} \, \mathrm{diag}({\x^*})^{-2} (\J + \alpha \I).
\end{equation}
Comparing~\eqref{eq:HessH} and \eqref{eq:jacob}, we see that the condition $\|\J_{\cT}(\x^*)\|_2 < 1$ implies that $\nabla^2\cH(\x^*)$ is positive definite. This establishes our claim.
\end{proof}

\begin{remarks}
\label{remarks:s1} 
In practice, the fixed points of \(\cT\) have nonzero components, which are rounded to get the spin vector. If these fixed points are stable in the sense of~\eqref{eq:stablefp}, then Proposition~\ref{prop:nonzero_x0_fixed-point_critical _point} ensures that the iterates \(\{\x^{(k)}\}\) converge to a local minimum of the Hamiltonian~(see Supplementary Figure~\ref{fig:DOCH_iterates_convergence}). Empirically, we found that setting $\beta$ sufficiently large, on the order of \(\mathcal{O}\left(n\sqrt{n}\|\J + \alpha \I\|_\infty\right)\), as stipulated by Proposition~\ref{prop:bounded_iters_all_alpha}, keeps the Jacobian norm at the fixed point below one~(see Supplementary Figure~\ref{fig:jacob_norm}). We found that this also yields better quality solutions~(see Supplementary Figure~\ref{fig:opt_beta_value}). In summary, this means that we can set $\beta$ to a high value and tune $\alpha$ freely. 
\end{remarks}

\begin{figure}[htbp]
    \centering
    \includegraphics[width=0.7\textwidth]{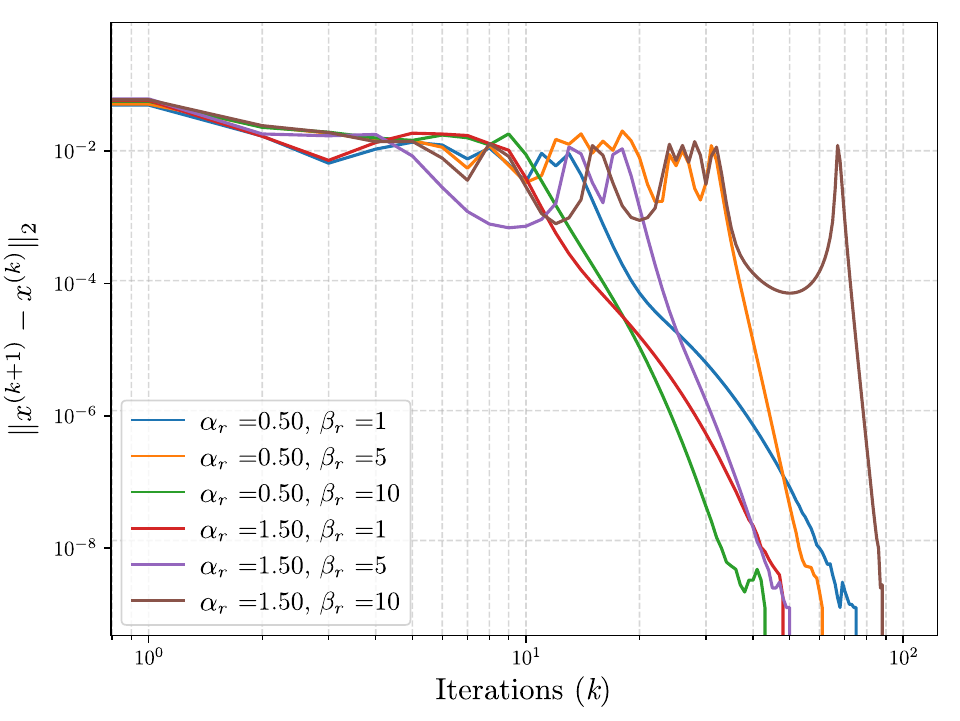}
    \caption{\textbf{Log-log plot of convergence behaviour for the DOCH algorithm.} The plot shows the norm of the difference of successive iterates with iteration count. A $100$-spin SK model is considered. We observe rapid convergence of the iterates \(\{\x^{(k)}\}\) toward a fixed point \(\x^*\) within $100$ iterations, as computed via the fixed-point update rule~\eqref{eq:fpe_supp}. Results are shown for various combinations of hyperparameters: $\alpha_r \in \{0.50, 1.50\}$ and $\beta_r \in \{1, 5, 10\}$.}
    \label{fig:DOCH_iterates_convergence}
\end{figure}

\begin{figure}[htbp]
    \centering
    \includegraphics[width=0.6\textwidth]{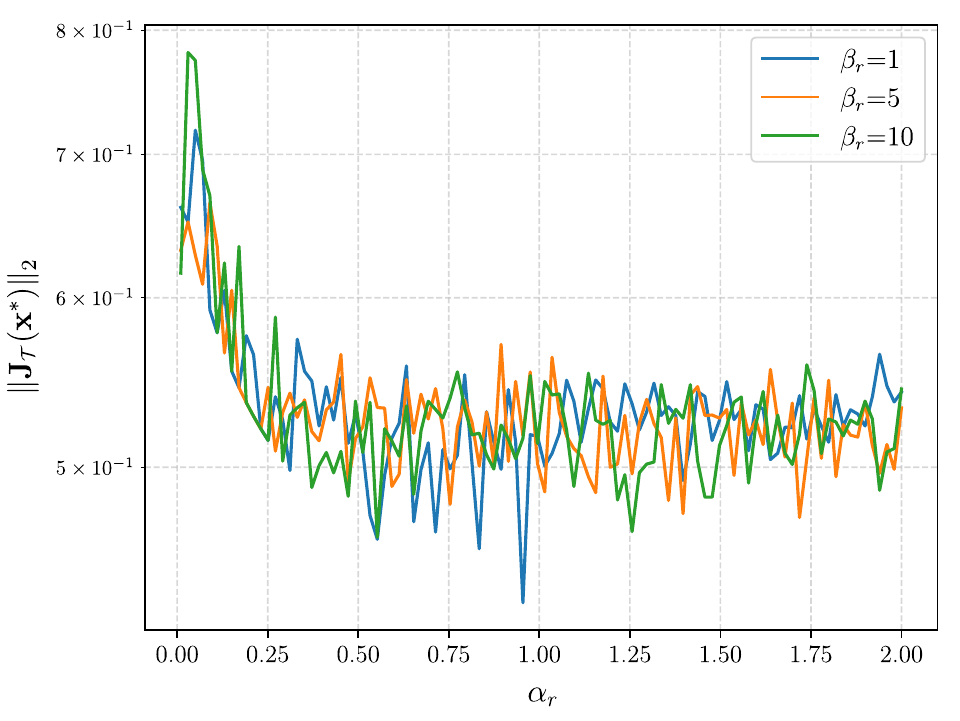}
    \caption{\textbf{Dependence of the (average) norm of the Jacobian \(\|\J_\cT(\x^*)\|_2\) on the ratio $\alpha_r$}. We consider a $100$-spin SK model. The ratios are defined as \(\alpha_r = \alpha / \lambda_{\max}(-\J)\) and \(\beta_r = \beta / (n \sqrt{n} \| \J + \alpha \I \|\infty)\). For each $\alpha_r \in (0, 2)$ and fixed $\beta_r \in \{1, 5, 10\}$, we compute the fixed point \(\x^*\) by applying the update rule in~\eqref{eq:fpe_supp} for $100$ iterations. The spectral norm of the Jacobian \(\|\J_{\cT}(\x^*)\|_2\) is then evaluated at the resulting point. This process is repeated over multiple random initializations \(\x^{(0)}\) and the average Jacobian norms are plotted against $\alpha_r$.}
    \label{fig:jacob_norm}
\end{figure}

\begin{figure}[htbp]
    \centering
    \includegraphics[width=0.6\textwidth]{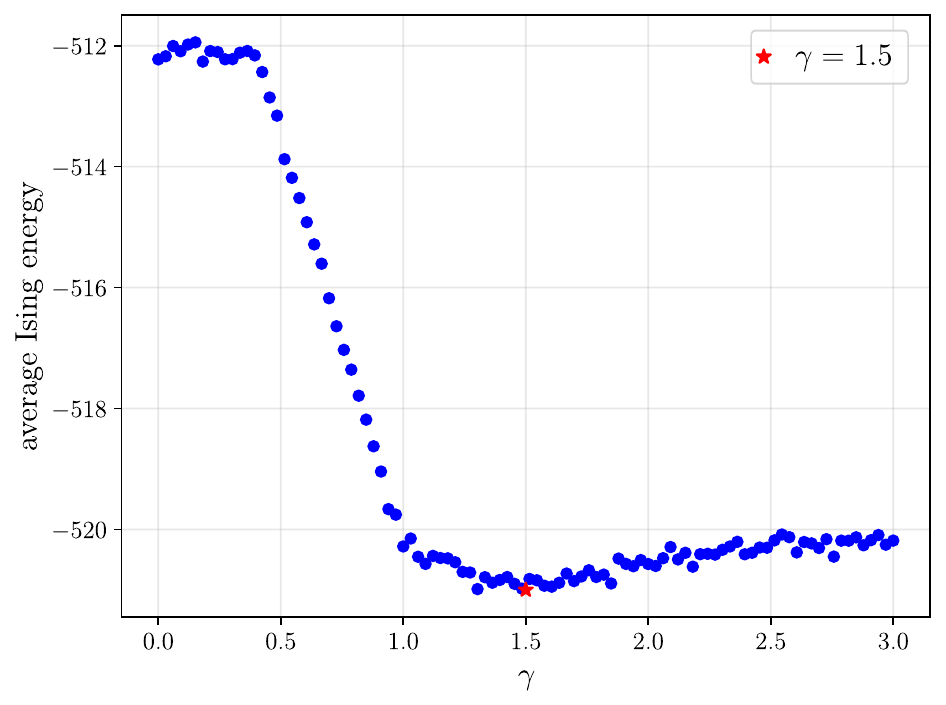}
   \caption{\textbf{Average Ising energy as a function of the exponent $\gamma$}. We evaluate the impact of $\gamma$ in the formula: $\beta = n^{\gamma} \| \J + \alpha \I \|\infty$. This is done for a $100$-spin SK model. For each $\gamma \in (0, 3]$, we set $\beta$ accordingly and run the DOCH solver for $100$ iterations, using the optimal value of $\alpha$. This procedure is repeated over multiple random initializations \(\x^{(0)}\), and the resulting Ising energies are averaged. The average energy is plotted along the y-axis. The results indicate that $\gamma = 1.5$ produces the lowest average Ising energy.}
    \label{fig:opt_beta_value}
\end{figure}

{
\begin{remarks}\label{remarks:s2}
    We can leverage this convergence guarantee to introduce a principled stopping rule. One option is to monitor the relative error
    \begin{equation*}\label{eq:rel_error_iterates}
        r^{(k)} = \frac{\|\x^{(k+1)} - \x^{(k)}\|_2}{\|\x^{(k)}\|_2},
    \end{equation*}
    and stop once $r^{(k)} < \epsilon$, for some tolerance $\epsilon > 0$. We tested this criterion on the K2000 graph (see Supplementary Figure~\ref{fig:termination_criteria}), and found that it accurately detects the saturation point without any loss in performance. 
    The other solvers lack this flexibility, as they lack comparable convergence guarantees and depend critically on cooling schedules.
\end{remarks}}

\begin{figure}[htbp]
        \centering
        \includegraphics[width=0.7\linewidth]{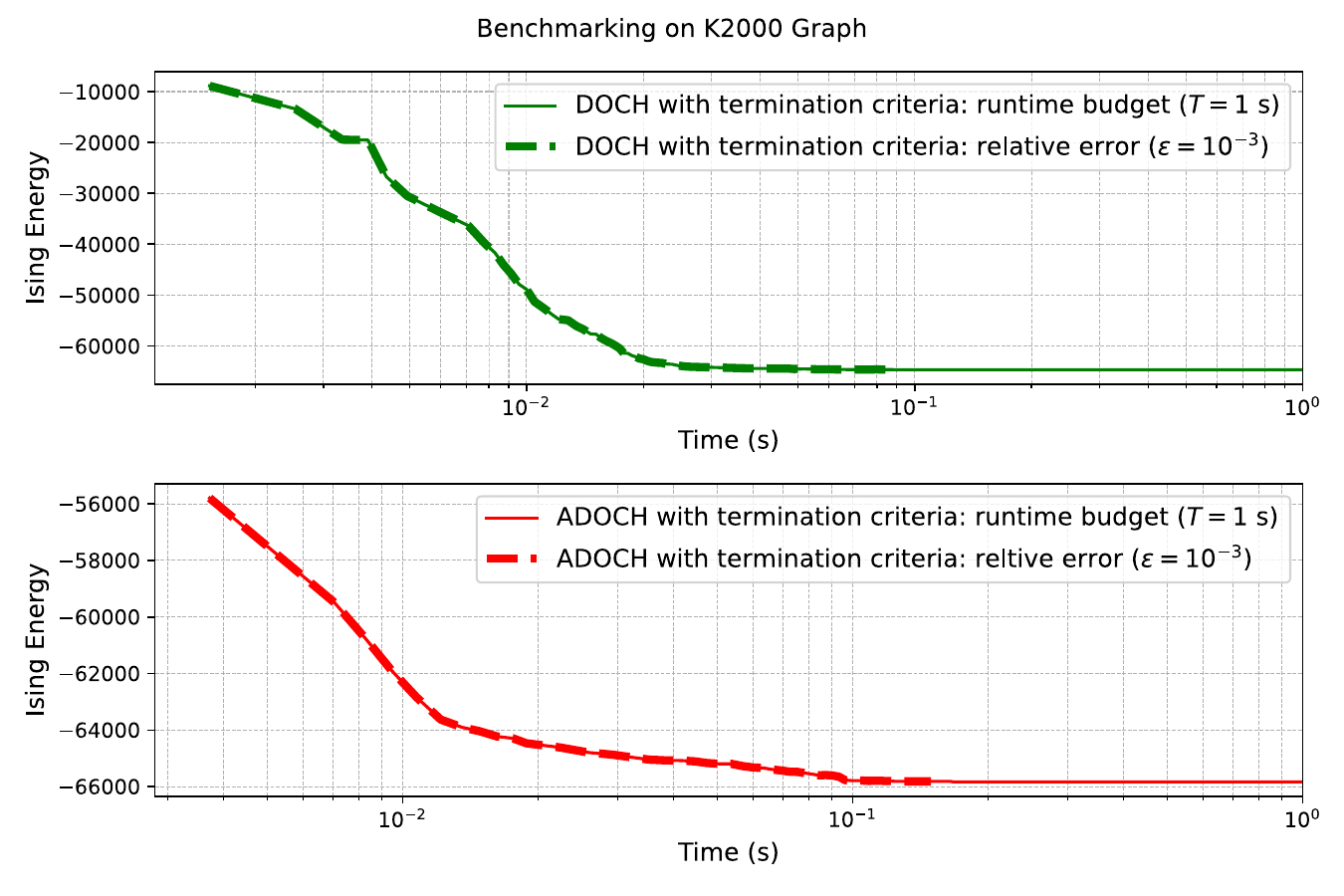} 
        {\caption{\textbf{Comparison of two different termination criteria:} total runtime budget ($T = 1$ s) vs. relative error tolerance ($\epsilon = 10^{-3}$), for DOCH and ADOCH solvers on the K2000 graph. The results were obtained on a laptop equipped with a 4GB NVIDIA RTX 3050 GPU.}
        \label{fig:termination_criteria}}
\end{figure}

\subsection*{Supplementary Note 4: Relation with KPO and OPO}

We show that for certain parameter settings and under some assumptions, the Hamiltonians in~\cite{goto2018, goto2019} resemble our Hamiltonian. The Hamiltonian for the Kerr-nonlinear parametric oscillator (KPO) is of the form
\begin{equation}
\label{eq:s4}
    \cH_c(\x,\y,t) = \sum_{i=1}^{N} \left( \frac{K}{4} (x_i^2 + y_i^2)^2 - \frac{p(t)}{2} (x_i^2 - y_i^2) + \frac{\Delta_i}{2} (x_i^2 + y_i^2) \right) - \frac{\xi_0}{2} \sum_{i=1}^{N} \sum_{j=1}^{N} J_{ij} (x_i x_j + y_i y_j)
\end{equation}
where \(\xi_0\) is a positive constant, \(K\) is the Kerr coefficient, \(p(t)\) is the time-dependent pumping amplitude, \(\Delta_i\) is the detuning frequency of the \(i\)-th oscillator, \(x_i\) is the state of the $i$-th spin, and \(y_i\) is its momentum. On setting \(y_i=0, \Delta_i=\Delta\) in~\eqref{eq:s4}, we get
\begin{align} \label{eq:s6}
    \cH_{KPO}(\x,\y,t) &= \sum_{i=1}^{N} \left( \frac{K}{4} x_i^4 - \frac{p(t)-\Delta_i}{2} x_i^2\right) - \frac{\xi_0}{2} \sum_{i=1}^{N} \sum_{j=1}^{N} J_{ij} x_i x_j  \nonumber \\
  & = \frac{K}{4}(x_1^4+ \dots + x_n^4) - \frac{1}{2}\big(p(t)-\Delta\big)(x_1^2+ \dots + x_n^2) - \frac{\xi_0}{2}\x^\top \J \x.
\end{align}
Thsu, we can identify $\xi_0^{-1}\cH_{KPO}$ with our Hamiltonian $\cH = f - g$, where \(\beta = K\xi_0^{-1}\) and \(\alpha = \xi_0^{-1}(p(t)-\Delta)\) in~\eqref{eq:fandg}. Similarly, the Hamiltonian for the optical parametric oscillators (OPO) can be written as
\begin{equation}\label{eq:s7}
    \cH_{OPO} = \sum_{i=1}^{N} \left( \frac{\kappa_2}{4} (x_i^2 + y_i^2)^2 - \frac{p}{2} (x_i^2 - y_i^2) + \frac{\kappa_1}{2} (x_i^2 + y_i^2) \right) - \frac{\xi_0}{2} \sum_{i=1}^{N} \sum_{j=1}^{N} J_{i,j} (x_i x_j + y_i y_j)
\end{equation}
with $\kappa_1$ and $\kappa_2$ being the one-photon and two-photon loss rates. As in the reduction for the KPO, a scaled version of \(\cH_{\mathrm{OPO}}\) can be identified with our Hamiltonian by setting $\alpha = \xi_0^{-1}(p - \kappa_1)$ and $\beta = \kappa_2 \xi_0^{-1}$ in~\eqref{eq:fandg}. Although the structure of our Hamiltonian resembles \eqref{eq:s6} and \eqref{eq:s7}, it is important to note that the latter cannot generally be expressed as a difference of convex functions.

\subsection*{Supplementary Note 5: MAX-CUT and GW-SDP}

We explain how the MAX-CUT problem on a graph can be reformulated as an Ising model and how the Goemans-Williamson Semidefinite Program (GW-SDP)~\cite{goemans1995} can be used to approximate its ground state.

Given a simple undirected graph $G = (V, E)$, the MAX-CUT problem seeks a partition of the vertex set $V$ into two disjoint subsets $S$ and $T=V\setminus S$ such that the total weight of edges (called the cut value) across $S$ and $T$ is maximized. Specifically, if $\omega_{ij}$ denotes the weight of an edge  $(i,j) \in E,$ then the cut value is 
\begin{equation}
\label{eq:cut}
 \sum_{\substack{(i,j) \in E \\ i \in S,\,  j \in T}}   \omega_{ij}.
\end{equation}
The MAX-CUT problem is to maximize \eqref{eq:cut} with respect to the choice of subsets $S$ and $T$. 

To reformulate this as an Ising problem, we define the weighted adjacency matrix \( \W = \{W_{ij}\} \) given by
\begin{equation*}
W_{ij} =
\begin{cases}
\omega_{ij}, & \text{if } (i,j) \in E, \\
0, & \text{otherwise}.
\end{cases}
\end{equation*}
Note that $W_{ii} = 0$ (no self-loops), and $W_{ij} = W_{ji}$ since the graph is undirected. For any given $S$ and $T$, we can uniquely associate a spin vector $\s \in \{-1,1\}^n$ such that $s_i=1$ if $i \in S$, and $s_i=-1$ if $i \in T$. With this encoding, we can write cut value~\eqref{eq:cut} as
\begin{equation*}
 \frac{1}{4} \sum_{i,j=1}^n \, (1-s_i s_j) \, W_{ij} = \mathrm{constant} - \frac{1}{4} \sum_{i,j=1}^n \, s_i s_j W_{ij}.
\end{equation*}
Consequently, the MAX-CUT problem becomes: 
\begin{equation*}
   \max_{\s \in \{-1,1\}^n} \ \mathrm{constant} - \frac{1}{4} \sum_{i,j=1}^n \, s_i s_j W_{ij} \equiv  \min_{\s \in \{-1,1\}^n} - \frac{1}{2}\, \s^\top \! \J \s.
   \end{equation*}
where \( \J = -(1/2) \W \). This is precisely the Ising problem defined in~\eqref{eq:IsingE}.
 
The GW-SDP is based on the observation that by introducing the matrix variable  \(\X = \s\s^\top, \) we can write the Ising energy as
\begin{equation*}\label{eq:s17}
    \cE(\s) = -\frac{1}{2}\s^\top \J \s
    = -\frac{1}{2} \mathrm{Trace}(\J \s\s^\top)
    = -\frac{1}{2} \mathrm{Trace}(\J \X).
\end{equation*}
The variable \(\X\) is a rank-one, positive semidefinite matrix (\(\X \succeq \0\)) with $X_{ii} = 1$ for $i = 1, \ldots, n$. There exists a one-to-one correspondence between spin assignments in $\{-1, 1\}^n$ and matrices satisfying these properties. Consequently, the Ising problem can be reformulated as:
\begin{equation*}
    \min_{\substack{\substack{\mathrm{rank}(\X) = 1 \\ \X\succeq \0}\\ \X_{ii}=1}}\ \  - \frac{1}{2} \mathrm{Trace}(\J \X)
\end{equation*}
By dropping the nonconvex rank constraint, we obtain a convex optimization problem known as a semidefinite program:
\begin{equation}
\label{eq:sdp}
    \underset{\substack{\X\succeq \0 \\ \X_{ii}=1}}  {\max} \quad \mathrm{Trace}(\J \, \X)
\end{equation}
We can solve~\eqref{eq:sdp} in polynomial time using standard interior-point methods. However, the computational complexity for achieving a given accuracy is relatively high, namely $\mathcal{O}(n^{4.5})$ for an n-spin Ising model~\cite{Luo2010}. In practice, this limits scalability to about $10^3$ spins. It was famously shown in~\cite{goemans1995} that by using a randomized rounding scheme to extract a binary spin  \( \s \in \{-1,1\}^n \) from the optimal solution of~\eqref{eq:sdp}, the expected cut value is at least $87.8\%$ of the optimal value of~\eqref{eq:cut}.

\subsection*{Supplementary Note 6: Parameter Setting}

We describe how we set the parameters for our Ising solvers DOCH and ADOCH. We know from Proposition~\ref{prop:convexity_fg} that the function $g$ in~\eqref{eq:fandg} is convex whenever \(\alpha \geq \lambda_{\max}(-\J)\). Following this, we set \(\alpha = \eta \lambda_{\max}(-\J)\), and the parameter $\eta$ is optimized within the interval $(0, 2]$  (as per Remark~\ref{remarks:s1}) depending on the structure of the coupling matrix \(\J\). Since our solver converges quickly, $\eta$ can be tuned by examining the Ising energy over the first few iterations. In practice, we run $5-10$ iterations for large graphs and $1-3$ iterations for ultra-large-scale graphs to determine a suitable $\eta$. 

For small matrices, the largest eigenvalue \(\lambda_{\max}(\J)\) is computed using the power method~\cite{Horn_Johnson_1985}. For \(n \geq 10^4\), we approximate \(\lambda_{\max}(\J)\) using Wigner’s semicircle law~\cite{mehta2004rmt}: \(\lambda_{\max}(\J) \approx 2\langle \J \rangle \sqrt{n}\), where \(\langle \J \rangle\) is the empirical variance of the entries of \(\J\) (see main text).

Following Proposition~\ref{prop:bounded_iters_all_alpha}, the parameter $\beta$ is set as \(\beta= n\sqrt{n}\| \J+\alpha \I\|_{\infty}\), which ensures that the DOCH iterates remain bounded. This choice also promotes convergence to a strict minimum of the Hamiltonian~(see Remark~\ref{remarks:s1}). 

The look-back parameter $q$ in ADOCH is assigned a value between 5 and 10, depending on the structure of the Ising model. For larger Ising models \(n \geqslant 10^4\), we take $q \leq 5$.

\subsection*{Supplementary Note 7: Existing Ising Solvers}

\subsubsection*{Simulated Annealing (SA)}

We implemented the Simulated Annealing (SA) algorithm following the approach in~\cite{inagaki2016}. The pseudocode is provided in \textbf{Algorithm S1}. We used a logarithmic cooling schedule $\beta(t) = \beta_0 \log\, (1 +t/T)$, where $\beta_0$ is the initial inverse temperature parameter and $T$ is the total number of iterations (or runtime). SA has two tunable parameters, $\beta_0$ and $T$. For the $K_{2000}$ Ising model, the ratio $T/\beta_0 \approx 1000$ is recommended in~\cite{inagaki2016}; accordingly, we use $\beta_0 = 1$ and $T = 1000$. For other cases, $\beta_0$ is selected from the interval $[1, 2]$ to obtain the best performance.

\begin{center}\label{algo:sa}
    \begin{tabular}{lll}
        \hline
        \multicolumn{2}{l}{\textbf{Algorithm S1:} Simulated Annealing} & \\
        \hline
        1: & \textbf{initialize:} spin state $\s$, maximum iterations $T$ & \\
        2: & $n \leftarrow \text{dim}(\J)$ & \\
        3: & $E \leftarrow \cE(\s)$ & \\
        4: & \textbf{for} $t = 0$ to $T$ \textbf{do} & \\
        5: & \hspace{1em} $\beta \leftarrow \beta_0 \log(1 + t/T)$ & $\triangleright$ cooling rate \\
        6: & \hspace{1em} randomly pick spin $i \in \{1,2,\dots n\}$ & \\
        7: & \hspace{1em} $\s' \leftarrow \text{flip}(\s, i)$ & $\triangleright$ flip \(i\)-th spin $(s_i\to -s_i)$\\
        8: & \hspace{1em} $\Delta E \leftarrow \cE(\s') - E$ & \\
        9: & \hspace{1em} uniformly sample $z$ from (0, 1) \\
        9: & \hspace{1em} \textbf{if} $\Delta E < 0$ \textbf{or} $\exp(-\beta\Delta E) \geqslant z$ & \\
        10: & \hspace{2em} $\s \leftarrow \s'$ & \\
        11: & \hspace{2em} $E \leftarrow E + \Delta E$ & \\
        12: & \hspace{1em} \textbf{end if} & \\
        13: & \textbf{end for} & \\
        14: & \textbf{return:} $\s$ & \\
        \hline
    \end{tabular}
\end{center}

\subsubsection*{ballistic Bifurcation Machine (bSB)}

We implemented the ballistic Bifurcation Machine following the approach described in~\cite{goto2021}. The pseudocode is provided in \textbf{Algorithm S2}. Following the setup in~\cite{goto2021}, the pump rate is set to $a_0 = 1$, so that $a_t$ increases linearly from 0 to 1 according to the prescribed pump rate schedule. The time step $\Delta_t$ is selected from the set \{0.25, 0.5, 0.75, 1, 1.25\}, based on empirical performance. The coupling strength $c_0$ is set using the formula:
\begin{equation}\label{eq:s29}
c_0 = \frac{1}{2\langle \J \rangle\sqrt{n}},
\end{equation}
where \(\langle \J \rangle\) denotes the sample variance of the entries of the matrix \(\J\). For instance, for the $K_{2000}$ model with $J_{ij} = \pm 1$ and $n=2000$, this yields $c_0 \approx 0.0112$. For other problem instances (e.g., SK, G-Set, Biq Mac), we compute $c_0$ explicitly based on the corresponding \(\J\) matrix for each graph.

\begin{center}\label{algo:bsb}
    \begin{tabular}{lll}
        \hline
        \multicolumn{3}{l}{\textbf{Algorithm S2:} ballistic Simulated Bifurcation Machine} \\
        \hline
        1: & \multicolumn{2}{l}{\textbf{input:} pump rate $a_0$, coupling strength $c_0$, time step $\Delta_t$, total steps $T$} \\
        2: & $n \leftarrow \text{dim}(\J)$ & \\
        3: & \textbf{initialize:} $\x \leftarrow 2 \cdot \text{Bernoulli}(0.5, n) - 1$ & $\triangleright$ random $\pm1$ spin vector \\
        4: & \textbf{initialize:} $\y \leftarrow \0_n$ & $\triangleright$ zero vector of size $n$ \\
        5: & \textbf{for} $t = 0$ to $T$ \textbf{do} & \\
        6: & \hspace{1em} $a_t\leftarrow (a_0t/T)$ & $\triangleright$ pump rate schedule \\
        7: & \hspace{1em} $\y \leftarrow \y + [-(a_0 - a_t)\x + c_0\J\x]\Delta_t$ & $\triangleright$ update momentum \\
        8: & \hspace{1em} $\x \leftarrow \x + a_0\y\Delta_t$ & $\triangleright$ update position \\
        9: & \hspace{1em} $\x \leftarrow \text{clip}(\x, -1, 1)$ & $\triangleright$ boundary conditions \\
        10: & \hspace{1em} $y_i \leftarrow 0$ if $x_i = \pm1$ & $\triangleright$ momentum at the boundaries \\
        12: & \textbf{end for} & \\
        13: & \textbf{return:} $\s = \mathrm{sign}(\x)$ & \\
        \hline
    \end{tabular}
\end{center}

\subsubsection*{Simulated Coherent Ising Machine (SimCIM)}

{We implemented the Simulated Coherent Ising Machine following~\cite{goto2021}. The pseudocode is provided below, where $\mathcal{N}[\mathbf{0}, \mathbf{I}]$ denotes the standard $n$-dimensional Gaussian distribution. We adopt the same parameter settings from~\cite{goto2021} 
for the pump rate $a_0$, the coupling strength $c_0$, and the time step $\Delta_t$}. The noise amplitude $A$ is chosen from the interval $[0.1, 1]$, with values in this range empirically yielding the best performance.

\begin{center}\label{algo:simcim}
    \begin{tabular}{lll}
        \hline
        \multicolumn{3}{l}{\textbf{Algorithm S3:} Simulated Coherent Ising Machine} \\
        \hline
        1: & \multicolumn{2}{l}{\textbf{input:} noise amplitude $A$, initial rate $a_0$, coupling $c_0$, time step $\Delta_t$, steps $T$} \\
        2: & $n \leftarrow \text{dim}(\J)$ & \\
        3: & \textbf{initialize:} $\x \leftarrow 2 \cdot \text{Bernoulli}(0.5, n) - 1$ & $\triangleright$ random $\pm1$ vector \\
        4: & \textbf{for} $t = 0$ to $T$ \textbf{do} & \\
        5: & \hspace{1em} ${a}_t \leftarrow (a_0t/T)$ & $\triangleright$ pump rate schedule \\
        6: & \hspace{1em} $\w \sim \cN[0,\I]$ & $\triangleright$ noise vector \\
        7: & \hspace{1em} $\x \leftarrow \x + \left(-(a_0 - a_t)\x + c_0\J \, \mathrm{sign}(\x) \right)\Delta_t + A\w\sqrt{\Delta_t}$ & \\
        8: & \hspace{1em} $\x \leftarrow \text{clip}(\x, -1, 1)$ & $\triangleright$ boundary conditions \\
        10: & \textbf{end for} & \\
        11: & \textbf{return:} $\s = \mathrm{sign}(\x)$ & \\
        \hline
    \end{tabular}
\end{center}

{
The original Coherent Ising Machine (CIM)~\cite{inagaki2016} is based on a degenerate optical parametric oscillator (DOPO) and is modeled by the following system of stochastic differential equations~\cite{inagaki2016,wang2013coherent}:
\begin{equation}\label{eq:CIM_s1}
dc_i = 
\bigg[ (-1 + p - c_i^2 - s_i^2)c_i + r\sum_j J_{ij} \tilde{c}_j \bigg] dt
+ \frac{1}{A_s} \sqrt{c_i^2 + s_i^2 + 0.5}\, dW_1 , 
\end{equation}

\begin{equation}\label{eq:CIM_s2}
ds_i = 
\left[ (-1 - p - c_i^2 - s_i^2)s_i \right] dt
+ \frac{1}{A_s} \sqrt{c_i^2 + s_i^2 + 0.5}\, dW_2 ,
\end{equation}

\noindent
where 
\begin{itemize}
\item $c_i$ and $s_i$ are the in-phase and quadrature amplitudes of the $i$-th DOPO,
\item $\tilde{c}_i$ is the measured in-phase amplitude (perturbed by measurement noise),
\item $\{J_{ij}\}$ are the the coupling coefficients with coupling strength $r$,
\item $p$ is the normalized pump rate,
\item $A_s$ is the saturation amplitude,
\item $dW_1$ and $dW_2$ are independent Gaussian Wiener processes representing vacuum and pump fluctuations for in-phase and quadrature components, respectively.
\end{itemize}

In Supplementary Figure~\ref{fig:CIM_s1_s2_SimCIM}, we simulate the CIM dynamics for the MAX-CUT problem on the K2000 graph. To numerically solve \eqref{eq:CIM_s1} and \eqref{eq:CIM_s2}, we apply the Euler-Maruyama integrator~\cite{maruyama1955euler}, yielding the updates

\begin{align}\label{eq:simulate_CIM_s1}
c_i(t+1) = c_i(t) + \bigg[(-1 + p(t) - c_i(t)^2 - s_i(t)^2)c_i(t) \,+\, &r \sum_j J_{ij}\widetilde{c}_j(t)\bigg]\Delta_t \nonumber \\
& + \frac{w_i \sqrt{\Delta_t}}{A_s}\sqrt{c_i(t)^2 + s_i(t)^2 + 0.5},
\end{align}
and
\begin{align}\label{eq:simulate_CIM_s2}
s_i(t+1) &= s_i(t) + \Big[\left(-1 - p(t) - c_i(t)^2 - s_i(t)^2\right)s_i(t)\Big]\Delta_t
+ \frac{w_i' \sqrt{\Delta_t}}{A_s}\sqrt{c_i(t)^2 + s_i(t)^2 + 0.5}.
\end{align}

We have used $\widetilde{c}_i = c_i$ and  $\widetilde{c}_i = \mathrm{sign}(c_i)$ (clamped) in the experiments. The pump rate is defined as $p(t) = t/T$, and $w_i, w_i'$ are sampled independently from the standard normal distributions. After runtime $T$, the $i$-th spin state is obtained by taking the $\mathrm{sign}$ of $c_i(T)$. The hyperparameters $r, A_s, \Delta_t$ are tuned for the best results. In Supplementary Figure~\ref{fig:CIM_s1_s2_SimCIM}, we observe that \textbf{Algorithm S3} for SimCIM (adapted from \cite{goto2021}) outperforms the simulated version of CIM in~\cite{inagaki2016} based on \eqref{eq:CIM_s1} and \eqref{eq:CIM_s2}.}

\begin{figure}[htbp]
\centering
\includegraphics[width=1.0\linewidth]{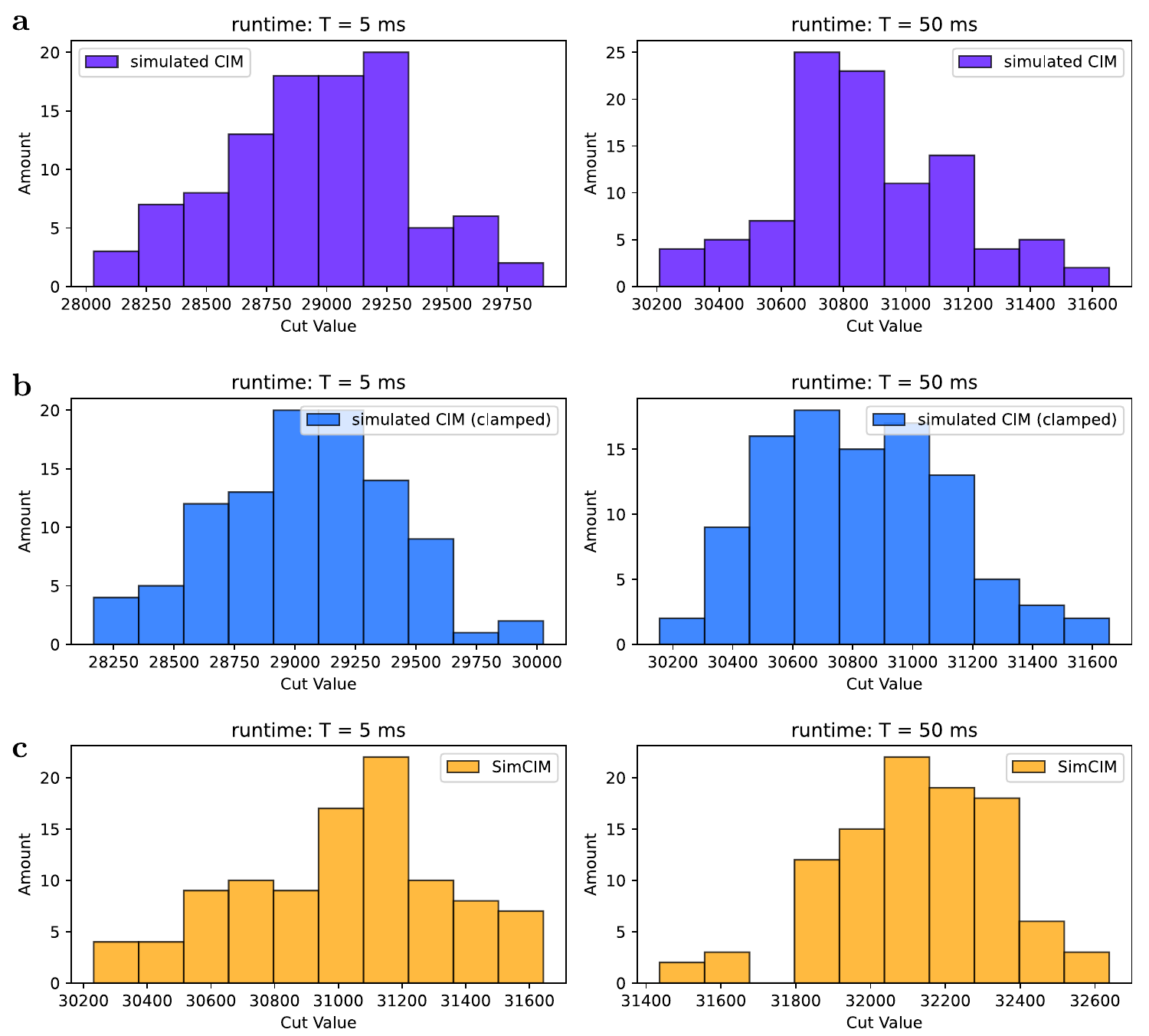}
{
\caption{\textbf{Comparison of simulated CIM~\cite{inagaki2016} with SimCIM.} Histogram plots of cut values obtained for K2000 graph, with $100$ different initializations and two runtimes ($T=5$ ms, $50$ ms). \textbf{(a), (b)} Cut values obtained by two variants of simulated CIM (from equations \eqref{eq:simulate_CIM_s1}, \eqref{eq:simulate_CIM_s2}). \textbf{(c)} Cut values obtained by SimCIM (from Algorithm S2). The results were obtained on a laptop equipped with a 4GB NVIDIA RTX 3050 GPU.}
\label{fig:CIM_s1_s2_SimCIM}}
\end{figure}

\subsubsection*{Spring Ising Algorithm (SIA)}

We implemented the Spring Ising Algorithm (SIA) as described in~\cite{jiang2024}. The corresponding pseudocode is provided in \textbf{Algorithm S4}. Following~\cite{jiang2024}, we set the mass coefficient $m = 1$, the elastic coefficient $k = 0.5$, and the scaling coefficient $\zeta(t)$ to vary linearly from $0.8\zeta_0$ to $10\zeta_0$, where $\zeta_0 = 0.05$ is the base value. The time step $\Delta$ is selected within the interval $(0, 1]$ for the best result. The momentum vector \(\p\) is initialized such that each component $p_i$ is sampled uniformly from the range $(-0.0005, 0.0005)$.

\begin{center}\label{algo:sia}
    \begin{tabular}{lll}
        \hline
        \multicolumn{3}{l}{\textbf{Algorithm S4:} Spring Ising Algorithm} \\
        \hline
        1: &\multicolumn{2}{l}{\textbf{input:} mass coefficient $m$, elastic coefficient $k$, scaling coefficients $\zeta(t)$, $\Delta$, total steps $N$}\\
        2: & $n \leftarrow \text{dim}(\J)$ & \\
        3: & $\q \leftarrow \0$ & $\triangleright$ zero vector of size $n$\\
        4: & $\p \leftarrow \text{Random}(n)$ & $\triangleright$ small random perturbation in $(-0.0005,0.0005)$ \\
        5: & \textbf{for} $t = 1$ to $N$ \textbf{do} & \\
        6: & \hspace{1em} $(\q, \p) \leftarrow \text{Boundary}(\q,\p)$ & $\triangleright$ boundary conditions (as per \eqref{eq:SIA_pq_update}) \\
        7: & \hspace{1em} $\q \leftarrow \q + (\Delta/m)\p$ & \\
        8: & \hspace{1em} $\p \leftarrow \p - \Delta k\q + \zeta(t)\Delta\J \q$ & \\
        9: & \hspace{1em} \textbf{end if} & \\
        10: & \textbf{end for} & \\
        11: & \textbf{return:} $\s = \mathrm{sign}(\x)$ & \\
        \hline
    \end{tabular}
\end{center}

The function \(\mathrm{Boundary}(\q, \p)\) enforces the following componentwise constraints on the vectors \(\q\) and \(\p\):

\begin{equation}\label{eq:SIA_pq_update}
    q_i \leftarrow \begin{cases}
        \sqrt{2}, & \quad q_i > \sqrt{2},\\ 
        q_i, &  \quad -\sqrt{2} \leq q_i \leq \sqrt{2}, \\
         -\sqrt{2}, &  \quad q_i < -\sqrt{2},
    \end{cases}
    \hspace{4em}
    p_i \leftarrow \begin{cases}
        2, &   \quad p_i > 2,\\ 
        p_i, &  \quad  -2 \leq p_i \leq 2, \\ 
        -2, &   \quad p_i < -2.
    \end{cases}
\end{equation}

\subsection*{Supplementary Note 8: Benchmarking Information}

The table below lists the Ising models used for benchmarking. For each model, we specify the number of spins $n$, the connectivity level $(1-p)\%$ where $p$ represents the sparsity, the distribution of the coupling coefficients $J_{ij}$, and the GPU hardware used for computation, including their memory specifications.


\begin{table}[htbp]
    \[
    \begin{array}{|c|c|c|c|c|}
        \hline
        \text{Fig.}& n & \text{connectivity} & J_{ij} & \text{GPU} \\
        \hline \hline
        \ref{fig:supp_more_fem_1} & 10^3 & \text{fully connected} & \text{SK model} & \text{NVIDIA jetson nano}\\
        \ref{fig:supp_more_fem_2} & 10^4 & \text{fully connected} & \text{SK model} & \text{NVIDIA jetson nano}\\
        \ref{fig:supp_more_3} & 10^4 & 1\% & \{-2^9+1, \dots, 2^9-1\} & \text{NVIDIA jetson nano}\\
        \hline
        \ref{fig:supp_more_4} & 10^5 & \text{fully connected} & \{-1, 1\} & \text{$1 \times$ RTX  3090, 24 GB}\\
        \ref{fig:supp_more_5} & 10^5 & \text{fully connected} & \text{SK model} & \text{$1 \times$ RTX 3090, 24 GB}\\
        \ref{fig:supp_more_6} & 10^5 & 1\% & \{-2^9+1, \dots, 2^9-1\} & \text{$1 \times$ RTX 3090, 24 GB}\\
        \hline
        \ref{fig:supp_more_7} & 10^6 & \text{fully connected} & \sin(i\, j + \mathrm{seed}) & \text{$1\times$ V100, 32 GB}\\
        \ref{fig:supp_more_8} & 10^6 & 0.1\% & \{-2^9+1, \dots, 2^9-1\}  & \text{$1 \times$ V100, 32 GB}\\
        \ref{fig:supp_more_9} & 10^6 & 0.01\% & \{-2^9+1, \dots, 2^9-1\}  & \text{$1 \times$ V100, 32 GB}\\
        \hline
        \ref{fig:supp_more_10} & 10^7 & 0.001\% & \{-2^9+1, \dots, 2^9-1\}  & \text{$4 \times$ V100, 30 GB}\\
        \hline
        \ref{fig:supp_more_11} & 10^8 & 0.00001\% & \{-2^9+1, \dots, 2^9-1\}  & \text{$2 \times$ H100, 80 GB}\\
        \hline

        {\ref{fig:error_bar_plot_gset}} & {800} & {\text{variable}} & {\text{G-set graphs}}  & {\text{$1 \times$ RTX 3050, 4 GB}}\\
        \hline
        {\ref{fig:error_bar_plots}} & {10^4-10^5} & {\text{variable}} & {\text{variable}}  & {\text{$1 \times$ RTX 3090, 24 GB}}\\
        \hline
        
    \end{array}
    \]
    \caption{Ising model parameters used for benchmarking.}
    \label{table:benchmarking_graphs}
\end{table}

\subsection*{Supplementary Note 9: Data Generation}

To benchmark our algorithm, we construct large to ultra-large coupling matrices \(\J\), in both dense and sparse forms. A sparse \(\J\) has approximately $(1-p)\%$ nonzero entries. We generate only the lower triangular part of the matrix \(\J\), excluding the diagonal (which has zeros), and then symmetrize it by mirroring the values to the upper triangle. To sample $p$-sparse matrix $\J$ with $J_{ij}$ from $9$-bit signed integers, we sample $z \in \{1, \dots, N_p\}$, with $N_p = \lfloor 102300/p \rfloor$; if $z < 1023$, we set $J_{ij} = J_{ji} = z - 511$ (described in detail in \textbf{Algorithm S5}). To manage large matrices ($n \geq 10^5$), we use the Compressed Sparse Row (CSR) format~\cite{CSR2003book}, which stores only nonzero values, column indices, and row offsets. This enables efficient storage and access. Using CSR, we generate sparse matrices with $n = 10^5$ to $10^8$, and connectivity levels from $10^{-6}$ to $10^{-1}$, yielding billions of nonzeros with a fraction of the memory required by dense formats.

For ultra-large dense coupling matrices, explicit storage is prohibitively expensive due to the quadratic memory requirement. To address this, we implement an efficient, memory-free approach based on procedural generation. Instead of considering the full $n\times n$ matrix, we use a deterministic pseudo-random function to generate matrix elements on demand. Specifically, each coupling coefficient is generated using the rule:
$J_{ij} = \sin(i \, j + \mathrm{seed})$ with $\mathrm{seed} = 100$. This function-based representation ensures that the matrix remains symmetric and reproducible while eliminating the need to store it in memory. As a result, we can work with dense matrices of size exceeding $10^5 \times 10^5$, which would otherwise require terabytes of memory if stored explicitly.

$$
\begin{array}{l}
    \hline
    \text{\textbf{Algorithm S5:} Generation of ultra-large coupling matrix in CSR format} \\
    \hline
    1:\quad\text{\textbf{input:} } n  \geqslant 1, p \in (0, 100] \\
    2:\quad N_p \leftarrow \lfloor 102300/p \rfloor \\
    3:\quad \text{\textbf{initialize arrays:} data} \leftarrow [\;],\; \text{column\_indices} \leftarrow [\;],\; \text{row\_offset} \text{ of length } (n+1) \\
    4:\quad \text{row\_offset}[1] \leftarrow 0 \\
    5:\quad\text{\textbf{for} $i = 1$ \textbf{to} $n$ \textbf{do}} \\
    6:\quad\quad \text{non\_zero\_in\_row} \leftarrow 0 \\
    7:\quad\quad \text{\textbf{for} $j = 1$ \textbf{to} $i-1$ \textbf{do}} \\
    8:\quad\quad\quad z \leftarrow \text{RandomInt}(1, \dots, N_p) \\
    9:\quad\quad\quad \text{\textbf{if} $z < 1023$ \textbf{then}} \\
    10:\quad\quad\quad\quad \text{data.store}(z - 511) \\
    11:\quad\quad\quad\quad \text{column\_indices.store}(j) \\
    12:\quad\quad\quad\quad \text{non\_zero\_in\_row} \leftarrow \text{non\_zero\_in\_row} + 1 \\
    13:\quad\quad\quad \text{\textbf{end if}} \\
    14:\quad\quad \text{\textbf{end for}} \\
    15:\quad\quad \text{row\_offset}[i+1] \leftarrow \text{row\_offset}[i] + \text{non\_zero\_in\_row} \\
    16:\quad\text{\textbf{end for}} \\
    17:\quad \J_{\mathrm{upper}} \leftarrow \text{CSR}(\text{data}, \text{column\_indices}, \text{row\_offset}) \\
    18:\quad \J \leftarrow \J_{\mathrm{upper}} + {\J_{\mathrm{upper}}}^\top \\
    19:\quad \text{\textbf{return }} \J \\
    \hline
\end{array}
$$

\subsection*{Supplementary Note 10: Large Matrix-Vector Multiplication}

For ultra-large matrices, matrix-vector multiplications are expensive and memory-intensive. To address this, we adopt a block-based computation strategy that improves memory efficiency and parallel performance. The matrix \(\J\) is partitioned into smaller $b \times b$ blocks. Each block is generated on demand, used immediately for partial matrix-vector product computation, and then discarded to conserve memory. The computation is distributed across multiple GPUs (typically 2 to 4), with workload allocated to minimize idle time and ensure balanced utilization. Each GPU processes its assigned matrix blocks independently and in parallel, significantly accelerating the overall computation while maintaining scalability for extremely large problems.

$$
\begin{array}{l}
    \hline
    \text{\textbf{Algorithm S6:} Large Matrix-Vector Multiplication} \\
    \hline
    1:\quad\text{\textbf{input:} $n \times n$ matrix $\J$, $n\times 1$ vector $\vv$, block size $b \times b$, number of GPUs $G$} \\
    2:\quad\text{\textbf{output:} $\y = \J v$} \\
    3:\quad\text{\textbf{initialize:} $\y \leftarrow (0)^n$} \\
    4:\quad\text{// distribute the blocks across $G$ GPUs to balance workload} \\
    5:\quad\text{\textbf{parallel for} $g = 1$ \textbf{to} $G$ \textbf{do}} \\
    6:\quad\quad \text{\textbf{for} $(i,j)$-th assigned block on GPU $g$} \\
    7:\quad\quad \quad \text{// dynamically generate block $J_{i:i+b, j:j+b}$} \\
    8:\quad\quad \quad \J_{\text{block}} = \J[i:i+b, j:j+b] \\
    9:\quad\quad \quad \vv_{\text{sub}} \leftarrow \vv[j:j+b] \\
    10:\quad\quad \quad \y[i:i+b] \leftarrow \J_{\text{block}} \cdot v_{\text{sub}} \\
    11:\quad\quad \quad \text{// discard $J_{\text{block}}$ to free memory} \\
    12:\quad\quad \text{\textbf{end}} \\
    13:\quad\text{\textbf{end parallel}} \\
    14:\quad \text{\textbf{return }} \y \\
    \hline
    \label{algo:block_mat_mul}
\end{array}
$$

\bibliography{references}
\bibliographystyle{bib_style_science}

\end{document}